\documentclass[a4paper]{article}
\usepackage{graphicx} 
\usepackage[left=2cm,right=2cm]{geometry}
\usepackage{authblk}
\usepackage{amsthm}
\usepackage{thmtools}
\usepackage{mathtools}
\usepackage{amsmath}
\usepackage{amssymb}
\usepackage{bm}
\usepackage{placeins}

\usepackage{algorithm}
\usepackage[noend]{algpseudocode}
\usepackage{amsthm}
\usepackage{hyperref}
\usepackage{mathtools}
\usepackage{microtype}
\usepackage{physics}
\usepackage{tabularx}
\usepackage{thm-restate}
\usepackage{thmtools}
\usepackage{bm}
\usepackage[capitalise]{cleveref} 
\usepackage[dvipsnames]{xcolor}
\usepackage{tikz}
\usetikzlibrary{positioning}

\usepackage[style=numeric, sorting=none]{biblatex}
\newcommand{\mathsc}[1]{{\normalfont\textsc{#1}}}

\definecolor{DarkRed}{HTML}{8B0000}

 \newtheorem{theorem}{Theorem}
 \newtheorem{definition}[theorem]{Definition}
 \newtheorem{lemma}[theorem]{Lemma}
 \newtheorem{proposition}[theorem]{Proposition}
 \newtheorem{corollary}[theorem]{Corollary}
  
 \newtheorem{problem}[theorem]{Problem}
\newtheorem{maintheorem}{Theorem}

\newcounter{protocol}

\crefname{protocol}{protocol}{protocols}
\Crefname{protocol}{Protocol}{Protocols}

\title{The Hidden Subgroup Problem in Semidirect Products and Quasi-Hamiltonian Groups }
\author[1,2]{Mauro E.S. Morales}
\affil[1]{Joint Center for Quantum Information and Computer Science (QuICS), University of Maryland, USA}
\affil[2]{Diraq, Sydney, New South Wales, Australia}
\date{}

\begin{document}

\maketitle
\begin{abstract}
Several early quantum algorithms, including Simon's algorithm and
Shor's period-finding, can be formulated as instances of
the Abelian hidden subgroup problem  (HSP).
The non-Abelian case is of
particular interest because some instances, such as
the dihedral and symmetric group HSPs, are connected to lattice
problems and graph isomorphism, respectively.
Although no efficient
quantum algorithm is known for the general non-Abelian HSP,
polynomial-time algorithms have been obtained for several special
families, including certain semidirect product groups and
Hamiltonian groups.

In this work, we give polynomial-time quantum algorithms for two
further families containing non-Abelian groups.
First, we consider groups of
the form $G=A\rtimes_{\varphi} \mathbb{Z}_{p^k}$, with  $A$ finite Abelian, $p$ prime, $k\in \mathbb{N}$ and the action of
$\mathbb Z_{p^k}$ is generated by the scalar automorphism
$a\mapsto\mu a$, for some
$\mu\in\mathbb Z_{\operatorname{Exp}(A)}^\times$, where $\mathrm{Exp}(A)$ is the exponent of $A$. Our algorithm is efficient when $A$ has bounded generator rank and $\mathrm{Exp}(A)/p=\mathrm{polylog}(|G|)$.
This includes the case $A=\mathbb{Z}_N$ for $N\in\mathbb{N}$ and $k=1$, studied by Bacon, Childs and van Dam~\cite{Bacon2005optimalmeasurement}, and $A=\mathbb{Z}_{q^r}$ with $q$ prime and $r\in \mathbb{N}$ studied by van Dam and Dey~\cite{VanDam2014semidirect}. 
Second, we give a polynomial-time quantum algorithm for finite
quasi-Hamiltonian groups under a mild assumption on the input structure. Quasi-Hamiltonian groups are finite
nilpotent groups with modular subgroup lattice, or equivalently the
finite groups in which every subgroup is permutable. As far as we know, this is the first quantum algorithm to exploit the modularity of the subgroup lattice for solving the HSP. This extends, under the aforementioned structured input assumption, the quantum algorithm for Dedekind groups given by Hallgren, Russell, and Ta-Shma~\cite{Hallgren2003HSP}.
\end{abstract}
\section{Introduction}
The hidden subgroup problem serves as a unifying framework for several problems for which quantum computers have an advantage when compared to their classical counterparts \cite{Childs2010review}.
Many early quantum algorithms such as Deutsch-Jozsa, Simon's algorithm, and Shor's period finding algorithm \cite{Jozsa2001shor} can be cast as instances of the hidden subgroup problem (HSP). 
In the HSP, a group $G$ is described by an encoding of its generators and access is given to an oracle $f:G\to S$, where $S$ is a set of labels. The function $f$ hides a subgroup $H\leq G$, i.e., $f$ is constant on each left coset of $H$ and evaluates to a different label on each left coset. The task is to find a list of generators of $H$. No polynomial-time classical algorithm is known for the general
Abelian HSP. By contrast, when $G$ is Abelian, there is an efficient quantum algorithm for the HSP \cite{Childs2010review, cheung2001decomposing}. 

No polynomial-time quantum algorithm is known for the general
non-Abelian hidden subgroup problem. The interest in the non-Abelian case lies in the fact that an efficient solution to the dihedral HSP would yield efficient
quantum algorithms for certain lattice problems that underlie
lattice-based cryptography \cite{Regev2004lattice} and the graph isomorphism problem admits an HSP formulation over
symmetric groups.
While no efficient algorithm is known for these two cases, there is a quantum subexponential algorithm for the dihedral HSP \cite{Kuperberg2005dihedral}.
The quantum query complexity of the non-Abelian HSP is known to be polynomial \cite{Ettinger2004query}, and the  dihedral case has been shown to be solvable in a linear number of queries with the caveat that it is not known how to efficiently implement the joint measurement required to solve the problem. 

There is a long list of works where efficient quantum algorithms have been proposed for particular non-Abelian groups.
Due to the semidirect product structure of the dihedral group $D_N=\mathbb{Z}_N \rtimes \mathbb{Z}_2$, several works have studied groups with similar form. We mention some of these works in the following. In \cite{Friedl2014hiddentranslation} the case $\mathbb{Z}_p^k \rtimes_\varphi\mathbb{Z}_2$ with $p$ prime and $\varphi_x(a)=(-1)^x a$ with $x\in \mathbb{Z}_2$ and $\mathbb{Z}_p^k$. Later, in \cite{Moore2007power}  the authors give an efficient algorithm for the $q$-hedral case, i.e., $\mathbb{Z}_{p}\rtimes \mathbb{Z}_q$ with $p$ prime, $q\mid (p-1)$, provided that $q=(p-1)/\mathrm{polylog}(p)$ and $q$ is prime. This result was extended in \cite{Bacon2005optimalmeasurement}, where an efficient algorithm is provided for $\mathbb{Z}_N\rtimes \mathbb{Z}_p$. Other extensions include \cite{Goncalves2009metacyclic} which solves the groups $\mathbb{Z}_p\rtimes\mathbb{Z}_{q^s}$ for $p,q$ odd primes, $q\mid (p-1)$, $p/q=\mathrm{polylog}(p)$ and $s$ a positive integer, and  also \cite{VanDam2014semidirect} which solves the case $\mathbb{Z}_{p^r}\rtimes \mathbb{Z}_{q^s}$ where $p,q$ are distinct primes and such that $p^r/q^{t-j}\in\mathrm{polylog}(p^r)$ for certain group-dependent parameters $t,j$. The previously mentioned works have a restriction on the relative size of the groups in the semidirect product. In \cite{Inui2007semimdirect}, the authors exploit the structure of a family of groups of the form $\mathbb{Z}_{p^r}\rtimes \mathbb{Z}_p$ with $p$ an odd prime to give a polynomial-time quantum algorithm. Notable, this algorithm does not have the condition which restricts the relative size of the groups in the semidirect product. 
In another direction \cite{Hallgren2003HSP}, it was shown that the HSP can be solved efficiently for Dedekind groups (groups that have only normal subgroups). 
For a more complete review of works on the non-Abelian HSP see \cite{dutto2025surveyhiddensubgroupproblem}.
\begin{figure}[!t]
\centering

\begin{tikzpicture}[scale=0.8, transform shape]

  \definecolor{nilpotentColor}{RGB}{220,235,255} 
  \definecolor{nilTwoColor}{RGB}{255,210,170}    
  \definecolor{quasiColor}{RGB}{180,235,190}     
  \definecolor{dedekindColor}{RGB}{255,245,170}  
  \definecolor{AbelianColor}{RGB}{230,200,250}   

  \draw[
    fill=nilpotentColor,
    draw=blue!60!black,
    thick
  ]
    (0,0) ellipse (6.2cm and 4.2cm);

  \node[
    text=blue!60!black,
    font=\large
  ] at (0,4.55)
    {Nilpotent groups};

  %
  \draw[
    fill=nilTwoColor,
    fill opacity=0.38,
    draw=orange!80!black,
    draw opacity=1,
    thick
  ]
    (-1.55,0.15) ellipse (3.35cm and 2.35cm);

  \node[
    text=orange!80!black,
    font=\large
  ] at (-2.15,2.75)
    {$\mathrm{Nil}_{\leq 2}$};

  %
  \draw[
    fill=quasiColor,
    fill opacity=0.38,
    draw=green!50!black,
    draw opacity=1,
    thick
  ]
    (1.55,0.15) ellipse (3.35cm and 2.55cm);

  \node[
    text=green!40!black,
    font=\large
  ] at (2.15,3.00)
    {Quasi-Hamiltonian};

  %
  \draw[
    fill=dedekindColor,
    draw=yellow!55!black,
    thick
  ]
    (0,0.05) ellipse (1.15cm and 0.72cm);

  \node[
    font=\small
  ] at (0,0.3)
    {Dedekind};

  \draw[
    fill=AbelianColor,
    draw=purple!60!black,
    thick
  ]
    (0,-0.20) ellipse (0.67cm and 0.30cm);

  \node[
    font=\scriptsize
  ] at (0,-0.20)
    {Abelian};

\end{tikzpicture}

\caption{Schematic inclusion relations among several classes of groups. The class $\mathrm{Nil}_{\leq 2}$ includes the groups of nilpotency class $\leq 2$. For these groups, the algorithm of~\cite{Ivanyos2012Nil2} solves the HSP
efficiently. The nilpotency class of quasi-Hamiltonian groups is not bounded by a constant and therefore these two classes are incomparable (see \cref{sec:comparison}). Both classes contain the finite Dedekind groups, in which every
subgroup is normal. Note that 
the dihedral group $D_N$ is nilpotent if and only if $N$ is a power of two.}
\label{fig:group-classes}
\end{figure}
In this work, we study the hidden subgroup problem in two non-Abelian settings.
First, we give a polynomial-time quantum algorithm for groups of the form $A \rtimes \mathbb Z_{p^k}$, with $A$ Abelian, $p$ prime, $k$ positive integer, and under the condition that the generator rank of $A$ is constant and the exponent of $A$ is not much larger than $p$.
These groups contain several semidirect product families previously
studied in the HSP literature. In particular it includes the groups $\mathbb{Z}_N\rtimes \mathbb{Z}_p$ of \cite{Bacon2005optimalmeasurement} and the groups $\mathbb{Z}_{p^r}\rtimes \mathbb{Z}_{q^s}$ from \cite{VanDam2014semidirect}.
 Our algorithm is based on a reduction to the Hidden Multiple Shift problem from \cite{ivanyos2018learninglinearfunctionssubset}.
We also give an efficient quantum algorithm based on the pretty good measurement technique for the groups $\mathbb{Z}_N\rtimes \mathbb{Z}_{p^k}$, with $N\geq 1$.

Second, we give a polynomial-time quantum algorithm for finite quasi-Hamiltonian groups, under a structured presentation of the
ambient group.
 A subgroup $H\leq G$ is called \emph{permutable},
or \emph{quasinormal}, if $HK=KH$ for every subgroup $K\leq G$,
and a group is quasi-Hamiltonian if all of its subgroups are
permutable.
Every normal subgroup is permutable, although the
converse need not hold.
Quasi-Hamiltonian groups
strictly extend the class of groups in which every subgroup is
normal, which are known as Dedekind groups. 
The HSP for Dedekind groups was previously studied by
Hallgren, Russell and Ta-Shma~\cite{Hallgren2003HSP}.

Quasi-Hamiltonian groups can also be characterized by their subgroup lattice \cite{schmidt1994subgroup}. The subgroup lattice of a group $G$ is simply the lattice---in the sense of order theory, do not confuse with lattice as in lattice-based cryptography---whose elements are the subgroups of $G$. Quasi-Hamiltonian groups correspond to those that are nilpotent and have a \emph{modular} subgroup lattice. 
Informally, modularity means that the subgroups fit together in a particularly regular way, analogous to the subspaces of a vector space.
To our knowledge, this is the first HSP algorithm which explicitly exploits modularity of the subgroup lattice.

A natural question at this point is why care about quantum algorithm for these non-Abelian groups. Solving the HSP for many of the non-Abelian groups studied in the literature does not have practical applications (as far as we are aware). Nonetheless, there are a few reasons why studying them is important. As already mentioned, there are at least two non-Abelian groups where there are interesting applications, the dihedral and symmetric groups. Understanding what groups can be solved efficiently helps us understand what properties can be exploited by quantum algorithms and may also shed light on their limitations for the HSP. A property which a line of research has focused on exploiting for solving the HSP property is nilpotency \cite{Ivanyos2012Nil2,Imran2024}. In \cite{Ivanyos2012Nil2} an efficient quantum algorithm was given for nilpotency class $2$ groups, while \cite{Imran2024} gave a quantum algorithm for groups with constant nilpotency class which is efficient if the prime factors of the order are bounded by a constant. If there was an efficient solution to the HSP for nilpotent groups, we would be able to solve the HSP for $D_N$ when $N$ is a power of two. A heuristic algorithm for such case was already given in \cite{Fujita2022Heuristic}. We view our result on quasi-Hamiltonian groups as complementary to the results of \cite{Ivanyos2012Nil2,Imran2024}, given that these groups are nilpotent and modular. 

In \cref{sec:comparison} we compare our results with various works in the literature. \cref{app:comparison-semidirect} gives a comparison of our results with the work of \cite{Inui2007semimdirect}. Our results on quasi-Hamiltonian groups apply to a more general family of groups compared to their work. Nonetheless, their results apply in the black-box group model while our results require some extra data as input. In \cref{app:comparison-near-hamiltonian} it is shown that the class of quasi-Hamiltonian groups is incomparable with that of poly-near-Hamiltonian groups in \cite{Gavinsky2004nearHam}. In \cref{app:nilpotent-comparison} it is shown that the set of quasi-Hamiltonian groups is incomparable with the set of groups studied in \cite{Ivanyos2012Nil2,Imran2024} (see \cref{fig:group-classes}).
An interesting question that our work leaves open is whether the HSP can be solved efficiently when the subgroup is promised to be permutable. This would subsume the result of \cite{Hallgren2003HSP} where the normal subgroup case is studied. Note that there are subgroups in the dihedral group that are not permutable, so solving this question would not imply a solution to the dihedral HSP.

\paragraph{Statement on AI usage.} We have used ChatGPT 5.6 Sol to proofread some of the proofs in the paper and in the preparation of \cref{fig:group-classes}. A gap in the proof of \cref{lem:efficient-mu-inverse} was found by the AI and fixed.  The examples in \cref{sec:comparison} were also supplied by the AI. It also gave advice on improving the presentation, but the original idea of this work and the proofs are of human origin. The author assumes full responsibility for the correctness of the manuscript. 

\subsection{Main results}
We now explain in more detail the main results of this work. In every case, the hidden subgroup problem is specified by oracle access to a function that is constant and distinct on the left cosets of a hidden subgroup $H$. Recovering $H$ means outputting a set of generators for $H$.
We assume the group operations, evaluation of the corresponding group actions and the hiding oracle are available as reversible circuits. In the case of group operations we assume the access is as in the black-box group setting explained in \cref{sec:blackbox-group}.

 Our algorithms for
semidirect products depend on a reduction which generalizes the one in \cite{Bacon2005optimalmeasurement}.  Given a subgroup $H\leq A\rtimes_{\varphi}\mathbb
Z_{p^k}$, for $A$ Abelian, we first recover $H_A:=H\cap (A\times\{0\})$
by solving an Abelian HSP over $A$.  Whenever \(H_A\) is normal, we
factor it out.  In the resulting quotient, the residual hidden
subgroup has trivial intersection with the Abelian normal factor, and
its projection onto $\mathbb Z_{p^k}$ is therefore injective.
Consequently, the residual hidden subgroup is either trivial or
cyclic of the form $\langle(d,p^t)\rangle$
for some element \(d\) of an Abelian quotient of \(A\) and some
\(t\in\{0,\ldots,k-1\}\).  For a general action, if \(H_A\) is not
normal, the hidden subgroup is contained in
\(A\rtimes\langle p\rangle\); iterating this observation eventually
reduces the problem to the same cyclic form.  Thus, the remaining
task is to recover $t$ and $d$. 

We first treat the case $A=\mathbb{Z}_N$ using a pretty good measurement construction.  Although this group family is
also covered by \cref{thm:main-scalar}, the proof is independent and extends the
PGM approach of~\cite{Bacon2005optimalmeasurement} from a prime-order
complement to a prime-power complement. To obtain an efficient algorithm, we require the condition $N/p=\mathrm{polylog(N)}$ so that the probability of success of the PGM is not negligible. 
\begin{maintheorem}
\label{thm:main-cyclic}
Let $p$ be a prime, $k\geq1$, $N\in \mathbb{N}$ and let
$\varphi:\mathbb Z_{p^k}\to\operatorname{Aut}(\mathbb Z_N)$. If $\frac{N}{p}=\operatorname{polylog}(N)$,
then the hidden subgroup problem in $\mathbb Z_N\rtimes_{\varphi}\mathbb Z_{p^k}$
can be solved by a polynomial-time quantum algorithm.
\end{maintheorem}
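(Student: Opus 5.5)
We follow the reduction outlined just before \cref{thm:main-cyclic}, and then handle the remaining cyclic case with a pretty good measurement (PGM) in the spirit of \cite{Bacon2005optimalmeasurement}.

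\textbf{Step 1 (reduction).} The action is given by $\varphi(1)(a)=\mu a$ for some $\mu\in\mathbb Z_N^\times$ with $\mu^{p^k}=1$. First we recover $H_A=H\cap(\mathbb Z_N\times\{0\})$ by running the Abelian HSP on $\mathbb Z_N$, restricting the oracle to $\mathbb Z_N\times\{0\}$. Every subgroup of $\mathbb Z_N$ is characteristic, so $H_A=\langle(m,0)\rangle$ is normal in $G$. We may therefore pass to the quotient $\mathbb Z_{N/m}\rtimes\mathbb Z_{p^k}$, on which the oracle is still well defined. In the quotient the hidden subgroup $H'$ meets $\mathbb Z_{N'}$ trivially, so its projection to $\mathbb Z_{p^k}$ is injective. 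Hence $H'$ is either trivial or equal to $\langle(d,p^t)\rangle$ for some $t$. We find $t$ by an Abelian HSP on the image of $H'$ in $G/\mathbb Z_{N'}\cong\mathbb Z_{p^k}$. Since $\mathbb Z_{N'}$ is normal, $f(a,x)$ can be queried as a function of $x$ after summing out $a$ coherently; alternatively we simply try all $k=O(\log N)$ values of $t$. It then remains to find $d\in\mathbb Z_{N'}$.

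\textbf{Step 2 (reduce to an order-$p$ element).} Next we restrict to the subgroup $\mathbb Z_{N'}\rtimes\langle p^t\rangle$, which is isomorphic to $\mathbb Z_{N'}\rtimes\mathbb Z_{p^{k-t}}$. In it the hidden subgroup is generated by $g=(d,1)$, where $1$ now generates the complement and acts by $\nu=\mu^{p^t}$. The coset states are $\frac{1}{\sqrt{p^{k-t}}}\sum_{j}|g^j\rangle$ shifted by a random element. Writing $g^j=(d\,s_j,j)$ with $s_j=1+\nu+\dots+\nu^{j-1}$, we Fourier transform over $\mathbb Z_{N'}$ and measure the frequency $b$. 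Conditioned on $b$, this leaves a state in the $\mathbb Z_{p^{k-t}}$ register of the form $\sum_j \omega_{N'}^{b(x\nu^{-j}\ldots)+b d s_j}|j\rangle$ up to the random shift. The unknown data enter only through the phases $b\,d\,s_j$. Only the orbit structure of $j\mapsto\nu^j$ matters. Because $\nu$ has order dividing $p^{k-t}$, the orbit of $b$ under multiplication by $\nu$ has size $p^{e}$. The useful information is therefore a vector of $p^e$ phases of the form $\omega^{b d \nu^i}$, which is exactly the setting of \cite{Bacon2005optimalmeasurement}, with prime $p$ replaced by $p^e$.

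\textbf{Step 3 (PGM).} Following Bacon--Childs--van Dam, we take $r$ coset states with $r$ constant (or $O(1)$ per orbit size), measure the $b_1,\dots,b_r$, and perform the PGM for the ensemble $\{|\psi_d\rangle^{\otimes r}\}_{d}$. By a change of basis indexed by $w=\sum_i b_i\nu^{j_i}$, this PGM becomes the task of sampling a uniform solution $(j_1,\dots,j_r)$ of the "matrix sum" equation $\sum_i b_i\nu^{j_i}=w$ in $\mathbb Z_{N'}$. The success probability is governed by the number of solutions, roughly $p^{er}/N'$. This is where $N/p=\mathrm{polylog}(N)$ is used: with $r$ large enough, the probability is at least $1/\mathrm{polylog}$. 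Repeating polynomially often and checking candidate $d$ against the oracle then gives $d$. If the order of $\nu$ is small, so that $e=0$ or $p^e\ll p$, the group is close to Abelian. In that case we instead use Abelian HSP on $\mathbb Z_{N'}\times\langle p^{k-e'}\rangle$ and brute force over the polylog-size remainder.

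\textbf{Main obstacle.} The hardest step is Step 3: efficiently implementing the PGM. This requires an efficient quantum sampler for solutions of $\sum_i b_i\nu^{j_i}=w \pmod{N'}$ when the exponents range over a cyclic group of order $p^e$ rather than a prime. We plan to handle it as in \cite{Bacon2005optimalmeasurement}. Since $N'/p^e\le N/p$ is polylog, we enumerate the polylog-many residues. For $r=2$ this is solvable by reducing to a quadratic-type counting problem via CRT over the prime-power factors of $N'$. We expect the multiplicity counting for composite moduli to need care.
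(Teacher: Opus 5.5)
\textbf{There is a genuine gap in Step 3, and Step 3 carries the whole theorem.} You leave the efficient implementation of an $r$-copy PGM as an unresolved ``main obstacle''. Sampling solutions of $\sum_i b_i\nu^{j_i}=w \pmod{N'}$ for composite $N'$ cannot be imported from Bacon--Childs--van Dam: their multi-copy matrix-sum analysis concerns $\mathbb{Z}_p^r\rtimes\mathbb{Z}_p$ over a prime field. ``Enumerate polylog many residues'' plus ``CRT'' is not an algorithm.

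Step 2's identification with that setting is also inexact:
\begin{itemize}
\item The phases are $\omega_{N'}^{bds_j}$ with $s_j=1+\nu+\cdots+\nu^{j-1}$. Rewriting them as $\omega^{bd'\nu^j}$ needs $\nu-1$ to be invertible modulo $N'$, which fails in general.
\item $s_j$ need not have period $\operatorname{ord}(\nu)$. For $N'=p^2$, $\nu=1+p$, $p$ odd, $\nu$ has order $p$ but $s_p\equiv p\not\equiv 0$.
\item The orbit size depends on $b$, and you never restrict to invertible $b$.
\item The case ``$p^e\ll p$'' is vacuous, because a nontrivial $\nu$ of $p$-power order has order at least $p$.
\end{itemize}

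Your Step 1 and the search over $t$ with an oracle check do match the paper. Drop the ``sum out $a$ coherently'' alternative, which gives no function hiding $H'\mathbb{Z}_{N'}/\mathbb{Z}_{N'}$. Scan $s=0,1,\ldots,k-1$ in increasing order (or output the subgroup generated by all verified elements), since a verified element found at $s>t$ generates a proper subgroup of $H'$.

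\textbf{The missing idea is that a single coset state suffices, and for one copy the measurement is easy.} By your own count with $r=1$, the success probability is already about $p/N'\geq p/N$. The paper proceeds as follows.
\begin{enumerate}
\item After the Fourier transform over $\mathbb{Z}_{N'}$, keep only invertible $b$; this happens with probability $\varphi(N')/N'$, with $\varphi$ Euler's totient.
\item Write $j=pu+v$ with $0\leq v<p$ and measure $u$. Since $s_{pu+v}=s_{pu}+\nu^{pu}s_v$, the state becomes, up to a global phase, $\frac{1}{\sqrt p}\sum_{v=0}^{p-1}\omega_{N'}^{bs_vd_u}|v\rangle$ with $d_u=\nu^{pu}d$.
\item The identity $(\nu-1)s_v=\nu^v-1$ together with $\operatorname{ord}(\nu)\geq p$ makes $v\mapsto z_v:=bs_v$ injective on $\{0,\ldots,p-1\}$.
\item Uncompute $v$ from $z_v$ by a discrete logarithm of $1+(\nu-1)b^{-1}z_v=\nu^v$. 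This needs no inversion of $\nu-1$, and it leaves $\frac{1}{\sqrt p}\sum_v\omega_{N'}^{z_vd_u}|z_v\rangle$.
\item An inverse QFT over $\mathbb{Z}_{N'}$ outputs $d_u$, hence $d=\nu^{-pu}d_u$, with probability $p/N'$.
\end{enumerate}
One round therefore succeeds with probability at least $\varphi(N')p/N'^2$. This is $1/\mathrm{polylog}(N)$ because $N'/p\leq N/p$ and $N'/\varphi(N')=O(\log\log N')$. The paper shows that this block measurement meets its single-copy PGM lower bound, and it is what replaces your unimplemented multi-copy PGM.
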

Next, we consider the more general family $G=A\rtimes_{\varphi}\mathbb Z_{p^k}$,
where $A$ is a finite Abelian group.  We assume that $A$ has constant generator rank (minimum number of generators) and $\operatorname{Exp}(A)/p=\operatorname{poly}(\log \abs{G})$ where $\operatorname{Exp}(A)$ is the exponent of $A$. We assume that $\varphi$ corresponds to what we define as a \emph{scalar action}. We can always decompose $A$ into its invariant-factor form as $A=\mathbb{Z}_{N_1}\oplus \cdots \oplus \mathbb{Z}_{N_\rho}$ with $N_1\mid N_2 \mid \cdots \mid N_\rho$. In this form, an element of $A$ can be written as $a=(a_1,\cdots,a_\rho)$. We call $\varphi$ a scalar action if 
$$
\varphi_b(a)=\mu^b a=(\mu^b a_1,\cdots, \mu^b a_\rho),
$$
for $b\in \mathbb{Z}_{p^k}$ and $\mu \in \mathbb{Z}_{\operatorname{Exp}(A)}^\times$.
We use then apply the same reduction used for \cref{thm:main-cyclic}. At this point the reader might think that we could once again apply the PGM technique to obtain $d$ and $t$ as in the previous case. Unfortunately, it is not known whether this is possible for the group $A$ with the properties we described before.
Instead, we use the algorithm from \cite{ivanyos2018learninglinearfunctionssubset} for the Hidden Multiple Shift (HMS) problem.
The fact that $\varphi$ is a scalar actionc has two roles.  First, it makes every subgroup of $A$
invariant under the action, so $H\cap(A\times\{0\})$ is automatically normal.
Second, after taking the quotient, the restrictions
of the hiding function fulfills the multiple shift structure required by the HMS algorithm.

 The HMS problem is
formulated over groups of the form $\mathbb Z_q^n$.  Its algorithm
works in arbitrary dimension $n$, but it is polynomial-time in our
parameter regime when $n$ is constant and the number of available
shifts is sufficiently large compared with $q$.
After recovering $H_A=H\cap(A\times\{0\})=A_1\times\{0\}$,
we pass to the quotient $Q:=A/A_1$.
Then, we can write its invariant-factor decomposition as
$$
Q\cong
\mathbb Z_{q_1}\oplus\cdots\oplus\mathbb Z_{q_m},$$
where $E_Q:=\operatorname{Exp}(Q)$.
Although $Q$ need not be of the form $\mathbb Z_N^m$, it can be
embedded efficiently into $\mathbb Z_{E_Q}^m$.
The image of this embedding may be a proper subgroup, so we attach a
label specifying the coset of the image in which an element lies.
This allows us to extend the hiding function to all of
$\mathbb Z_{E_Q}^m$ while keeping it injective.
The scalar-action assumption is crucial in this reduction.  It
ensures that the same scalar acts on every cyclic component of \(Q\),
and hence that scalar shifts are preserved by the embedding. 

It remains open whether the explicit PGM construction used for the
cyclic normal factor can be extended efficiently to the general
bounded-rank abelian normal factors considered here. To reiterate our point, we remark that \cref{thm:main-cyclic} is a special case of \cref{thm:main-scalar}. We have included it as a separate result because we prove it in a different way. In \cref{thm:main-cyclic} we use the PGM approach and in \cref{thm:main-scalar} we use the HMS algorithm.
\begin{maintheorem}
\label{thm:main-scalar} Let $G=A\rtimes_\varphi \mathbb{Z}_{p^k}$ with $A$ a finite Abelian group of bounded generator rank, and $\varphi:\mathbb{Z}_{p^k}\longrightarrow\operatorname{Aut}(A)$ a scalar action. If $\operatorname{Exp}(A)/p=\operatorname{poly}(\log \abs{G})$, where $\operatorname{Exp}(A)$ is the exponent of $A$, then the hidden subgroup problem in $A\rtimes_{\varphi}\mathbb Z_{p^k}$
can be solved in quantum polynomial time.
\end{maintheorem}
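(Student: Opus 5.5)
We follow the reduction outlined in the introduction and finish with the Hidden Multiple Shift (HMS) algorithm of \cite{ivanyos2018learninglinearfunctionssubset}.

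\textbf{Step 1: recover $H_A$ and pass to the quotient.} Restricting the oracle $f$ to $A\times\{0\}$ gives an Abelian HSP. We solve it with the standard Abelian algorithm \cite{cheung2001decomposing} and obtain $H_A=H\cap(A\times\{0\})=A_1\times\{0\}$. Since $\varphi_b(a)=\mu^b a$ and every subgroup of $A$ is closed under multiplication by integers, $A_1$ is $\varphi$-invariant. Hence $A_1\times\{0\}$ is normal in $G$, and $G/(A_1\times\{0\})\cong Q\rtimes_{\bar\varphi}\mathbb Z_{p^k}$ with $Q=A/A_1$. The action is again scalar, given by $\mu$ reduced modulo $E_Q=\operatorname{Exp}(Q)$. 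A generating set of $A_1$ lets us compute an invariant-factor decomposition of $Q$, for example by Smith normal form on the relation lattice. The image $\bar H$ of $H$ meets $Q\times\{0\}$ trivially. So its projection to $\mathbb Z_{p^k}$ is injective, and $\bar H$ is either trivial or equal to $\langle(d,p^t)\rangle$ for some $d\in Q$ and $t\in\{0,\dots,k-1\}$. We find $t$ by solving the HSP for the image of $H$ in the cyclic group $\mathbb Z_{p^k}$. That image is $H(A\times\{0\})/(A\times\{0\})$, which is hidden by the function $b\mapsto$ (the set of $f$-values on $A\times\{b\}$). This function is not directly computable. Instead we run the Abelian HSP on $A\times\mathbb Z_{p^k}$ only where it is legitimate, or simply try all $k=O(\log|G|)$ values of $t$ and verify each candidate. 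Once $t$ is fixed, we replace $\mathbb Z_{p^k}$ by its subgroup $\langle p^t\rangle\cong\mathbb Z_{p^{k-t}}$. Then $\bar H$ becomes $\langle(d,1)\rangle$ and only $d$ remains unknown.

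\textbf{Step 2: reduce to HMS.} For each $b$, the map $a\mapsto f(a,b)$ on $Q$ is a shifted copy of $f_0(a)=f(a,0)$. Concretely, $(a,b)$ lies in the coset of $(a',0)$ exactly when $a'=a-c_b d$ for an explicit scalar $c_b$. Here $c_b=\sum_{j<b}\mu^{jp^t}$, or its inverse-scalar variant depending on the coset convention. Thus $f_b(x)=f_0(x-c_b d)$, which is a family of shifts by scalar multiples of one hidden vector. This is exactly the HMS structure. Since HMS is formulated over $\mathbb Z_q^n$, we embed $Q=\bigoplus_i\mathbb Z_{q_i}$ into $\mathbb Z_{E_Q}^m$ via $x_i\mapsto (E_Q/q_i)x_i$. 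We extend $f_0$ to all of $\mathbb Z_{E_Q}^m$ by tagging each point with its coset of the image, which keeps the function injective on cosets. Because the action is scalar, the embedding commutes with multiplication by $c_b$, so the shifts stay scalar multiples of the embedded $d$. Here $m$ is bounded by the generator rank, hence constant.

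\textbf{Step 3: apply HMS and check the parameters.} HMS runs in time polynomial in $\log E_Q$ and $(E_Q/\#\text{shifts})^{O(m)}$, and it needs enough distinct multipliers $c_b$. This step is the main obstacle. We must show that the multipliers $c_b$ for $b\in\mathbb Z_{p^{k-t}}$ are sufficiently many distinct units, or at least distinct enough, modulo $E_Q$. We must also show that their number is comparable to $p$, so that $E_Q/p\le\operatorname{Exp}(A)/p=\operatorname{polylog}|G|$ makes the HMS running time polynomial. The first approach we would try is the following. If $\mu^{p^t}\neq1$, then $c_b=(\mu^{bp^t}-1)/(\mu^{p^t}-1)$ ranges over at least $p$ values, provided $\mu^{p^t}-1$ is invertible modulo the relevant prime-power parts of $E_Q$. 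When it is not invertible, we split $Q$ by the CRT and handle separately the component where the action is trivial. On that component $\bar H$ behaves like a subgroup of an Abelian group, and $d$ is found by an Abelian HSP on $Q\times\mathbb Z_{p^{k-t}}$. This uses the inverse-computation lemma \cref{lem:efficient-mu-inverse}. Finally, we output generators consisting of the lift of $(d,p^t)$ together with generators of $A_1$, after checking the candidate against the oracle.
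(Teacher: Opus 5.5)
Your overall architecture matches the paper's: an Abelian HSP for $A_1$, normality from scalarity, reduction to $\overline H=\langle(d,p^t)\rangle$, the embedding $Q\hookrightarrow\mathbb Z_{E_Q}^m$ with coset tags, HMS, and a search over $t$ with verification. However, the step you flag as the main obstacle is left open, and the route you sketch for it fails. Write $\nu_t=\mu^{p^t}$.

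Your dichotomy is ``either $\nu_t-1$ is invertible on a primary component of $Q$, or the action there is trivial.'' This is false on the $p$-primary component. A unit of $p$-power order modulo $p^e$ is always $\equiv 1\pmod p$, so $\nu_t-1$ is never invertible there, even when the action is nontrivial. An example is $A=\mathbb Z_{p^2}$ with $\mu=1+p$, which satisfies the hypotheses for fixed $p$ and growing $k$.

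The CRT split is also not justified on its own terms:
\begin{itemize}
\item The components of $d$ are coupled through the shared cyclic coordinate.
\item Restricting the oracle to one component only reveals $\overline H$ intersected with that subgroup, which in general does not determine that component of $d$.
\item You have no classical oracle for the quotient by the other components.
\end{itemize}
Your fallback, an ``Abelian HSP on $Q\times\mathbb Z_{p^{k-t}}$'', is valid only when $\nu_t=1$ on all of $Q$.

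The missing idea is that you need neither units nor all $b\in\mathbb Z_{p^{k-t}}$. Take $R_t=\{M_t^{(b)}:0\le b<p\}$ over the whole of $Q$. Suppose $M_t^{(a)}=M_t^{(b)}$ with $0\le b<a<p$. Then $\nu_t^bM_t^{(a-b)}=0$, so $M_t^{(a-b)}=0$. Now \emph{multiply} by $\nu_t-1$ instead of dividing: this gives $\nu_t^{a-b}=1$, which is impossible because the order of $\nu_t\neq1$ is a nontrivial power of $p$. Thus $|R_t|=p$ with no invertibility hypothesis and no splitting, and HMS runs in time $\operatorname{poly}(m,\log E_Q)(E_Q/p)^{m+O(1)}$.

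Three further points are needed:
\begin{itemize}
\item HMS only returns the shift modulo $E_Q/\delta(E_Q,R_t)$. Here $\delta=1$ because $0,1\in R_t$, so the shift is recovered exactly.
\item The oracle indexed by $h\in R_t$ must be implemented coherently. This requires computing $b$ from $h$ via a discrete logarithm of $1+(\nu_t-1)h=\nu_t^b$. \cref{lem:efficient-mu-inverse} concerns the quasi-Hamiltonian map $j\mapsto\sum_{u<j}(1+p^s)^u$ and does not apply here.
\item The search over $t$ must go in increasing order and halt at the first verified candidate. At a stage $s>t$, a verified candidate generates only the proper subgroup $\langle(d,p^t)^{p^{s-t}}\rangle$.
\end{itemize}
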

For the next result, we show that there is an efficient quantum algorithm for the  quasi-Hamiltonian HSP.
A quasi-Hamiltonian group $G$ can be decomposed into its Sylow factors
\begin{align}\label{eq:G-decomp-main}
    G=\prod_{p\mid |G|\text{, $p$ prime}}G_p.
\end{align}
A key result of quasi-Hamiltonian groups is that their Sylow $p$-subgroups are modular and in fact have a rigid characterization \cref{prop:p-group-modular}. Each Sylow $p$-subgroup $P$ must be either (i) of the form $P=Q_8\times E$ with $Q_8$ the quaternion group and $E$ is an elementary Abelian $2$-group or, (ii) $P=A\langle b \rangle$ with $A\unlhd P$ and $P/A$ a cyclic factor group. The element $b\in P$ is such that there is a positive integer $s$ such that $b$ acts on  $A$ as $b^{-1}ab=a^{1+p^s}$ and $s\geq 2$ for $p=2$. Case (i) was solved in \cite{Hallgren2003HSP} as it corresponds to the case that the group is Hamiltonian (non-Abelian Dedekind group). Case (ii) includes Abelian $p$-groups but also non-Abelian groups which have no efficient solution to the HSP in previous literature. Therefore, solving the quasi-Hamiltonian case reduces to solving the non-Abelian case (ii). 

In the proof of \cref{thm:main-quasihamiltonian}, we assume that the
Sylow decomposition in \cref{eq:G-decomp-main} and the structural
description of each factor are supplied as part of the input.
In
particular, for every non-Hamiltonian factor, we are given the data
$$
P=A\langle b\rangle
$$
together with coordinates for $A$ and the action of $b$.  We call this a structured
quasi-Hamiltonian presentation.
This is stronger than arbitrary
black-box access to \(G\): we do not address the separate
constructive-recognition problem of recovering this presentation
from an arbitrary generating set.

The crucial property of Case (ii) is the special way in which $b$
acts on $A$. Let $r:=1+p^s$,
conjugation by $b$ acts as the power automorphism $a\mapsto a^r$.
Since $r$ is relatively prime to the exponent of $A$, this is an
automorphism.  More importantly, every subgroup $K\leq A$ is fixed by the action of $b$ since $K^r=K$.

We exploit this property by constructing an associated Abelian
$p$-group \(B\). Informally, $B$ is essentially group $P$, but the element corresponding to $b$ is made
to commute with $A$. We can relate the elements of $B$ and $P$ through a bijection $\sigma:B\to P $. The map $\sigma$ is not a group isomorphism.  It is
instead a \emph{crossed isomorphism}: its failure to preserve
multiplication is controlled by a family of twisting automorphisms.
Concretely, it satisfies an identity of the form
$$
\sigma(x)\sigma(y)
=
\sigma\bigl(f(y)(x)y\bigr),
$$
where $f:B\to \mathrm{Aut}(B)$, with $\mathrm{Aut}(G)$ the automorphisms of $G$. The maps $f(y)$ are powers of the automorphism
$x\mapsto x^r$.

The crossed isomorphism property by itself does not imply that
subgroups and cosets are preserved.  The additional point in our
construction is that every twisting map $f(y)$ is a power
automorphism of the Abelian $p$-group $B$, and hence fixes every
subgroup of $B$. 
It can then be shown by Baer's theorem \cite{Baer1944CrossedIsomorphisms} (see also \cref{thm:schmidt:baers-258}) that $\sigma$ induces an isomorphism between the subgroup lattices and preserves right
cosets.  In particular,
$\sigma$ maps subgroups of $B$ bijectively to subgroups of $P$
and preserves right cosets:
$$
\sigma(Kx)=\sigma(K)\sigma(x)
\qquad
\text{for every }K\leq B\text{ and }x\in B.
$$
This property allows us to transport the HSP from the non-Abelian group
\(P\) to the Abelian group \(B\). We can therefore apply the standard Abelian HSP algorithm to recover the generators of the hidden subgroup in $B$. 
\begin{maintheorem}[Structured quasi-Hamiltonian groups]
\label{thm:main-quasihamiltonian}
Let $G$ be a finite quasi-Hamiltonian group given by a structured
quasi-Hamiltonian presentation as above. Then, given an oracle
hiding an arbitrary subgroup $H\leq G$, there is an efficient quantum
algorithm that outputs generators of $H$.
\end{maintheorem}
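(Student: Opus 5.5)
The plan is to reduce, in three stages, to the Abelian HSP: first to a single Sylow factor, then inside each factor to the Abelian group $B$ via the crossed isomorphism $\sigma$, and finally to the standard Abelian algorithm.

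\textbf{Stage 1: reduce to the Sylow factors.} Since $G=\prod_p G_p$ is nilpotent, every subgroup splits along the Sylow decomposition: $H=\prod_p (H\cap G_p)$. This holds because the elements of coprime order commute and lie in distinct factors.

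So it suffices, for each prime $p$ dividing $|G|$, to recover $H_p:=H\cap G_p$. Restricting the oracle $f$ to $G_p$ gives an oracle that hides exactly $H_p$ in $G_p$. The number of Sylow factors is $O(\log|G|)$, and the factorization is given as part of the structured input. Hence a polynomial-time algorithm for each factor yields one for $G$, by outputting the union of the generating sets.

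For a Hamiltonian factor $P=Q_8\times E$, Case (i) of \cref{prop:p-group-modular}, we invoke the Dedekind algorithm of \cite{Hallgren2003HSP}. It remains to handle Case (ii): $P=A\langle b\rangle$ with $b^{-1}ab=a^{r}$, where $r=1+p^s$.

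\textbf{Stage 2: transport the problem from $P$ to $B$.} Let $m$ be the order of $b$ modulo $A$, so that $b^m\in A$. I would define an Abelian $p$-group $B=A\langle \beta\rangle$ with $\beta$ central and $\beta^m=b^m$. Formally, $B$ is the quotient of $A\times\mathbb Z$ identifying $(b^m,0)\sim(1,m)$. Then set
\[
\sigma(a\beta^j)=a b^j, \qquad a\in A,\; 0\le j<m.
\]
This map is a bijection, and given the structured presentation it is efficiently computable in both directions.

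Using the commutation rule $b^j a=a^{r^{-j}}b^j$ (or $a^{r^{j}}$, depending on the convention), I would compute
\[
\sigma(x)\sigma(y)=\sigma\bigl(f(y)(x)\,y\bigr),
\]
where each $f(y)$ is a power map $x\mapsto x^{r^{e(y)}}$ on $B$. The exponent $r^{e(y)}$ is coprime to $p$, so each $f(y)$ is an automorphism of $B$ fixing every subgroup.

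Two details need care here. First, $b^m$ is fixed by conjugation, so the power map must also be compatible with $\beta$; this needs $\beta^{r^{e}}$ to equal $\beta$ up to the correct power. Second, reductions of $j$ modulo $m$ must be absorbed consistently, which uses $(b^m)^r=b^m$ as forced by centrality in $\langle b\rangle$.

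Once the identity is established, Baer's theorem (\cref{thm:schmidt:baers-258}) gives that $\sigma$ induces a lattice isomorphism. Moreover, $\sigma(Kx)=\sigma(K)\sigma(x)$ for every $K\le B$ and $x\in B$.

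\textbf{Stage 3: solve the Abelian HSP and map back.} Let $K:=\sigma^{-1}(H_p)$, which is a subgroup of $B$. The function $g:=f\circ\sigma$ is constant on each right coset $Kx$, since $\sigma(Kx)=H_p\sigma(x)$, and it is distinct on distinct cosets because $\sigma$ is a bijection.

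In an Abelian group, left and right cosets coincide, so $g$ hides $K$ in $B$. The function $g$ is efficiently computable, and $B$ has an explicit decomposition into cyclic factors derived from the coordinates of $A$ together with $\beta$. We then run the standard Abelian HSP algorithm \cite{cheung2001decomposing} to obtain generators $k_1,\dots,k_\ell$ of $K$.

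We output $\sigma(k_1),\dots,\sigma(k_\ell)$. One step needs justification here: $\sigma$ is not a homomorphism, so images of generators do not automatically generate $\sigma(K)$. For this I would use the lattice isomorphism. Since $\sigma(\langle k_i\rangle)=\langle \sigma(k_i)\rangle$ for cyclic subgroups, which follows from the coset and power-map structure, and since $\sigma$ preserves joins, we get $\sigma(K)=\bigvee_i\langle\sigma(k_i)\rangle=\langle\sigma(k_1),\dots,\sigma(k_\ell)\rangle$.

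A further point is that the oracle hides left cosets of $H$. I would pass to right cosets via $x\mapsto f(x^{-1})$, which hides $H$ on right cosets.

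\textbf{Main obstacle.} I expect the hardest step to be Stage 2: the precise construction of $B$ and the verification that every $f(y)$ is a power automorphism of all of $B$, including the $\beta$-component when $b^m\ne 1$. If this fails in general, my fallback is to choose a complement-like element: the case $p=2$ requires $s\ge2$ exactly so that the relevant power maps behave. I would then verify the crossed-isomorphism identity directly on the normal form $a b^j$.
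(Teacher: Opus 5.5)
Your Stages 1 and 3 agree with the paper: the Sylow splitting, the Dedekind algorithm for $Q_8\times E$, switching between left and right cosets by inversion, and recovering generators through the projectivity. The gap is in Stage 2, at the point you flagged, and it is not a minor detail.

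\textbf{Why your $\sigma$ fails.} With $\beta$ central, $\beta^m=b^m$ and $\sigma(a\beta^j)=ab^j$, one gets $\sigma(a\beta^i)\sigma(a'\beta^j)=\sigma\bigl(a\,a'^{\,r^{-i}}\beta^{i+j}\bigr)$. So the twist is $\beta\mapsto\beta$, $a\mapsto a^{r^{-i}}$. With the convention $\sigma(a\beta^j)=b^ja$ it is $a\mapsto a^{r^{j}}$, again fixing $\beta$. This is an automorphism of $B$ but not a power automorphism. A power automorphism of a finite Abelian group is universal, $z\mapsto z^{e}$. For $i=1$ this would require $e\equiv1\pmod{o(\beta)}$ and $e\equiv r^{-1}\pmod{\operatorname{Exp}(A)}$. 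These are incompatible once both moduli are at least $p^{s+1}$, because $r^{-1}\not\equiv1\pmod{p^{s+1}}$. That always happens in the normalization of \cref{cor:schmidt-nonham}, where $o(\beta)=p^{s+m}$ and $\operatorname{Exp}(A)>p^s$. So condition (2a) of \cref{thm:schmidt:baers-258} fails, and $\sigma$ does not even map subgroups to subgroups.

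\textbf{Counterexample.} For odd $p$, take $A=\langle x\rangle\cong\mathbb Z_{p^2}$, $b^{-1}xb=x^{1+p}$ and $b^p=x^p$. This is the modular group of order $p^3$, in exactly the form of \cref{cor:schmidt-nonham}. Then $K=\langle x\beta^{-1}\rangle$ has order $p$ and $\sigma(K)=\{x^{-i}b^i:0\le i<p\}$. But $(x^{-1}b)^2=x^{p-2}b^2\notin\sigma(K)$. With the other convention, $\sigma(K)=\{b^ix^{-i}\}$ and $(bx^{-1})^2=b^2x^{-2-p}\notin\sigma(K)$. Choosing $b$ with $b^m=1$ does not rescue this. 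In $\mathbb Z_{p^3}\rtimes_{1+p}\mathbb Z_{p^2}$ one gets $\sigma(\langle x\beta\rangle)=\{x^ub^v:u\equiv v\pmod{p^2}\}$. This set contains $xb$ but not $(xb)^2=x^{1+(1+p)^{-1}}b^2$, since $1+(1+p)^{-1}\equiv 2-p\pmod{p^2}$.

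\textbf{The missing idea.} You need to reparametrize the exponent of the new generator so that the twist becomes the universal power map on all of $B$, including the new generator. The paper sets $\sigma(ac^{\mu(j)})=b^ja$ with $\mu(j)=1+r+\cdots+r^{j-1}$, and $f(ac^{\mu(j)})=(z\mapsto z^{r^j})$.

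Since $\mu(j+1)=1+r\mu(j)$, the twisted product $x\ast y=f(y)(x)\,y$ satisfies $c^{\ast j}=c^{\mu(j)}$. Hence $(B,\ast)\cong P$ via $c\mapsto b$. Baer's conditions hold because $r\equiv1\pmod p$, and $r\equiv1\pmod 4$ when $p=2$.

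Consistency at $j=q$ then dictates the relation $c^q=a_0$ with $a_0^{\tau}=b^q$, where $\tau=\mu(q)/q$ is prime to $p$. This replaces your relation $c^q=b^q$.

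Evaluating this $\sigma$ efficiently also needs steps absent from your plan. One must invert $\mu$ modulo $q$ by digit-by-digit lifting (\cref{lem:efficient-mu-inverse}). One must also compute $\mu(j)$ modulo $q\operatorname{Exp}(A)$ in order to rewrite $ac^i=aa_0^{-\ell}c^{\mu(j)}$.
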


\section{Preliminaries}
\subsection{Group theory preliminaries}\label{sec:group-prelim}
In this section we give a summary of some of the concepts in group theory that are used in this work. If $X$ is a subset of a group $G$, then $\langle X\rangle$ is the generated subgroup of $G$. The \emph{exponent} of a finite group $G$, denoted $\mathrm{Exp}(G)$ is the least common multiple of the order of the elements of $G$. An \emph{elementary Abelian group} is a group isomorphic to $\mathbb{Z}_{p}^n$ for some prime $p$ and natural number $n$.

Let $G, K$ be two groups. The \emph{semidirect product} $G\rtimes_{\varphi} K$ is a group defined in terms of a homomorphism $\varphi:K\to \mathrm{Aut}(G)$. For simplicity, we write the automorphism $\varphi(k)\in \mathrm{Aut}(G)$ as $\varphi_k$. The underlying set of the semidirect product is the Cartesian product $G\times K$ and the group operation is given by $(g,k)\cdot(g',k')=(g\cdot \varphi_k(g'),k\cdot k')$. This defines a group with identity $(e_G,e_K)$ and inverse
$$(g,k)^{-1}=(\varphi_{k^{-1}}(g^{-1}),k^{-1})=(\varphi_k^{-1}(g^{-1}),k^{-1}).$$

A relevant group for us is $A\rtimes \mathbb{Z}_{p^k}$ with $A$ an Abelian group. Note that since $\mathbb{Z}_{p^k}$ is cyclic, $\varphi_b = \varphi_1 \circ \cdots \varphi_1 = \varphi^{b}$ where $b\in \mathbb{Z}_{p^k}$ and we abuse notation by defining $\varphi = \varphi_{1}$. 

Let $p$ be a prime. A finite \emph{$p$-group} $G$ is a group for which every element has order a power of $p$. Equivalently, $\abs{G}=p^k$ for some $k\geq 0$. If $H\leq G$ and $H$ is $p$-group, we may call $H$ a $p$-subgroup. A \emph{Sylow $p$-subgroup} of $G$ is a maximal $p$-subgroup of $G$. 

\subsubsection{Solvable and nilpotent groups}
Both solvable and nilpotent groups are defined in terms of nested series of normal subgroups. This series is called the subnormal series.
\begin{definition}
    A subnormal series of a group $G$ is a sequence of subgroups $\{G_i\}_{i=0}^n$ where each is a normal subgroup of the next and $G_0=\{e\}, G_n=G$, i.e., 
    \begin{align}
        \{e\}=G_0\unlhd G_1 \unlhd \cdots \unlhd G_n = G.
    \end{align}
    The quotients $G_{i+1}/G_i$ are called the factors.
\end{definition}
\begin{definition}
    A composition series is a subnormal series such that each factor $G_{i+1}/G_i$ is simple (it has no normal subgroups except itself and the trivial group).
\end{definition}
Intuitively, this is saying that one cannot break up the subnormal series anymore because the factors are simple.
\begin{definition}
    A finite group $G$ is solvable if it has a subnormal series whose factors are all Abelian.
\end{definition}
Note that this means that the composition series of a solvable finite group $G$ has factors that are cyclic of prime order (the only finite Abelian simple groups are the cyclic groups of prime order.)
\begin{definition}
    Given a group $G$, define the commutator $[G,G]$ as the subgroup generated by all commutators $[g,h]=g^{-1}h^{-1}gh$ for $g,h\in G$.
\end{definition}
If $G$ is Abelian, then $[G,G]=\{e\}$. In a sense, $[G,G]$ is measuring how far the group is from being Abelian. We have that $G/[G,G]$ is Abelian, because the non-commuting part of $gh=hg[g,h]$ has been factored out.
\begin{definition}
    The derived series of $G$ is a sequence of groups $\{G^{(i)}\}_{i=0}$ such that $G^{(i+1)}=[G^{(i)},G^{(i)}]$ and $G^{(0)}=G$.
\end{definition}
\begin{proposition}
A group $G$ is solvable if and only if there exists an integer
$n\geq0$ such that $G^{(n)}=\{e\}$.
\end{proposition}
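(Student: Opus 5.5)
We prove the two implications separately, using the definition of solvability via a subnormal series with Abelian factors.

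For the direction ``$G^{(n)}=\{e\}$ implies solvable'', we show that the derived series, read backwards, is already a subnormal series with Abelian factors. Set $G_i:=G^{(n-i)}$, so that $G_0=G^{(n)}=\{e\}$ and $G_n=G^{(0)}=G$. We need two facts about a group $K$: that $[K,K]\unlhd K$, and that $K/[K,K]$ is Abelian. Normality holds because $g^{-1}[x,y]g=[g^{-1}xg,g^{-1}yg]$, so conjugation permutes the commutator generators. Commutativity of the quotient holds because $xy=yx[x,y]$, which gives $xy[K,K]=yx[K,K]$, as already remarked after the definition of the commutator subgroup. Applying both facts with $K=G^{(i)}$ shows that $G^{(i+1)}\unlhd G^{(i)}$ and that $G^{(i)}/G^{(i+1)}$ is Abelian, which is exactly what we need.

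For the converse, suppose $\{e\}=G_0\unlhd G_1\unlhd\cdots\unlhd G_n=G$ has Abelian factors. By downward induction on $i$ we prove
$$
G^{(i)}\le G_{n-i}\qquad\text{for } i=0,1,\ldots,n.
$$
The base case $i=0$ is trivial. The key lemma is the following: if $N\unlhd K$ and $K/N$ is Abelian, then $[K,K]\le N$. To see this, note that the image of $[x,y]$ in $K/N$ is the commutator of the images, which is trivial since $K/N$ is Abelian; hence every generator $[x,y]$ lies in $N$.

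For the inductive step, assume $G^{(i)}\le G_{n-i}$. Commutator subgroups are monotone, meaning $H\le K$ implies $[H,H]\le[K,K]$, because every generator of $[H,H]$ is also a generator of $[K,K]$. Therefore
$$
G^{(i+1)}=[G^{(i)},G^{(i)}]\le[G_{n-i},G_{n-i}]\le G_{n-i-1},
$$
where the last inclusion is the key lemma applied to $G_{n-i-1}\unlhd G_{n-i}$. Taking $i=n$ gives $G^{(n)}\le G_0=\{e\}$.

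No step is a genuine obstacle. The only point requiring care is the key lemma, namely the correspondence between Abelian quotients and containment of the commutator subgroup, together with its monotonicity. Finiteness of $G$ is not used.
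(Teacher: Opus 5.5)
Your proof is correct and complete. It is the standard textbook argument: the reversed derived series is itself a subnormal series with Abelian factors, and conversely any subnormal series with Abelian factors dominates the derived series termwise, because $[K,K]\le N$ whenever $N\unlhd K$ and $K/N$ is Abelian.

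The paper gives no proof of its own; it quotes this proposition only as background, so there is no argument to compare against. One cosmetic slip: your induction runs upward in $i$ and is downward only in the series index $n-i$.
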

\begin{definition}
    Let $\gamma_1(G)=G$ and $\gamma_{i+1}(G)=[G,\gamma_i(G)]$. A group $G$ is nilpotent if $\gamma_n(G)=\{e\}$ for some $n$.
\end{definition}
The least $c$ such that $\gamma_{c+1}(G)=\{e\}$ is called the nilpotency class of $G$.

\subsubsection{Permutability, quasi-Hamiltonian groups and lattices}
Below, we explain how permutability, and lattice theory are related. We begin by defining permutability, then explaining the basics of lattice theory required to understand the proof in this work, and then present some results that relate both concepts.

The definitions and results presented in this section can be found in Chapter 2 of \cite{schmidt1994subgroup} (Sections 2.1 to 2.5).  Our starting point is the notion of a permutable subgroup.
\begin{definition}
    Let $G$ be a group. A subgroup $H\leq G$ is permutable if for all $K\leq G$, $HK=KH$.
\end{definition}
Permutability generalizes the notion of a normal subgroup. When a group has only permutable subgroups, the group is called \emph{quasi-Hamiltonian}. These groups can be characterized through their subgroup lattices, which are defined in the paragraphs below.

\paragraph{Basics of Lattices}A partially ordered set (poset) is a tuple $(S,\leq )$ consisting of a set $S$ and a partial order $\leq$. 
A partial order $\leq$ is a relation on a set $S$ such that for all $a,b,c\in S$ one has (1) $a\leq a$, (2) $a\leq b$ and $b\leq a$ implies $a=b$ and (3) $a\leq b$ and $b\leq c$ implies $a\leq c$.
A lattice is a tuple $(L,\wedge,\vee)$ consisting of a set $L$ and two binary commutative and associative operations $\wedge$ and $\vee$ (called meet and join) satisfying the following identities: for all $a,b\in L$, $a\vee (a\wedge b)=a$ and $a\wedge (a \vee b)=a$. In a lattice one can define the relation $\leq$ as follows: $a\leq b $ if and only if $a=a\wedge b$. It is easy to check that $(L,\leq)$ is a poset. For simplicity, sometimes we abuse notation and call $L$ the lattice.

An example of a lattice is the subset lattice. Given some set $X$, the set of all subsets of $X$ is a poset with respect to set inclusion. The operator $\vee$ corresponds to set union and $\wedge$ to set intersection.
In this work we focus on subgroup lattices. Let $G$ be a group, define the subgroup lattice $L(G)$ as the set of all subgroups, the operation $X \wedge  Y$ is set intersection $X\cap Y$ and $X\vee Y$ corresponds to the join of subgroups $\langle X,Y \rangle$, i.e., the generated subgroup by the union of $X$ and $Y$. The partial order is given by subgroup inclusion.

The map $\sigma:L\to \tilde{L}$ is a lattice homomorphism if $\sigma(x\wedge y)=\sigma(x)\wedge \sigma(y)$ and $\sigma(x\vee y)=\sigma(x)\vee \sigma(y)$. Following the literature, a lattice isomorphism from $L(G)$ to $L(\widetilde{G})$ for groups $G,\tilde{G}$ is called a \emph{projectivity}. 
In particular, suppose that $\sigma:G\longrightarrow\widetilde G$
is a bijection that maps subgroups of $G$ bijectively to subgroups
of $\widetilde G$. Then the map $K\longmapsto \sigma(K)$
induces a projectivity from $L(G)$ to $L(\widetilde G)$. When no
confusion can arise, we use the same symbol $\sigma$ for the
element map and the induced map on subgroup lattices.
Crossed isomorphisms, which provide the projectivity used in our
quantum algorithm for quasi-Hamiltonian groups, are introduced in \cref{sec:poly-quasiHam}
immediately before their application.

An important property of a lattice is that of modularity. A lattice $L$ is modular if for all $x,y,z\in L$, the following holds: If $x\leq z$, then $x\vee (y\wedge z)=(x\vee y)\wedge z$.
An element $m$ of a lattice $L$ is \emph{modular} in $L$ if (1) $x\vee (m \wedge z)= (x\vee m) \wedge z$ for all $x,z\in L$ and $x\leq z$, (2) $m\vee (y\wedge z)=(m\vee y)\wedge z$ for all $y,z\in L$ with $m\leq z$.
A finite group $G$ is called modular if its subgroup
lattice $L(G)$ is modular. When $G$ is a $p$-group, we also say
that $G$ is a \emph{modular $p$-group}.
\paragraph{Permutability and lattices} It can be shown that permutability is connected to the concept of a modular subgroup lattice. 
\begin{theorem}[Theorem 2.1.3 in \cite{schmidt1994subgroup}]\label{thm:Schmidt213}
    Let $G$ be a group. If $K\leq G$ such that $KH=HK$ for all $H\leq G $, then $K$ is modular in $L(G)$.
\end{theorem}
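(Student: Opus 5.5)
We verify the two conditions in the definition of a modular element of $L(G)$ directly, using the hypothesis that $K$ permutes with every subgroup. The basic fact we rely on is that if $KH=HK$ then $KH$ is a subgroup, so $KH=\langle K,H\rangle=K\vee H$. Thus every join with $K$ is a product set, and we can argue element-wise, in the style of Dedekind's modular law for products of subgroups.

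For condition (1), fix $X\leq Z$ in $L(G)$. We must show $X\vee(K\cap Z)=(X\vee K)\cap Z$. Since $K\cap Z\leq K$ and $KX=XK$, the set $X(K\cap Z)$ is contained in $XK$. We first check that $X(K\cap Z)$ is a subgroup, and then prove the two inclusions.
\begin{itemize}
\item[$\subseteq$:] Both $X$ and $K\cap Z$ lie in $XK\cap Z$, and $XK\cap Z$ is a subgroup. Hence the join $X\vee(K\cap Z)$ lies in it.
\item[$\supseteq$:] Take $g=xk\in XK\cap Z$ with $x\in X$ and $k\in K$. Then $k=x^{-1}g\in Z$, because $X\leq Z$. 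So $k\in K\cap Z$ and $g\in X(K\cap Z)$.
\end{itemize}
This gives $XK\cap Z=X(K\cap Z)$. As an intersection of subgroups, $X(K\cap Z)$ is then itself a subgroup, so it equals $X\vee(K\cap Z)$. The left-hand side of (1) is therefore exactly $X(K\cap Z)=(X\vee K)\cap Z$.

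For condition (2), fix $Y,Z$ with $K\leq Z$. We must show $K\vee(Y\cap Z)=(K\vee Y)\cap Z$. The argument is the mirror image of the one above.
\begin{itemize}
\item[$\subseteq$:] Both $K$ and $Y\cap Z$ lie in the subgroup $KY\cap Z$, so their join does too.
\item[$\supseteq$:] Take $g=ky\in KY\cap Z$ with $k\in K$ and $y\in Y$. Then $y=k^{-1}g\in Z$, because $K\leq Z$. So $g\in K(Y\cap Z)$.
\end{itemize}
Hence $KY\cap Z=K(Y\cap Z)$. This set is a subgroup, so it equals $K\vee(Y\cap Z)$. Since $K\vee Y=KY$ by permutability, (2) follows.

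The main point needing care is that a product set $XY$ of two subgroups need not be a subgroup in general. Here it always is, because one factor is $K$, or a product contained in a set of the form $K\cdot(\text{subgroup})$ that we have shown equals an intersection of subgroups. Once we establish that these product sets are subgroups, all joins with $K$ become plain products and the two inclusions above are routine. No further results from earlier in the paper are needed.
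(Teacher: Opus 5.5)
Your proof is correct: both inclusions are verified properly in each of the two conditions, and you correctly handle the subtle point that $X(K\cap Z)$ is a subgroup because it coincides with the intersection $XK\cap Z$. The paper gives no proof of its own and only cites Schmidt; your argument is the standard one, namely Dedekind's modular law $X(Y\cap Z)=XY\cap Z$ for $X\le Z$, combined with the fact that $K\vee Y=KY$ whenever $K$ permutes with $Y$.
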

\begin{lemma}[Lemma 2.3.2 in \cite{schmidt1994subgroup}]\label{lem:Schmidt232}
    A finite $p$-group has a modular subgroup lattice if and only if any two subgroups permute.
\end{lemma}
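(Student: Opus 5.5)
The plan is to prove the two implications of \cref{lem:Schmidt232} separately. The direction ``any two subgroups permute $\Rightarrow$ modular'' is formal and needs no $p$-group hypothesis. The converse is where the $p$-group structure enters.

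For the first direction, suppose $HK=KH$ for all $H,K\leq G$. Then every subgroup is permutable, so by \cref{thm:Schmidt213} every $K\leq G$ is a modular element of $L(G)$. Condition (1) in the definition of a modular element, applied with $m=y$, is exactly the modular law $x\vee(y\wedge z)=(x\vee y)\wedge z$ for $x\leq z$. Hence $L(G)$ is modular.

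If one prefers a direct argument, note that permutability gives $x\vee y=xy$ as a set. Then the Dedekind identity $X(Y\cap Z)=XY\cap Z$ for $X\leq Z$ follows by an elementwise check: if $xy\in Z$ with $x\in X\subseteq Z$, then $y\in Z$.

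For the converse, assume $G$ is a finite $p$-group with $L(G)$ modular, and argue by induction on $|G|$. It suffices to show $\langle H,K\rangle=HK$, which is equivalent to $|\langle H,K\rangle|=|H||K|/|H\cap K|$. The key tool is the standard lattice fact that in a finite modular lattice, the length of maximal chains is well defined and gives a rank function $\ell$ satisfying
\[
\ell(H\vee K)+\ell(H\wedge K)=\ell(H)+\ell(K).
\]
In a $p$-group every maximal subgroup has index $p$, so maximal chains in $L(G)$ have all covering indices equal to $p$. Consequently $\ell(X)=\log_p|X|$ for every $X\leq G$. Substituting into the rank identity gives
\[
|\langle H,K\rangle|\,|H\cap K|=|H||K|.
\]
Since $HK\subseteq\langle H,K\rangle$ and $|HK|=|H||K|/|H\cap K|$, we conclude $HK=\langle H,K\rangle$, so $HK$ is a subgroup and therefore $HK=KH$.

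The main obstacle is justifying that $\ell(X)=\log_p|X|$, that is, that every covering relation $Y\prec X$ in $L(G)$ has index exactly $p$. I would argue as follows. In a $p$-group $X$, every maximal subgroup $Y$ of $X$ is normal of index $p$, because $p$-groups are nilpotent. A cover in $L(G)$ between $Y$ and $X$ means precisely that $Y$ is maximal in $X$, so the covering index is $p$. The Jordan–Dedekind chain condition for finite modular lattices, together with the rank identity above, is a standard lattice-theoretic fact that I would cite rather than reprove.
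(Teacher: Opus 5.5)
Your proof is correct. The paper gives no proof of its own for this lemma (it is quoted from Schmidt), and your argument is the standard one: permutability gives modularity via \cref{thm:Schmidt213} or the Dedekind law, and conversely the modular rank identity, together with the fact that every cover in $L(G)$ of a $p$-group has index $p$, yields $|\langle H,K\rangle|=|HK|$ and hence $\langle H,K\rangle=HK=KH$.

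Two small remarks. The announced induction on $|G|$ is never used and can be dropped. Also, the Jordan--Dedekind condition is automatic here, since index-$p$ covers force every maximal chain from $1$ to $X$ to have length $\log_p|X|$; the only input you actually need from modularity is the interval isomorphism $[H\cap K,H]\cong[K,\langle H,K\rangle]$ given by $X\mapsto\langle X,K\rangle$.
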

In this work, we focus on quasi-Hamiltonian groups, which correspond to groups where all subgroups are permutable.
\begin{definition}
    A finite quasi-Hamiltonian group is a group $G$ such that all subgroups are permutable.
\end{definition}
Now we prove \cref{lem:quasiHam-nil} which is useful in understanding the structure of quasi-Hamiltonian groups. This result is a combination of a few results in \cite{schmidt1994subgroup} combined with an exercise in the same book. For this reason we include a proof the lemma.
\begin{lemma}\label{lem:quasiHam-nil}
    A finite group $G$ is quasi-Hamiltonian group if and only if $G$ is nilpotent and has a modular subgroup lattice.
\end{lemma}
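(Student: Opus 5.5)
We prove the two directions separately. In both directions we use that a finite group is nilpotent if and only if it is the direct product of its Sylow subgroups, equivalently if and only if every Sylow subgroup is normal.

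\emph{($\Rightarrow$).} Suppose every subgroup of $G$ is permutable. Modularity of $L(G)$ follows at once from \cref{thm:Schmidt213}: every $K\leq G$ is modular in $L(G)$. Taking $m=y$ in condition (2) of the definition of a modular element gives the modular law $x\vee(y\wedge z)=(x\vee y)\wedge z$ for $x\leq z$, so $L(G)$ is modular.

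For nilpotency we show that every Sylow $p$-subgroup $P$ is normal. The standard route, which is the exercise in \cite{schmidt1994subgroup}, uses the fact that a permutable subgroup is subnormal. This is Ore's theorem for finite groups, and we plan to cite it from Chapter~5 of \cite{schmidt1994subgroup}. The argument then runs as follows.
\begin{itemize}
\item A subnormal Sylow $p$-subgroup is normal. Suppose $P\unlhd\unlhd G$ through a chain $P=H_0\unlhd H_1\unlhd\cdots\unlhd H_m=G$.
\item $P$ is Sylow in each $H_i$. Hence if $P\unlhd H_i$, then $P$ is characteristic in $H_i$, being its unique Sylow $p$-subgroup.
\item Since $H_i\unlhd H_{i+1}$ and characteristic subgroups of normal subgroups are normal, we get $P\unlhd H_{i+1}$.
\item By induction, $P\unlhd G$.
\end{itemize}
So all Sylow subgroups are normal and $G$ is nilpotent.

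If we want to avoid Ore's theorem, we would try the following direct argument. For $g\in G$, the set $PP^g$ is a subgroup because $P$ permutes with $P^g$. It has order $|P|^2/|P\cap P^g|$, which is a power of $p$. Since $P$ is a maximal $p$-subgroup and $P\leq PP^g$, we get $PP^g=P$, hence $P^g=P$. This is elementary and I expect it to be the cleaner choice, so it would be the first thing to write down.

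\emph{($\Leftarrow$).} Suppose $G$ is nilpotent with $L(G)$ modular. Write $G=\prod_p G_p$ as the direct product of its Sylow subgroups. Each $G_p$ has modular subgroup lattice, since $L(G_p)$ is an interval $[1,G_p]$ of $L(G)$ and intervals of modular lattices are modular. By \cref{lem:Schmidt232}, any two subgroups of $G_p$ permute.

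Because the orders of the factors are coprime, every subgroup $H\leq G$ decomposes as $H=\prod_p (H\cap G_p)$. Concretely, $H\cap G_p$ is the unique Sylow $p$-subgroup of the nilpotent group $H$, and $H$ is the product of these.

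Now take $H,K\leq G$. Elements from different Sylow factors commute, so
\[
HK=\prod_p (H\cap G_p)(K\cap G_p)=\prod_p (K\cap G_p)(H\cap G_p)=KH.
\]
Hence every subgroup of $G$ is permutable, and $G$ is quasi-Hamiltonian.

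The main obstacle is the nilpotency step in ($\Rightarrow$). Modularity and the converse are formal once \cref{thm:Schmidt213} and \cref{lem:Schmidt232} are available, whereas nilpotency requires the Sylow argument above to actually deliver normality.
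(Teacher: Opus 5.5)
Your proposal is correct and follows the paper's proof: modularity from \cref{thm:Schmidt213}, nilpotency via the direct argument that $PP^g$ is a $p$-subgroup containing $P$ (the Ore-theorem route is an unneeded alternative), and the converse via $H=\prod_p(H\cap G_p)$ together with \cref{lem:Schmidt232}. Your extra justifications, that intervals of modular lattices are modular and that $H\cap G_p$ is the Sylow $p$-subgroup of $H$, fill in steps the paper leaves implicit; the one cosmetic slip is that the modular law is condition (1) of the modular-element definition with $m=y$ (or condition (2) with $m=x$), not condition (2) with $m=y$.
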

\begin{proof}
    We first prove the $\Rightarrow$ direction. Since every $H\leq G$ is permutable, by \cref{thm:Schmidt213} we have that $H$ is a modular element, and since this is true of every element in the lattice of $G$, then $G$ is modular. To show nilpotency we use the fact that $G$ is nilpotent if and only if every Sylow $p$-subgroup is normal. Let $P\leq G$ be a Sylow $p$-group of $G$ and define $P^{g}=gPg^{-1}$ for $g\in G$, which is also a Sylow $p$-subgroup. Since subgroups permute we have $PP^g = P^g P$ and moreover $PP^g$ is a subgroup with order $\abs{PP^g}=\frac{\abs{P}\abs{P^g}}{\abs{P\cap P^g}}$. Note that $\abs{P^g}$ is a power of $p$ and $P\cap P^g$ is a subgroup of $P$ and therefore $\abs{P\cap P^g}$ is also a power of $p$. Then, $PP^g$ is a $p$-subgroup, which implies that $PP^g\leq P$.
    Since we also have $P\leq PP^g$, then $P^g\leq P$. Both $P$ and $P^g$ have the same order and thus $P=P^g$. 

    Now we prove the $\Leftarrow$ direction. Since $G$ nilpotent, it can be decomposed in the direct product of its Sylow subgroups.
    \begin{align}\label{eq:G-nil}
        G\cong \prod_{p\mid \abs{G}} G_p, 
    \end{align}
    with each $G_p$ normal.
    Since $L(G)$ is modular then for each $G_p$ we have that $L(G_p)$ is modular. By \cref{lem:Schmidt232}, any two subgroups of $G_p$ permute. Let $H,K\leq G$ and consider its Sylow decomposition $H_p=H\cap G_p$ and $K_p = K\cap G_p$. Then,
    \begin{align}
        HK=(\prod_p H_p)(\prod_p K_p)&=\prod_p H_p K_p\\
        &=\prod_p K_p H_p\\
        &=KH.
    \end{align}
\end{proof}
\begin{proposition}\label{prop:Iwasawa-sylow-decomp}
    A finite group $G$ is quasi-Hamiltonian if and only if
    \[
    G\cong \prod_{p\mid \abs{G}} G_p,
    \]
    where the $G_p$ are the Sylow subgroups of $G$ and each $G_p$ is a modular $p$-group.
\end{proposition}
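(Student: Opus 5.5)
The plan is to derive \cref{prop:Iwasawa-sylow-decomp} directly from \cref{lem:quasiHam-nil}, together with the standard facts that a finite nilpotent group is the direct product of its Sylow subgroups and that subgroups of a direct product of groups of coprime orders split.

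For the forward direction, assume $G$ is quasi-Hamiltonian. By \cref{lem:quasiHam-nil}, $G$ is nilpotent and $L(G)$ is modular. Nilpotency gives $G\cong\prod_{p\mid\abs{G}}G_p$ with each Sylow subgroup $G_p$ normal. For each $p$, the lattice $L(G_p)$ is the interval $[\{e\},G_p]$ of $L(G)$. A sublattice (interval) of a modular lattice is modular, since the modular identity for elements of the interval is an instance of the identity in $L(G)$; meets and joins inside $G_p$ coincide with those in $G$. Hence each $G_p$ is a modular $p$-group.

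For the converse, suppose $G\cong\prod_p G_p$ with each $G_p$ a modular $p$-group. A direct product of $p$-groups for distinct primes is nilpotent, so by \cref{lem:quasiHam-nil} it suffices to show that $L(G)$ is modular. I would argue directly via permutability, which is the route the proof of \cref{lem:quasiHam-nil} already uses. First, by \cref{lem:Schmidt232}, any two subgroups of each $G_p$ permute. Second, any subgroup $H\leq G$ decomposes as $H=\prod_p (H\cap G_p)$. To see this, write $h\in H$ as $h=\prod_p h_p$; since the orders of the components are coprime, each $h_p$ is a power of $h$, so $h_p\in H\cap G_p$. Then the computation $HK=\prod_p H_pK_p=\prod_p K_pH_p=KH$ shows every subgroup is permutable. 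So $G$ is quasi-Hamiltonian by definition, and one does not even need to pass through modularity of $L(G)$.

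The only step requiring care is the decomposition $H=\prod_p(H\cap G_p)$, which the proof of \cref{lem:quasiHam-nil} used implicitly. I would spell out the coprime-order argument: if $h$ has order $\prod_p p^{e_p}$, choose integers via the Chinese remainder theorem so that a suitable power of $h$ equals the $p$-component $h_p$. Everything else is routine bookkeeping.
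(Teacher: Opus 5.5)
Your proposal is correct and follows essentially the same route as the paper. The forward direction uses \cref{lem:quasiHam-nil} to get nilpotency, hence the Sylow decomposition, and passes modularity down to each $G_p$; the converse applies \cref{lem:Schmidt232} on each factor and then the componentwise computation $HK=\prod_p H_pK_p=\prod_p K_pH_p=KH$. What you add is exactly what the paper's proof leaves implicit: that $L(G_p)$ is an interval of the modular lattice $L(G)$, and the coprime-power argument for $H=\prod_p(H\cap G_p)$, which the paper only spells out later in \cref{sec:Sylow-HSP}.
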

\begin{proof}
    $(\Rightarrow)$. By \cref{lem:quasiHam-nil}, $G$ is nilpotent and can be decomposed as in \cref{eq:G-nil}. Since it also has a modular subgroup lattice the implication follows.

    $(\Leftarrow)$. Let $H, K\leq G$ and consider $H_p$ and $K_p$ as in the proof of \cref{lem:quasiHam-nil}. Then $H_p$ and $K_p$ permute and by a similar argument $H$ and $K$ permute.
\end{proof}

Thus, finite quasi-Hamiltonian groups decompose into modular Sylow $p$-groups, and the HSP on quasi-Hamiltonian groups reduces to understanding those factors.

\begin{proposition}[Theorem 2.3.1 in \cite{schmidt1994subgroup}, see also \cite{Ballester2009permutable}]\label{prop:p-group-modular}
    A finite $p$-group $P$ is modular if and only if one of the following holds:
    \begin{enumerate}
        \item $P\cong Q_8 \times E$, where $Q_8$ is the quaternion group of order $8$ and $E$ is an elementary Abelian $2$-group.
        \item $P$ has an Abelian normal subgroup $A$ with cyclic factor group $P/A$; furthermore there exists $b\in P$ with
        $$
        P= A  \langle b\rangle
        $$
        and a positive integer $s$ such that $b$ acts on $A$ as $b^{-1}ab=a^{1+p^s}$ with $s\geq 2$ for $p=2$.
    \end{enumerate}
\end{proposition}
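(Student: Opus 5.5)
The plan is to treat the two implications separately. The ``if'' direction should reduce to facts already available, while the ``only if'' direction is Iwasawa's structure theorem and carries most of the work. By \cref{lem:Schmidt232}, a finite $p$-group is modular exactly when any two of its subgroups permute. I will therefore work throughout with permutability rather than with the lattice identity.

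\emph{Sufficiency.} In case 1, $Q_8\times E$ is a Hamiltonian (Dedekind) group, since every subgroup of $Q_8\times E$ is normal. Normal subgroups permute with every subgroup, so \cref{lem:Schmidt232} gives modularity.

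In case 2, write $r=1+p^s$, so that conjugation by $b$ is the power automorphism $a\mapsto a^r$ of $A$. Every subgroup of $A$ is $b$-invariant and hence normal in $P$. I would build the abelian group $B$ sketched in the introduction: $A$ together with a cyclic group $\langle \bar b\rangle$ of the same order as $b$, glued along $A\cap\langle b\rangle$. On $B$ I would define the bijection $\sigma(a\bar b^{i})=ab^{i}$ and check the crossed-isomorphism identity $\sigma(x)\sigma(y)=\sigma(f(y)(x)\,y)$, where each twist $f(y)$ is a power $x\mapsto x^{r^{j}}$.

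The condition $s\ge 2$ when $p=2$ is what makes this consistent: it guarantees that $r$ has $p$-power multiplicative order modulo $\operatorname{Exp}(A)$, so that $b^{o(b)}$ acts trivially and the powers are compatible. Each twist fixes every subgroup of the abelian group $B$. Baer's theorem (\cref{thm:schmidt:baers-258}) then makes $\sigma$ a projectivity $L(B)\to L(P)$. Since $L(B)$ is modular ($B$ is abelian) and modularity is a lattice property, $L(P)$ is modular.

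If I wanted to avoid Baer, the fallback is to show directly that every cyclic subgroup $\langle x\rangle$ of $P$ is permutable. The idea is to compute $[x,y]$ for $x=ab^i$, $y=a'b^j$ and show it lies in $\langle x\rangle\langle y\rangle$ using $r\equiv1 \pmod{p^s}$. Joins of permutable subgroups are permutable, so this would finish the direction.

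\emph{Necessity.} Let $P$ be a modular $p$-group, so that any two subgroups permute. If every subgroup is normal, $P$ is Dedekind. For $p$-groups the Dedekind classification leaves only two possibilities: $P$ is abelian, which is case 2 with $b=1$, or $P\cong Q_8\times E$. A direct factor of order $4$ beyond $Q_8$ would produce a non-normal cyclic subgroup, which is why $E$ must be elementary abelian.

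So assume some subgroup is not normal. I would proceed by induction on $|P|$, using that subgroups and quotients of modular groups are modular. The steps are as follows.
\begin{enumerate}
\item Choose a maximal abelian normal subgroup $A$ containing $P'$. Using permutability of cyclic subgroups, show that $P/A$ is cyclic, say generated by $bA$.
\item Show that $b$ normalizes every cyclic subgroup $\langle a\rangle\le A$. The reason is that $\langle a\rangle\langle b\rangle$ is a subgroup, and a counting argument in this metacyclic subgroup forces $b^{-1}ab\in\langle a\rangle$. Consequently $b$ acts on $A$ as a universal power automorphism $a\mapsto a^{r}$ (a standard lemma for abelian $p$-groups).
\item Show that $r\equiv1\pmod p$, and $r\equiv1\pmod 4$ when $p=2$. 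This follows because $P$ is a $p$-group, and because $r\equiv -1\pmod 4$ would yield dihedral or quaternion-type sections whose lattice is not modular. Choosing $b$ suitably within its coset then lets one normalize $r=1+p^s$.
\end{enumerate}

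The main obstacle I expect is this necessity direction, specifically step 1 and the $p=2$ exclusions in step 3. These need a careful analysis of the small non-modular $2$-groups $D_8$ and $Q_{16}$, and of $M_{p^3}$-type sections, to rule out bad actions. In the context of this paper I would cite Iwasawa's argument as presented in \cite{schmidt1994subgroup} for this direction, and give the sufficiency direction in full via the crossed isomorphism, since that is the part the algorithm actually uses.
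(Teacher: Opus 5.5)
\textbf{Gap: the bijection $\sigma(a\bar b^{i})=ab^{i}$ is the wrong map.} Deferring necessity to Iwasawa's argument in \cite{schmidt1994subgroup} is fine, since the paper gives no proof of this proposition and only cites that theorem. The gap is in the sufficiency argument you say you would give in full. Your $\sigma$ is not a crossed isomorphism with power-automorphism twists. In general it does not even send subgroups to subgroups, so Baer's theorem (\cref{thm:schmidt:baers-258}) cannot be applied to it.

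Take $P=\mathbb{Z}_{27}\rtimes_{4}\mathbb{Z}_{9}$, that is $p=3$, $s=1$, $b^{-1}ab=a^{4}$, $b^{9}=1$, so $B=\mathbb{Z}_{27}\times\mathbb{Z}_{9}$. The subgroup $K=\langle a^{3}\bar b\rangle=\{a^{3k}\bar b^{k}:0\le k<9\}$ is sent to $\{a^{3k}b^{k}\}$. This set contains $a^{3}b$ but not $(a^{3}b)^{2}=a^{3}(ba^{3}b^{-1})b^{2}=a^{24}b^{2}$, because its only element with $b$-exponent $2$ is $a^{6}b^{2}$. Putting $b^{i}$ on the left fails the same way.

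The failure is structural. In $P$ you have $(b^{i}a)(b^{j}a')=b^{i+j}a^{r^{j}}a'$. A twist $x\mapsto x^{r^{j}}$ on $B$ also multiplies the $\bar b$-exponent of $x$ by $r^{j}$. So the exponents combine as $r^{j}e_{x}+e_{y}$, not as $e_{x}+e_{y}$.

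\textbf{The fix.} Index the $c$-exponent by $\mu(j)=1+r+\dots+r^{j-1}$, which satisfies $\mu(i+j)=r^{j}\mu(i)+\mu(j)$. Let $q=p^{m}=|P:A|$, set $\sigma(ac^{\mu(j)})=b^{j}a$, and glue by $c^{q}=a_{0}$ with $a_{0}^{\mu(q)/q}=b^{q}$. For $p=2$ this gluing differs in general from $c^{q}=b^{q}$. With $x=ac^{\mu(i)}$, $y=a'c^{\mu(j)}$ and $f(y)=(\cdot)^{r^{j}}$ you then get $\sigma(x)\sigma(y)=b^{i+j}a^{r^{j}}a'=\sigma\bigl(a^{r^{j}}a'c^{\mu(i+j)}\bigr)=\sigma(f(y)(x)\,y)$.

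This verification uses only the case-2 hypotheses. The wrap-around $i+j\ge q$ and the well-definedness of $f$ both need $\operatorname{Exp}(B)\mid p^{s+m}$. That follows from $(b^{q})^{p^{s}}=1$, because $b$ centralizes $b^{q}$, and from $a^{r^{q}}=a$, because $b^{q}\in A$.

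\textbf{Where $s\ge 2$ for $p=2$ enters.} It makes $\mu$ a permutation of $\mathbb{Z}_{q}$ with $p\nmid\mu(q)/q$; for $r=3$ you get $\mu(2)=4$, which breaks both. It is also what gives Baer's condition (b). Your stated reason, that it gives $r$ a $p$-power order modulo $\operatorname{Exp}(A)$, is not the point: every odd $r$ already has $2$-power order modulo $2^{e}$. With these corrections your crossed-isomorphism route is a valid alternative to a direct permutability check.

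\textbf{Two smaller points.} First, your fallback criterion $[x,y]\in\langle x\rangle\langle y\rangle$ is equivalent to the single relation $yx^{i+1}=xy^{1-j}$. Permutability needs every $y^{l}x^{k}$ to lie in $\langle x\rangle\langle y\rangle$. Second, in your necessity sketch, all elements of $bA$ induce the same automorphism of the abelian group $A$. So to normalize $r=1+p^{s}$ you must replace $b$ by a power $b^{t}$ with $p\nmid t$, not by another representative of its coset.
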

Note that case 2 in \cref{prop:p-group-modular} includes the case of an Abelian $p$-group.
\subsubsection{Black-box groups}\label{sec:blackbox-group}
The results in this work require that the groups be represented in the model of black-box groups \cite{Babai1984matrix}. In this model, a finite group $G$ is given by encodings of its elements as bit strings of a fixed length, together with oracles implementing the group operations. Thus, given encodings of $g,h\in G$, the oracle returns an encoding of $gh$, and given an encoding of $g$, it returns an encoding of $g^{-1}$. The input also contains a list of encoded generators $(g_1,\ldots,g_k)$, with the promise that $G=\langle g_1,\ldots,g_k\rangle$.
If $N\triangleleft G$ is a known normal subgroup, the quotient $G/N$ can also be regarded as a black-box group. An element of $G/N$ is a coset $gN$, and we encode this coset by any representative $g\in G$. Multiplication and inversion in the quotient are implemented by multiplication and inversion of representatives in $G$: $(gN)(hN)=ghN$ and $(gN)^{-1}=g^{-1}N$. Note that as described above, the encoding is non-unique (elements of the group can be encoded in more than one way).

Some of the algorithms in this work require a unique encoding, particularly those from \cite{Friedl2014hiddentranslation}. As described in Theorem 4.8 of \cite{Friedl2014hiddentranslation}, if a group $G$ is solvable and $N$ is normal such that $G/N$ is Abelian, then with $O(\log \abs{G})$ copies of $\ket{N}$ it is possible to obtain a unique encoding for $G/N$. To prepare the state $\ket{N}$, Theorem 4.3 of \cite{Friedl2014hiddentranslation} can be used which says that if $G$ has a unique encoding and a subgroup $H\leq G$ is solvable, then $\ket{H}$ can be prepared efficiently with negligible error. These facts will be enough for the cases we cover in our work.

Let $\ell$ denote the encoding length of a group element.  We
measure the running time in $\ell$, and assume that $\ell=\operatorname{poly}(\log |G|)$. The black-box oracles can also be implemented as unitaries \cite{Watrous2000succint}. In this work we assume access to such oracles.
\section{Reduction for groups $A\rtimes \mathbb{Z}_{p^k}$}\label{sec:reduction-general}

In this section, we reduce the hidden subgroup problem in $G=A\rtimes \mathbb{Z}_{p^k}$ to a simpler problem based on a generalized version of the reduction by Ettinger and H{\o}yer \cite{Ettinger2000Dihedral}.
Let $f:G\to S$ hide a subgroup $H\leq G$ and $\varphi:\mathbb Z_{p^k}\to\operatorname{Aut}(A)$ the action in the group $A\rtimes_{\varphi} \mathbb{Z}_{p^k}$ as described in \cref{sec:group-prelim}.
Following \cite{Bacon2005optimalmeasurement}, we first solve the HSP on the Abelian group
$$
G_1=A\times\{0\}.
$$
The restriction of $f$ to $G_1$ hides the subgroup $H_1:=H\cap (A\times \{0\})=A_1\times \{0\}$.
Since this is an Abelian HSP, we can recover generators of $H_1$ efficiently. 
The next step is to determine whether $H_1$ is normal in $G$. Because the ambient group $G$ is solvable, we can use the algorithm from \cite{watrous2001quantumalgorithmssolvablegroups} to check whether $H_1$ is normal in $G$.
\begin{lemma}\label{lem:semidirect-solvable}
    Let $G=A\rtimes_{\varphi} C$ with $A$ Abelian and $C$ cyclic. Then $G$ is solvable.
\end{lemma}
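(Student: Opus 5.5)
The plan is to exhibit an explicit subnormal series with Abelian factors, which matches the definition of solvability given in \cref{sec:group-prelim}. The natural candidate is
\[
\{e\}\unlhd A\times\{e_C\}\unlhd A\rtimes_{\varphi} C = G .
\]
This series has length two, and each factor should be Abelian.

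First, I will check that $N:=A\times\{e_C\}$ is a subgroup isomorphic to $A$. Since $\varphi_{e_C}=\mathrm{id}$, we have $(a,e_C)(a',e_C)=(a+a',e_C)$, and the inverse formula gives $(a,e_C)^{-1}=(a^{-1},e_C)$. So $N$ is a subgroup isomorphic to $A$, hence Abelian. In particular the factor $N/\{e\}\cong A$ is Abelian.

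Second, I will show $N\unlhd G$ by exhibiting it as a kernel. Consider the projection $\pi:G\to C$, $\pi(a,c)=c$. The multiplication rule $(g,k)(g',k')=(g\varphi_k(g'),kk')$ shows that $\pi$ is a homomorphism. It is surjective, and its kernel is exactly $N$. Therefore $N$ is normal in $G$, and the first isomorphism theorem gives $G/N\cong C$. Since $C$ is cyclic, it is Abelian, so the top factor is Abelian too.

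These two steps together give a subnormal series with Abelian factors, so $G$ is solvable. As a sanity check, one can also argue via the derived-series proposition: $G/N$ Abelian gives $[G,G]\le N$, and $N$ Abelian gives $G^{(2)}=\{e\}$.

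There is no real obstacle here. The only point needing care is confirming that the semidirect multiplication makes $\pi$ a homomorphism, which is immediate from the second coordinate being $kk'$.
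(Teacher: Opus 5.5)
Your proof is correct and is essentially the paper's argument: both use the series $\{e\}\unlhd A\times\{e\}\unlhd G$, whose factors are $A$ and $G/(A\times\{e\})\cong C$, both Abelian. You also justify normality by realizing $A\times\{e\}$ as the kernel of the projection $G\to C$, which the paper only asserts.
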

\begin{proof}
    The subgroup $A\times\{e\}$ is a normal Abelian subgroup of $G$. Hence
    $$
    \{(e,e)\}\unlhd A\times\{e\}\unlhd G,
    $$
    and the quotient $G/(A\times\{e\})\cong C$ is Abelian. Therefore $G$ is solvable.
\end{proof}
The following lemma is a useful characterization for the case that $H_1$ is normal in $G$.
\begin{lemma}\label{lem:norm-phi}
    $H_1\unlhd G$ if and only if $\varphi(A_1)=A_1$.
\end{lemma}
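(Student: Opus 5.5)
The plan is to compute the conjugation action of $G$ on elements of $H_1=A_1\times\{0\}$ explicitly, using the multiplication and inversion formulas for $A\rtimes_\varphi\mathbb{Z}_{p^k}$ from \cref{sec:group-prelim}, and then read off both directions. Here $\varphi=\varphi_1$ and $\varphi_b=\varphi^b$.

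\textbf{Conjugation formula.} Take $g=(x,b)\in G$ and $(a,0)\in H_1$. Then
\begin{align*}
(x,b)(a,0)(x,b)^{-1}&=(x+\varphi_b(a),b)\cdot(\varphi_b^{-1}(-x),-b)\\
&=(x+\varphi_b(a)-x,0)=(\varphi_b(a),0).
\end{align*}
The second equality uses that $A$ is Abelian. So conjugation by $(x,b)$ sends $H_1$ to $\varphi_b(A_1)\times\{0\}$, and the $A$-component $x$ plays no role.

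\textbf{($\Leftarrow$).} Suppose $\varphi(A_1)=A_1$. By induction, $\varphi^b(A_1)=A_1$ for every $b\geq 0$, hence for every $b\in\mathbb{Z}_{p^k}$ since $\varphi_b=\varphi^b$. The formula then gives $gH_1g^{-1}=H_1$ for all $g\in G$, so $H_1\unlhd G$.

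\textbf{($\Rightarrow$).} Suppose $H_1\unlhd G$. Conjugating by $g=(0,1)$ gives $\varphi(A_1)\times\{0\}=(0,1)H_1(0,1)^{-1}=H_1$, so $\varphi(A_1)=A_1$.

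The only points needing care are the inverse formula $(x,b)^{-1}=(\varphi_b^{-1}(-x),-b)$ and the step $\varphi_b(\varphi_{-b}(y))=y$ in the product. Both are routine. One could also note that, because $\varphi$ is injective and $A_1$ is finite, $\varphi(A_1)\subseteq A_1$ already forces equality, but the argument above does not need this.
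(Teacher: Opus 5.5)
Your proof is correct and follows the paper's argument: compute $(x,b)(a,0)(x,b)^{-1}=(\varphi^{b}(a),0)$ and read off both directions. The only cosmetic difference is in the forward direction, where you use $gH_1g^{-1}=H_1$ to get $\varphi(A_1)=A_1$ directly; the paper instead derives $\varphi(A_1)\subseteq A_1$ and then upgrades to equality because $\varphi$ is an automorphism of a finite group.
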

\begin{proof}
    Let $(a,0)\in H_1$ and $(x,y)\in G$. Then
    $$
    (x,y)(a,0)(x,y)^{-1}=(\varphi^{y}(a),0).
    $$
    If $H_1$ is normal, then $\varphi^y(a)\in A_1$ for all $y\in \mathbb{Z}_{p^k}$, and in particular $\varphi(a)\in A_1$. Hence $\varphi(A_1)\subseteq A_1$. Since $\varphi$ is an automorphism, equality follows.

    Conversely, if $\varphi(A_1)=A_1$, then $\varphi^{y}(a)\in A_1$ for every $y$, so the conjugate of any $(a,0)\in H_1$ again lies in $H_1$. Thus $H_1\unlhd G$.
\end{proof}
Next, we consider the cases when $H_1$ is not normal in $G$. If this is the case, the hidden subgroup $H$ can be shown to be inside a smaller semidirect product group.
\begin{proposition}\label{prop:not-norm}
    If $H_1$ is not normal in $G=A\rtimes \mathbb{Z}_{p^k}$, then
    $$
    H\leq K_1,
    \qquad
    K_1=A\rtimes \langle p \rangle \cong A\rtimes \mathbb{Z}_{p^{k-1}}.
    $$
\end{proposition}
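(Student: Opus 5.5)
We argue by contrapositive: assuming $H\not\leq K_1$, we show that $H_1$ is normal in $G$. Write $\pi:G\to\mathbb{Z}_{p^k}$ for the projection $(a,b)\mapsto b$, which is a homomorphism with kernel $A\times\{0\}$. Since $K_1=\pi^{-1}(\langle p\rangle)$, the condition $H\not\leq K_1$ means $\pi(H)\not\subseteq\langle p\rangle$. The subgroups of the cyclic $p$-group $\mathbb{Z}_{p^k}$ form a chain $\{0\}<\langle p^{k-1}\rangle<\cdots<\langle p\rangle<\mathbb{Z}_{p^k}$, so every proper subgroup lies in $\langle p\rangle$. It follows that $\pi(H)=\mathbb{Z}_{p^k}$, and hence $H$ contains an element of the form $h=(x,1)$ for some $x\in A$.

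Next we use that $H_1=H\cap(A\times\{0\})$ is normal in $H$, being the kernel of $\pi|_H$. Therefore conjugation by $h$ preserves $H_1$. Using the semidirect product law, for $(a,0)\in H_1$ we compute
$$
(x,1)(a,0)(x,1)^{-1}=(x+\varphi(a),1)\,(\varphi^{-1}(-x),-1)=(x+\varphi(a)-x,0)=(\varphi(a),0),
$$
where commutativity of $A$ is what cancels the $x$ terms. This conjugate lies in $H_1$, so $\varphi(a)\in A_1$, that is, $\varphi(A_1)\subseteq A_1$. Since $A_1$ is finite and $\varphi$ is injective, we get $\varphi(A_1)=A_1$. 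By \cref{lem:norm-phi}, $H_1\unlhd G$, which contradicts the hypothesis. Hence $H\leq K_1$.

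It remains to justify the identification $K_1=A\rtimes\langle p\rangle\cong A\rtimes\mathbb{Z}_{p^{k-1}}$. Here $\langle p\rangle\cong\mathbb{Z}_{p^{k-1}}$ via $pj\mapsto j$, and the induced action is $j\mapsto\varphi^{pj}$. We also need $k\geq1$. If $k=0$ there is nothing to prove, since then $H=H_1$ is normal whenever $H\leq A\times\{0\}$; this case is excluded because $H_1$ is assumed non-normal.

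The only delicate point is the conjugation computation. It requires getting the inverse formula $(g,k)^{-1}=(\varphi_k^{-1}(g^{-1}),k^{-1})$ from \cref{sec:group-prelim} right in additive notation, and it uses that $A$ is Abelian, so that the $x$ terms cancel. Everything else follows from the chain structure of subgroups of a cyclic $p$-group.
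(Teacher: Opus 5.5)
Your proof is correct and follows the paper's argument. You assume $H\not\leq K_1$, produce an element $(x,1)\in H$, conjugate $H_1$ by it to get $\varphi(A_1)=A_1$, and invoke \cref{lem:norm-phi} for the contradiction. The only cosmetic difference is how you obtain $(x,1)$: you use the fact that every proper subgroup of $\mathbb{Z}_{p^k}$ lies in $\langle p\rangle$, whereas the paper raises an element $(a,c)\in H$ with $p\nmid c$ to the power $c^{-1}\bmod p^k$.
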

\begin{proof}
    Assume $H\not\leq K_1$. Then $H$ contains some element $(a,c)$ with $a\in A$ and $c\in\mathbb{Z}_{p^k}$ such that $p\nmid c$. Hence, $c$ is invertible modulo $p^k$, so there exists $m\in\mathbb{Z}_{p^k}$ with $mc\equiv 1\pmod{p^k}$. Therefore, $(a,c)^m\in H$ has second coordinate $1$, so $(d,1)\in H$ for some $d\in A$.

    Now let $(h,0)\in H_1$. Then
    $$
    (d,1)(h,0)(d,1)^{-1}=(\varphi(h),0)\in H.
    $$
    From this we conclude that $\varphi(h)\in A_1$, so $\varphi(A_1)=A_1$. By \cref{lem:norm-phi} this implies $H_1\unlhd G$, contradicting the assumption. Therefore $H\leq K_1$.
\end{proof}
It follows that if $H_1$ is not normal, we then proceed to solve the hidden subgroup problem in the group $A\rtimes \mathbb{Z}_{p^{k-1}}$. We may then repeat the previous analysis and check whether $H_1$ is normal in $A\rtimes \mathbb{Z}_{p^{k-1}}$. For $0\leq j\leq k$, define
$$
G_j
:=
A\rtimes_{\phi}\langle p^j\rangle
=
\left\{
(a,p^j b)
:
a\in A,\;
b\in\mathbb Z_{p^{k-j}}
\right\}.
$$
Note that we can identify $A\rtimes_{\phi^{p^j}}\mathbb Z_{p^{k-j}}$ with $G_j$. Under the map $\iota_j:
A\rtimes_{\phi^{p^j}}\mathbb Z_{p^{k-j}}
\to G_j$, such that $\iota_j(a,b)=(a,p^j b)$,
the group $G_j$ is a semidirect product whose cyclic factor has
order \(p^{k-j}\).

Suppose inductively that \(H\leq G_j\). By Lemma~\ref{lem:norm-phi},
$$
H_1\trianglelefteq G_j
\quad\Longleftrightarrow\quad
\phi^{p^j}(A_1)=A_1.
$$
If $H_1$ is not normal in $G_j$, Proposition~\ref{prop:not-norm}
applied inside $G_j$ gives $H\leq G_{j+1}$.
Thus the reduction proceeds until either $H_1\trianglelefteq G_j$
for some $j<k$, or $j=k$. In the latter case,
$$
H\leq G_k=A\times\{0\},
$$
and hence $H=H_1$.
\subsection{The case $H_1\unlhd G$}
We proceed to analyze the case when $H_1$ is normal. In this case, we show that the problem can be reduced to the hidden subgroup problem on a semidirect product group where the hidden subgroup is either cyclic or trivial. 
\begin{lemma}\label{lem:quotient-cases}
    Let $G=A\rtimes_{\varphi} \mathbb{Z}_{p^k}$ and assume $H_1\unlhd G$. Write
    $$
    H_1=H\cap (A\times \{0\})=A_1\times \{0\},
    $$
    and set
    $$
    G_2=G/H_1\cong A_2\rtimes_{\varphi} \mathbb{Z}_{p^k},\qquad A_2:=A/A_1,
    $$
    and $H_2:=H/H_1$. Then $H_2$ is either trivial or of the form $H_2=\langle (d,p^t)\rangle$
    for some $t\in\{0,1,\dots,k-1\}$ and some $d\in A_2$.
\end{lemma}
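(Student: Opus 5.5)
First I will make sure the quotient is well defined. Since $H_1=A_1\times\{0\}$ is normal, \cref{lem:norm-phi} gives $\varphi(A_1)=A_1$. Hence $\varphi$ descends to an automorphism $\bar\varphi$ of $A_2=A/A_1$. The map $(a,b)\mapsto (a+A_1,b)$ is then a surjective homomorphism $G\to A_2\rtimes_{\bar\varphi}\mathbb{Z}_{p^k}$ with kernel exactly $H_1$. This justifies the identification $G_2\cong A_2\rtimes\mathbb{Z}_{p^k}$. Because $H_1\le H$, the quotient $H_2=H/H_1$ is a subgroup of $G_2$.

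The key step is to show that $H_2\cap (A_2\times\{0\})$ is trivial. Suppose $(a+A_1,0)\in H_2$. Then $(a,0)h_1\in H$ for some $h_1\in H_1\le H$, so $(a,0)\in H\cap(A\times\{0\})=H_1$. This forces $a\in A_1$, i.e.\ the class of $a$ is trivial in $A_2$. It follows that the projection $\pi:H_2\to\mathbb{Z}_{p^k}$, $(d,b)\mapsto b$, which is a homomorphism because the second coordinate multiplies additively in the semidirect product, has trivial kernel. So $\pi$ is injective, and $H_2$ is isomorphic to a subgroup of the cyclic group $\mathbb{Z}_{p^k}$.

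To finish, I will use that the subgroups of $\mathbb{Z}_{p^k}$ are exactly $\langle p^t\rangle$ for $t\in\{0,\dots,k\}$. If $\pi(H_2)=\langle p^k\rangle=\{0\}$, then injectivity gives $H_2$ trivial. Otherwise $\pi(H_2)=\langle p^t\rangle$ with $t\le k-1$. Choose $(d,p^t)\in H_2$ with $\pi(d,p^t)=p^t$. Then $\pi(\langle(d,p^t)\rangle)=\langle p^t\rangle=\pi(H_2)$. Since $\langle(d,p^t)\rangle\le H_2$ and $\pi$ is injective on $H_2$, the two subgroups coincide, so $H_2=\langle(d,p^t)\rangle$.

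There is no serious obstacle here. The one point requiring care is the trivial-intersection step, which uses $H_1\le H$ so that coset representatives can be pulled back into $H$. A secondary check is that $\bar\varphi$ is well defined, which is exactly where normality enters through \cref{lem:norm-phi}.
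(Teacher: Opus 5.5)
Your proof is correct and follows the same route as the paper. You project $H_2$ onto $\mathbb{Z}_{p^k}$, show the kernel $H_2\cap(A_2\times\{0\})$ is trivial by pulling back into $H\cap(A\times\{0\})=H_1$, and conclude from the subgroup structure of $\mathbb{Z}_{p^k}$; you also make explicit two points the paper leaves implicit, namely that $\varphi$ descends to $A/A_1$ via \cref{lem:norm-phi}, and why $H_2=\langle(d,p^t)\rangle$ (equal images under $\pi$ plus injectivity).
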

\begin{proof}
    If $H=H_1$, then $H_2$ is trivial. Assume now that $H\neq H_1$.

    Consider the projection
    $$
    \pi:G_2\to \mathbb{Z}_{p^k},\qquad \pi((a,b))=b.
    $$
    Its kernel is $A_2\times\{0\}$. Restricting $\pi$ to $H_2$, we have
    $$
    \ker(\pi|_{H_2})=H_2\cap (A_2\times\{0\}).
    $$
    We claim that this kernel is trivial. Indeed, if $xH_1\in H_2\cap (A_2\times\{0\})$, then $x\in H$ and also $x\in A\times\{0\}$ modulo $H_1$, so in fact $x\in H\cap (A\times\{0\})=H_1$. Therefore $xH_1$ is the trivial coset.

    Hence $\pi|_{H_2}$ is injective, so $H_2\cong \pi(H_2)\leq \mathbb{Z}_{p^k}$. Every subgroup of $\mathbb{Z}_{p^k}$ is either trivial or generated by $p^t$ for some $t\in\{0,\dots,k-1\}$. Therefore $H_2$ is either trivial or cyclic of order $p^{k-t}$, generated by some element of the form $(d,p^t)$.
\end{proof}
After taking the quotient by $H_1$, the HSP is therefore reduced to the case in which the hidden subgroup has the form $H=H_d^t:=\langle (d,p^t)\rangle$,
with trivial intersection with $A\times\{0\}$.
Note that it must be true that
\begin{align}
    \sum_{j=0}^{p^{k-t}-1}\varphi^{jp^t}(d)=0,
\end{align}
because the order of $H$ is $p^{k-t}$. This gives rise the following problem.
\begin{problem}\label{prob:reduced}
    Given $G=A\rtimes \mathbb{Z}_{p^k}$ with $A$ Abelian, and a function $f:G\to S$ hiding $H\leq G$ where either $H=\langle(d,p^t)\rangle$ for $d\in A$, $t\in\{0,\cdots,k-1\}$ or $H$ is trivial. Moreover, $H\cap (A\times\{0\})$ is trivial. Find a set of generators of $H$.
\end{problem}
Note that the groups $H_{d,t}=\langle (d,p^t) \rangle$ can also be written as 
\begin{align}
    H_{d,t}=\left\{\left(\sum_{j=0}^{b-1}\varphi^{j p^t}(d),b p^t\right) : b\in \mathbb{Z}_{p^{k-t}}\right\}.
\end{align}
Since $\mathrm{Aut}(\mathbb{Z}_N)\cong \mathbb{Z}_{N}^{\times}$, an automorphism $\varphi$ corresponds to multiplying by some $\mu \in \mathbb{Z}_N^{\times}$. Therefore, we have $\varphi^{jp^t}(d)=\mu^{jp^t} d$. We can then define 
\begin{align}\label{eq:M}
    M_t^{(b)}=\sum_{j=0}^{b-1}\mu^{j p^t},
\end{align}
and thus,
\begin{align}
    H_{d,t}=\{(M_t^{(b)}d,bp^t): b\in Z_{p^{k-t}}\}.
\end{align}
\subsection{Polynomial-time algorithm for $\mathbb{Z}_{N}\rtimes \mathbb{Z}_{p^k}$}\label{sec:poly-ZN}

Our starting point is \cref{prob:reduced}.
We first consider the case in
which $t$ is known. We will first show how the pretty good measurement strategy in \cite{Bacon2005optimalmeasurement} can also be used in this problem. Then, we show that even if $t\in\{0,\cdots,k-1\}$ is not known, we can use a linear search strategy to solve the problem. 

A fact that will be useful is that the complete set of left coset representatives is given by $L_t=\{(\ell,c): \ell\in \mathbb{Z}_N, c\in \mathbb{Z}_{p^t}\}$.
\begin{proposition}
    Let $H_{d,t}\leq \mathbb{Z}_N \rtimes \mathbb{Z}_{p^k}$. Then $L_t=\{(\ell,c): \ell\in \mathbb{Z}_N, c\in \mathbb{Z}_{p^t}\}$ is a complete set of left coset representatives.
\end{proposition}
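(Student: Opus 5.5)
Both the counting part and the distinctness part are direct computations, so the plan is to use the counting argument together with a uniqueness argument. First, $|H_{d,t}|=p^{k-t}$. This holds because $H_{d,t}\cap(\mathbb{Z}_N\times\{0\})$ is trivial, so the projection $\pi$ onto $\mathbb{Z}_{p^k}$ restricted to $H_{d,t}$ is injective (as in \cref{lem:quotient-cases}), and its image is $\langle p^t\rangle$, which has order $p^{k-t}$. The number of left cosets is therefore
\[
\frac{N p^k}{p^{k-t}}=N p^t=|L_t|,
\]
where we identify $\mathbb{Z}_{p^t}$ with the integers $\{0,\dots,p^t-1\}\subseteq\mathbb{Z}_{p^k}$. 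It then suffices to show that distinct elements of $L_t$ lie in distinct left cosets.

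To show this, take $(\ell,c),(\ell',c')\in L_t$ with $(\ell,c)H_{d,t}=(\ell',c')H_{d,t}$. Then $(\ell,c)^{-1}(\ell',c')\in H_{d,t}$. Apply $\pi$, which is a homomorphism because $\mathbb{Z}_N\times\{0\}$ is its kernel. This gives $c'-c\in\langle p^t\rangle$, that is, $c'\equiv c \pmod{p^t}$. Both $c$ and $c'$ lie in $\{0,\dots,p^t-1\}$, so $c=c'$. The product then reduces to
\[
(\ell,c)^{-1}(\ell',c)=(\mu^{-c}(\ell'-\ell),0),
\]
which lies in $H_{d,t}\cap(\mathbb{Z}_N\times\{0\})=\{(0,0)\}$. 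Since $\mu$ is a unit, $\ell=\ell'$.

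Combining injectivity with the equality of cardinalities shows that $L_t$ meets every left coset exactly once. Alternatively, one can give existence directly: for $(x,y)\in G$, write $y=c+bp^t$ with $0\le c<p^t$, and multiply on the right by the inverse of $(M_t^{(b)}d,bp^t)$ to land in $L_t$. The counting argument avoids this computation.

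The only delicate point is bookkeeping. One must fix the convention that $\mathbb{Z}_{p^t}$ means the set of representatives $\{0,\dots,p^t-1\}$, and compute the semidirect-product inverse correctly. Neither step is a real obstacle.
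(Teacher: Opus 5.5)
Your proof is correct, but it gets existence differently from the paper. The paper argues constructively. For $(a,x)$ it writes $x=c+rp^t$ and exhibits the factorization $(a,x)=(\ell,c)(M_t^{(r)}d,rp^t)$ with $\ell=a-\mu^cM_t^{(r)}d$. It then obtains uniqueness by comparing coordinates in $(\ell,c)(M_t^{(r)}d,rp^t)=(\ell',c')(M_t^{(r')}d,r'p^t)$, reducing the second coordinate modulo $p^t$ essentially as you do.

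You instead count cosets and prove only injectivity, using the projection onto $\mathbb{Z}_{p^k}$ and the trivial intersection $H_{d,t}\cap(\mathbb{Z}_N\times\{0\})=\{(0,0)\}$.

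Your route makes the needed hypothesis explicit. The statement holds only when $(d,p^t)$ has order $p^{k-t}$, i.e.\ $M_t^{(p^{k-t})}d=0$, which is the standing assumption of \cref{prob:reduced}. Without it there are fewer than $Np^t$ cosets and the claim fails. You invoke this assumption openly. The paper uses it only tacitly, through its parametrization of $H_{d,t}$ by $b\in\mathbb{Z}_{p^{k-t}}$ in the uniqueness step.

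The paper's route gives the explicit description $(\ell,c)H_{d,t}=\{(\ell+\mu^cM_t^{(b)}d,\,c+bp^t)\}$. That is exactly what is used next to write the coset states $\ket{\psi_{\ell,c,d,t}}$, and your closing alternative recovers it.

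One small wording fix: $\pi$ is a homomorphism because second coordinates simply add under the semidirect-product multiplication. It is not a homomorphism because of its kernel; speaking of a kernel already presupposes a homomorphism.
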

\begin{proof}
    First, we show that given a $(a,x)\in \mathbb{Z}_N \rtimes \mathbb{Z}_{p^k}$, there is some $\ell\in \mathbb{Z}_N$, $c\in \mathbb{Z}_{p^t}$ such that $(a,x)\in (\ell,c)H_{d,t}$. Since $x\in \mathbb{Z}_{p^k}$, we can always write it as $x=c+rp^t$ for some $c\in\{0,\cdots,p^t-1\}$ and $r\in \mathbb{Z}_{p^{k-t}}$. Then, note that for any $\ell$, 
    \begin{align}
        (\ell,c)(M^{(r)}_{t}d,rp^t)&=(\ell+\mu^{c }M_t^{(r)}d,c+rp^t).
    \end{align}
    Then, pick $\ell=a-\mu^c M^{(r)}_t d  \quad (\text{mod } N)$. Indeed, this shows that $(a,x)$ is in some left coset of the form described above.

    To show uniqueness, suppose that $(\ell,c)H_{d,t}=(\ell',c') H_{d,t}$. Then, we have that
    \begin{align}\label{eq:cosets-equal}
        (\ell,c)(M^{(r)}_t d, rp^t)=(\ell',c')(M^{(r')}_t d, r' p^t),
    \end{align}
    for some $r,r'\in \mathbb{Z}_{p^{k-t}}$. Comparing the second coordinate of each side gives $c+rp^t = c'+r'p^t \pmod{p^k}$. Since $c,c'\in\{0,\cdots,p^t-1\}$, we can consider the equality modulo $p^t$, giving $c=c'$ which implies $r p^t= r' p^t \pmod{p^k}$ or $r=r' \pmod{p^{k-t}}$. Since $r,r'\in \mathbb{Z}_{p^{k-t}}$, then $r=r'$. From \cref{eq:cosets-equal} we get $\ell=\ell'$.
\end{proof}
\subsubsection{PGM for fixed $t$}\label{sec:fixed-t}
 In this section, we give a description of how to solve \cref{prob:reduced} in the fixed $t$ case, i.e., we assume $t$ is known and fixed. The pretty good measurement approach consists in preparing the quantum states encoding the hidden subgroup and then the pretty good measurement is used to distinguish between the possible hidden subgroups. Since $t$ is fixed, this section adapts the method described in \cite{Bacon2005optimalmeasurement}.

First, we write the left coset states $\ket{\psi_{\ell,c,d,t}}:=\ket{(\ell,c) H_{d,t}}$ as 
\begin{align}
\ket{\psi_{\ell,c,d,t}}&=\frac{1}{\sqrt{p^{k-t}}}\sum_{b\in \mathbb{Z}_{p^{k-t}}} \ket{(\ell, c)(M_t^{(b)}d,bp^t)}\\
&= \frac{1}{\sqrt{p^{k-t}}}\sum_{b\in \mathbb{Z}_{p^{k-t}}} \ket{\ell +\mu^c M_t^{(b)}d,c+bp^t}.
\end{align}
Suppose that the access to $f$ is given by the oracle $U_f$. We can use this oracle to prepare the state  
\begin{align}
    \frac{1}{\sqrt{N p^k}}\sum_{(\ell,c)\in A\rtimes \mathbb{Z}_{p^k}}\ket{(\ell,c)}\ket{0}\to \frac{1}{\sqrt{N p^k}}\sum_{(\ell,c)\in A\rtimes \mathbb{Z}_{p^k}}\ket{(\ell,c)}\ket{f(\ell,c)}.
\end{align}
By discarding the second register, the first register can be described by a mixed state as follows
\begin{align}
    \rho_{d,t}=\frac{1}{Np^t}\sum_{\ell\in \mathbb{Z}_N}\sum_{c\in\mathbb{Z}_{p^t}}\ketbra{\psi_{\ell,c,d,t}}.
\end{align}
Next, we show that this coset state can be put in a convenient form to apply the PGM, a similar strategy was followed in \cite{Bacon2005optimalmeasurement}. Define the unitary $W_t$ as acting in the following way on the register holding the state $\ket{c+bp^t}$ together with an ancilla:
\begin{align}
W_t\ket{c+bp^t}\ket{0}=\ket{c}\ket{b}.
\end{align}
Note that this is a unitary because the operation is reversible as $c\in \mathbb{Z}_{p^t}$ and $b\in\mathbb{Z}_{p^{k-t}}$. The purpose of this unitary is to extract the coset label $c$ and $b$ from the second register. Applying $W_t$ to the coset state gives
\begin{align}
W_t\ket{\psi_{\ell,c,d,t}}\ket{0}=\frac{1}{\sqrt{p^{k-t}}}\sum_{b\in\mathbb{Z}_{p^{k-t}}}\ket{\ell + \mu^c M_t^{(b)}d,c,b}.
\end{align}
If we act on the coset state with both the Fourier transform of the group $\mathbb{Z}_N$ and $W_t$ we get
\begin{align}
    \mathcal{F}_N\otimes W_t\ket{\psi_{\ell,c,d,t}}\ket{0}=\frac{1}{\sqrt{N}}\sum_{x\in \mathbb{Z}_N} \omega_N^{x\ell}\ket{x,c}\otimes \ket{\phi_{x,c,d,t}}
\end{align}
where $\ket{\phi_{x,c,d,t}}:=\frac{1}{\sqrt{p^{k-t}}}\sum_{b\in\mathbb{Z}_{p^{k-t}}}\omega_N^{x\mu^{c}M_t^{(b)}d}\ket{b}$.

We can average over $\ell$, giving the final (transformed) coset state as
\begin{align}
    \widetilde{\rho}_{d,t}=(\mathcal{F}_N\otimes W_t)\rho_{d,t}(\mathcal{F}_N\otimes W_t)^\dagger=\frac{1}{Np^t}\sum_{x\in\mathbb{Z}_N}\sum_{c\in\mathbb{Z}_{p^t}}\ketbra{x,c}\otimes \ketbra{\phi_{x,c,d,t}}.
\end{align}
We can further simplify the state by noticing that we can remove the label $c$. Define the unitary 
\begin{align}
    U_t\ket{x,c}=\ket{x\mu^c,c},
\end{align}
which is well defined because $\mu^c\in \mathbb{Z}_N^\times$. Note that $x$ and $c$ appear in $\ket{\phi_{x,c,d,t}}$ as $y:=x\mu^c$ and therefore this unitary allows us to write the two first registers as a maximally mixed state, which has no useful information.
\begin{align}
    (U_t\otimes I)\widetilde{\rho}_{d,t}(U_t\otimes I)^\dagger&=\frac{1}{Np^t}\sum_{y\in\mathbb{Z}_N}\sum_{c\in\mathbb{Z}_{p^t}}\ketbra{y,c}\otimes \ketbra{\phi_{y,d,t}},\\
    &= \frac{I_{p^t}}{p^t}\otimes \sigma_{d,t}
\end{align}
with $\ket{\phi_{y,d,t}}=\frac{1}{\sqrt{p^{k-t}}}\sum_{b\in \mathbb{Z}_{p^{k-t}}}\omega_N^{yM_t^{(b)}d}\ket{b}$ and 
\begin{align}\label{eq:sigma-hidden-state}
    \sigma_{d,t}=\frac{1}{N}\sum_{y\in \mathbb{Z}_N}\ketbra{y}\otimes \ketbra{\phi_{y,d,t}}.
\end{align}
For fixed \(t\), it remains to recover the unknown parameter
\(d\in\mathbb Z_N\) from the states \(\sigma_{d,t}\).  In \cref{sec:app-PGM}, we analyze the exact one-copy PGM for the ensemble $\{\sigma_{d,t}:d\in\mathbb Z_N\}$
and prove the lower bound
$$
P_{\mathrm{succ}}^{(t,1)}
\geq
\frac{\phi(N)p}{N^2}.
$$
In \cref{sec:app-PGM} we also describe an efficiently implementable block
measurement, which achieves the same lower
bound.  We use this block measurement as the fixed-$t$ subroutine
in the algorithm below.
\subsubsection{Finding $d$ and $t$}\label{sec:d-t}
In this section we assume that both $d$ and $t$ are unknown. We show that using the procedure in \cref{sec:fixed-t} together with a linear search over the possible values of $t$ is enough to solve this problem. 
\begin{lemma}\label{lem:verification}
    Let $H=H_{d,t}=\langle (d,p^t) \rangle$. Then:  
    \begin{itemize}
        \item if $s<t$, there is no element of $H$ whose second component is $p^s$.
        \item if $s\geq t$, there is exactly one element of $H$ whose second component is $p^s$, namely $h_s=(M_t^{(p^{s-t})}d,p^s)$.
    \end{itemize}
\end{lemma}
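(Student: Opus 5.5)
We will work directly from the explicit description of $H_{d,t}$ obtained above,
\begin{align*}
H_{d,t}=\{(M_t^{(b)}d,\,bp^t): b\in \mathbb{Z}_{p^{k-t}}\},
\end{align*}
together with the facts that the second coordinate of $(d,p^t)^b$ is $bp^t \bmod p^k$ and that the projection $\pi$ onto $\mathbb{Z}_{p^k}$ is injective on $H$. The latter holds because $H\cap(A\times\{0\})$ is trivial in \cref{prob:reduced}, as shown in \cref{lem:quotient-cases}. The whole argument then reduces to a statement about which elements of $\mathbb{Z}_{p^k}$ lie in $\pi(H)=\langle p^t\rangle$, followed by a computation of the preimage.

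For the first item, suppose $s<t$ and some $(a,p^s)\in H$. Then $p^s\in\pi(H)=\langle p^t\rangle=p^t\mathbb{Z}_{p^k}$, so $p^s\equiv bp^t \pmod{p^k}$ for some integer $b$. Since $s<t\leq k-1<k$, reducing modulo $p^{s+1}$ gives $p^s\equiv 0 \pmod{p^{s+1}}$, because $p^{s+1}\mid p^t$. This is a contradiction, so no such element exists.

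For the second item, take $s\geq t$ with $s\leq k-1$, which is implicit since $p^s$ is a nonzero element of $\mathbb{Z}_{p^k}$. Put $b=p^{s-t}$. Then $(d,p^t)^{b}=(M_t^{(p^{s-t})}d,\,p^s)$, using the product formula $(d,p^t)^b=\bigl(\sum_{j=0}^{b-1}\varphi^{jp^t}(d),\,bp^t\bigr)$ and $\varphi^{jp^t}(d)=\mu^{jp^t}d$ from \cref{eq:M}. This is a one-line induction on $b$ with the multiplication rule $(g,k)(g',k')=(g+\varphi_k(g'),k+k')$. Uniqueness follows from injectivity of $\pi|_H$: two elements of $H$ with the same second component $p^s$ must coincide.

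The only delicate points are bookkeeping. First, the exponent $b$ lives in $\mathbb{Z}_{p^{k-t}}$, and $p^{s-t}<p^{k-t}$, so $b=p^{s-t}$ is a legitimate representative. Second, we must confirm that $M_t^{(b)}$ is evaluated at the integer representative $b$, which is consistent with the description of $H_{d,t}$ above, where $b$ ranges over $\{0,\dots,p^{k-t}-1\}$. I expect no real obstacle beyond stating the injectivity of $\pi|_H$ cleanly.
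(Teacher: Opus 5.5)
Your proposal is correct and follows essentially the same route as the paper. Both read the second coordinate $mp^t$ off the explicit parametrization of $H_{d,t}$, rule out $s<t$ by divisibility, and for $s\geq t$ take $m=p^{s-t}$. The only cosmetic difference is in uniqueness: you get it from injectivity of $\pi|_H$, which makes the trivial-intersection hypothesis of \cref{prob:reduced} explicit. The paper instead solves $mp^t\equiv p^s\pmod{p^k}$ to get $m\equiv p^{s-t}\pmod{p^{k-t}}$, relying implicitly on $H$ being bijectively parametrized by $m\in\mathbb{Z}_{p^{k-t}}$.
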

\begin{proof}
    Any element of $H$ has the form $(M_t^{(m)}d,mp^t)$ for $m\in \mathbb{Z}_{p^{k-t}}$. If $s<t$, then the second coordinate can't be $p^s$ since $m p^t$ is always divisible by $p^t$ but $p^s$ is not. If $s\geq t$, then we want $mp^t=p^s \pmod{p^k}$, which implies $m=p^{s-t} \pmod{p^{k-t}}$. This choice of $m$ gives the unique element $h_s$. Note that when $s=t$ we have that $h_t=(d,p^t)$.
\end{proof}
Based on this lemma, we construct a test to detect if for a given $s$ we have that $s\geq t$. Define for $s\in \{0,\cdots, k-1\}$ and $c\in \mathbb{Z}_N$,
\begin{align}
    V(s,c) = \begin{cases}
    1 & \text{if } f(0,0)=f(c,p^s), \\
    0  & \text{otherwise. }  
\end{cases}
\end{align}
For the hiding oracle $f:G\to S$, we have that $f(e)=f(g) \iff g\in H$, where $e=(0,0)$. Therefore, we have that $V(s,c)=1 \iff (c,p^s)\in H$. For a given $s$ and $c$, an algorithm can compute $V(s,c)$ by doing the following:
\begin{itemize}
    \item Query oracle $f$ on $(0,0)$ and get $f(0,0)$.
    \item Query the oracle on $(c,p^s)$ and obtain $f(c,p^s)$.
    \item Compare if $f(0,0)$ and $f(c,p^s)$ are equal. If they are, output $1$, otherwise output $0$. 
\end{itemize}
We now describe the full algorithm. Let
$\epsilon\in(0,1)$ be the desired error probability. Denote by
$\mathcal{P}_s$ the fixed-$s$ procedure from \cref{sec:fixed-t}. Thus, when the hidden subgroup is of the form $H_{c,s}=\langle(c,p^s)\rangle$,
the procedure $\mathcal{P}_s$ outputs \(c\) with probability at least $q_0:=\frac{\varphi(N)p}{N^2}$.
Indeed, \cref{sec:app-PGM} gives this as a lower bound for the success probability of the PGM.
If the action of $\langle p^s\rangle$ on $\mathbb Z_N$ is
trivial, the corresponding fixed-$s$ problem is Abelian and can
instead be solved using the Abelian HSP algorithm with success
probability at least $q_0$.
Set
\begin{align}
R:=
\left\lceil
\frac{1}{q_0}\log\left(\frac{k}{\epsilon}\right)
\right\rceil
=
\left\lceil
\frac{N^2}{\varphi(N)p}
\log\left(\frac{k}{\epsilon}\right)
\right\rceil .
\end{align}

The algorithm is as follows.
\begin{enumerate}
    \item Query the oracle once on the identity and store
    $f(0,0)$.

    \item For each $s\in\{0,1,\ldots,k-1\}$, in increasing
    order, repeat the following procedure $R$ times:
    \begin{enumerate}
        \item Prepare a fresh coset state and run the fixed-$s$
        procedure $\mathcal{P}_s$, obtaining a classical output
        $c\in\mathbb Z_N$.

        \item Compute $V(s,c)$, equivalently compare
        $f(c,p^s)$ with the previously stored value $f(0,0)$.

        \item If $V(s,c)=1$, output $H=\langle(c,p^s)\rangle$ and halt.
    \end{enumerate}
    \item If none of the tests accepts, output$H=\{(0,0)\}$.
\end{enumerate}
The verification step is important. Although the fixed-$s$
procedure is guaranteed to recover the hidden parameter only when
$s$ is the correct value, the test \(V(s,c)\) prevents the
algorithm from accepting an element which does not belong to the
hidden subgroup.
\begin{proposition}
The algorithm above outputs a generating set for the hidden
subgroup with probability at least $1-\epsilon$.
\end{proposition}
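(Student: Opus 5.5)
Two facts carry the argument: verification never produces a false positive, and at the correct value of $s$ the repeated fixed-$s$ procedure succeeds with high probability. \textbf{Soundness.} By definition of the hiding oracle, $V(s,c)=1$ holds if and only if $(c,p^s)\in H$. So whenever the algorithm halts in step 2(c), $(c,p^s)$ is a genuine element of $H$. In \cref{prob:reduced}, $H$ is either trivial or equal to $\langle(d,p^t)\rangle$. If $H$ is trivial, no element $(c,p^s)$ lies in $H$, so the test never accepts and the output $\{(0,0)\}$ is correct with probability $1$. If $H=\langle(d,p^t)\rangle$, \cref{lem:verification} says that no element with second coordinate $p^s$ exists for $s<t$. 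Hence no acceptance can occur during the iterations $s=0,\dots,t-1$, and the loop always reaches $s=t$.

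\textbf{Completeness at $s=t$.} At $s=t$, \cref{lem:verification} shows that the unique element of $H$ with second coordinate $p^t$ is $(d,p^t)$. Therefore an acceptance at $s=t$ outputs exactly $\langle(d,p^t)\rangle=H$. Each of the $R$ runs of $\mathcal{P}_t$ uses a fresh coset state, so the runs are independent. Each run returns $d$ with probability at least $q_0$; this comes from the PGM bound of \cref{sec:app-PGM}, or from the Abelian HSP algorithm when $\langle p^t\rangle$ acts trivially. The probability that all $R$ runs fail is therefore at most
\[
(1-q_0)^R\le e^{-q_0R}\le \frac{\epsilon}{k}\le\epsilon .
\]

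\textbf{Failure at $s=t$ and later iterations.} If every run at $s=t$ fails, the algorithm may continue to $s>t$ and accept $h_s=(M_t^{(p^{s-t})}d,p^s)$. This element generates only a proper subgroup of $H$, which would be a wrong output. We do not need to analyse these later iterations: the event ``the algorithm outputs something other than $H$'' is contained in the event ``all $R$ runs at $s=t$ fail,'' whose probability is at most $\epsilon/k$. A union bound over $s$ is only needed if one also wants to allow error in the trivial case, and that case has none. This gives overall success probability at least $1-\epsilon$.

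The main obstacle is the remark about $s>t$. The argument must rely on the monotone order of the search, which reaches the correct $t$ before any larger $s$. It must also rely on independence of the fresh coset states at $s=t$. Finally, the quoted success probability $q_0$ of $\mathcal{P}_t$ has to hold for the actual hidden pair $(d,t)$ and for every $d$, which is what the worst-case lower bound in \cref{sec:app-PGM} provides.
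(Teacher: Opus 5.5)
Your proposal is correct and follows the paper's proof essentially step by step. Soundness of the test $V(s,c)$ rules out errors in the trivial-subgroup case and at every stage $s<t$; the $R$ independent runs of $\mathcal{P}_t$ on fresh coset states miss $d$ with probability at most $(1-q_0)^R\le \epsilon/k$; and every incorrect output, whether a proper subgroup $\langle h_s\rangle$ with $s>t$ or the trivial subgroup, lies inside that failure event. Your side remark that the factor $k$ inside $R$ is not needed agrees with the paper, which only uses $\epsilon/k\le\epsilon$ at the end.
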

\begin{proof}
First suppose that the hidden subgroup is trivial. For every
$s\in\{0,\ldots,k-1\}$ and every $c\in\mathbb Z_N$, the element
$(c,p^s)$ is nontrivial because $p^s\neq 0\pmod {p^k}$.
Consequently, $V(s,c)=0$
for every candidate produced by the algorithm. The algorithm
therefore reaches the last step and outputs
$H=\{(0,0)\}$. Thus, in the trivial-subgroup case, the output is
correct with probability one.

Now suppose that the hidden subgroup is $H=H_{d,t}=\langle(d,p^t)\rangle$ for some $t\in\{0,\ldots,k-1\}$. Consider first a stage
$s<t$. By \cref{lem:verification}, there is no element of \(H\) whose second
coordinate is $p^s$. Hence
$$
V(s,c)=0
$$
for every $c\in\mathbb Z_N$. It follows that the algorithm cannot
halt at a stage $s<t$, regardless of the output of
$\mathcal{P}_s$.

At the stage $s=t$, Lemma~21 shows that the unique element of
$H$ whose second coordinate is $p^t$ is $h_t=(d,p^t)$. 
Therefore,
$$
V(t,c)=1
\quad\Longleftrightarrow\quad
c=d.
$$
Each independent execution of $\mathcal{P}_t$, using a fresh
coset state, outputs $d$ with probability at least $q_0$.
Thus, the probability that none of the $R$ executions produces
$d$ is at most
$$
(1-q_0)^R
\leq
e^{-q_0R}
\leq
\frac{\epsilon}{k}
\leq
\epsilon.
$$
Whenever one of the executions produces $d$, the verification test accepts and the algorithm outputs
$$
H=\langle(d,p^t)\rangle,
$$
which is the correct hidden subgroup.

For completeness, suppose that all $R$ executions at $s=t$
fail and the algorithm proceeds to some $s>t$. By \cref{lem:verification}, the
unique element of $H$ whose second coordinate is $p^s$ is
$$
h_s
=
\left(
M_t^{(p^{s-t})}d,p^s
\right)
=
(d,p^t)^{p^{s-t}}.
$$
Thus, if a candidate is accepted at stage $s$, the algorithm
outputs the subgroup
$$
\langle h_s\rangle
=
\left\langle(d,p^t)^{p^{s-t}}\right\rangle.
$$
Since $(d,p^t)$ has order $p^{k-t}$, the element $h_s$ has
order
$$
p^{k-s}<p^{k-t}.
$$
Therefore, $\langle h_s\rangle<H$
whenever $s>t$. Such an output would be incorrect. However, the
algorithm can reach a stage $s>t$ only if all $R$ executions at
the true stage $t$ have failed. Hence the probability of
outputting such a proper subgroup is at most
\[
(1-q_0)^R\leq\frac{\epsilon}{k}.
\]
Similarly, if the algorithm reaches the end and incorrectly
outputs the trivial subgroup, then it must also have missed \(d\)
in every execution at stage \(t\). This event has the same
probability bound.

Consequently, every possible incorrect output is contained in the
event that all \(R\) fixed-\(t\) executions fail. Therefore,
\[
\Pr[\text{incorrect output}]
\leq
(1-q_0)^R
\leq
\frac{\epsilon}{k}
\leq
\epsilon,
\]
which proves the claim.
\end{proof}
We finally analyze the running time. The algorithm makes at most
\(kR\) calls to a fixed-\(s\) procedure and at most \(kR+1\)
queries for the verification tests. Since
\[
\frac{1}{q_0}
=
\frac{N^2}{\varphi(N)p}
=
\frac{N}{p}\frac{N}{\varphi(N)}
\]
and
\[
\frac{N}{\varphi(N)}=O(\log\log N),
\]
we have
\[
R
=
O\left(
\frac{N}{p}\log\log N
\log\left(\frac{k}{\epsilon}\right)
\right).
\]
Under the assumption
$$
\frac{N}{p}=\operatorname{poly}(\log N),
$$
the number of repetitions is polynomial in
$\log N$ and $\log(k/\epsilon)$. Therefore, since the
fixed-\(s\) procedure is efficient, the complete linear-search
algorithm runs in quantum time polynomial in $\log N, k\log p$, and $\log(1/\epsilon)$.

The value $\varphi(N)$ need not be supplied as part of the
input. It may be computed efficiently after factoring $N$ with
Shor's algorithm. Alternatively, one may choose $R$ using a
standard lower bound for $\varphi(N)/N$.

\subsection{The hidden subgroup problem with scalar action}
\label{sec:scalar-action-hsp}
In this section we consider the hidden subgroup problem in $G=A\rtimes_\varphi \mathbb{Z}_{p^k}$, where $A$ is a finite Abelian group and the action of $\mathbb{Z}_{p^k}$ on $A$ corresponds to scalar multiplication (in a sense defined below). We show that HSP in this family reduces to the Hidden Multiple Shifts (HMS) problem of \cite{ivanyos2018learninglinearfunctionssubset}. We give a descriptions of the HMS \cref{sec:HMS-explanation}. The resulting algorithm is efficient when the generator rank of $A$ is bounded and the exponent of $A$ is not much larger than $p$. The algorithm is correct for every finite Abelian group $A$.

\subsubsection{The hidden multiple shifts problem}\label{sec:HMS-explanation}
We give briefly review the Hidden Multiple Shift (HMS) problem as described in \cite{ivanyos2018learninglinearfunctionssubset}.
The input to this problem is an oracle $f_s:\mathbb{Z}_q^n \times R \to S$, for some $s\in \mathbb{Z}_q^n$, $R\subseteq \mathbb{Z}_q$ with $q>1$ an integer and $S$ is a set of labels (like in the HSP). This oracle has the property that $f_s(x,h)=f(x-hs)$ for $x\in\mathbb{Z}_q^n$ and $h\in R$ and for some injective function $f:\mathbb{Z}_q^n\to S$. The set $R$ can be thought of as labeling the shifts of the function $f$. The output of the HMS problem is the shift $s\in \mathbb{Z}_q^n$.

As noted in \cite{ivanyos2018learninglinearfunctionssubset}, when $q$ is a composite number, there is an ambiguity in the solution to the HMS problem, due to the fact that the oracle only shows the shifted functions indexed by $h\in R$. If all those $h\in R$ are congruent modulo some divisor of $q$, then changing $s$ in certain specific directions just translates the unknown function $f$ such that the oracle cannot detect the change. 

For completeness, we give more details regarding this ambiguity in the solution of the HMS problem, but the original explanation can be found in \cite{ivanyos2018learninglinearfunctionssubset}. Let $\delta(q,R)$ be the largest divisor of $q$ such that $\delta(q,R)$ divides $h-h'$ for all $h,h'\in R$, i.e., define it as
$$
\delta(q,R)
:=
\gcd\bigl(q,\{h-h_0:h\in R\}\bigr).
$$
Equivalently, if $h_0\in R$ is fixed, then $\delta=\delta(q,R)$ is the largest divisor of $q$ such that $h=h_0 \pmod{\delta}$ for all $h\in R$.  Let $s'\in\frac{N}{\delta}\mathbb{Z}^n_q$ and write $s'=\frac{q}{\delta}y$. Moreover, $h-h_0=\delta a_h$ for some $a_h\in \mathbb{Z}_q$. Then, we have that $(h-h_0)s'=qa_h y= 0 \pmod{q}$ for every $h\in R$. Therefore, we have that
\begin{align}
    f_{s+s'}(x,h)&= f(x-h(s+s')),\\
    &= f(x-hs-h_0 s'),\\
    &=f'(x-hs),
\end{align}
where in the last line we have defined the injective function $f'(x)=f(x-h_0 s')$. Therefore, the oracle with hidden shift $s$ and function $f'$ cannot be distinguished from the hidden shift $s+s'$ with function $f$. Given the preceding discussion, we define the HMS problem below.
\[
\begin{array}{ll}
\mathsc{Hidden Multiple Shift}(q,n,r) & \\[2mm]
\textbf{Input: }\begin{array}[t]{l}
\text{A set $R\subseteq \mathbb{Z}_q$ such that $\abs{R}=r$.}\\
\text{Oracle access to a function $f_s:\mathbb{Z}_q^n\times R \to S$, where $s\in\mathbb{Z}_q^n$, and}\\
\text{$f_s(x,h)=f(x-hs)$ for some injective $f:\mathbb{Z}_q^n\to S$.}
\end{array}
\\[1mm]
\textbf{Output: }  \text{$s \pmod{\frac{q}{\delta(q,R)}}$.}
\end{array}
\]
In \cite{ivanyos2018learninglinearfunctionssubset} a quantum algorithm to solve the HMS problem was given, and we cite the result below.
\begin{theorem}[Theorem 1.6 in \cite{ivanyos2018learninglinearfunctionssubset}]\label{thm:HMS}
There is a quantum algorithm that solves
$\operatorname{HMS}(q,n,r)$ with high probability in time
\[
O\left(
\operatorname{poly}(n)
\left(\frac{q}{r}\right)^{n+O(1)}
\right).
\]
\end{theorem}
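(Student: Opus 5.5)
The statement immediately above is \cref{thm:HMS}, which is cited from \cite{ivanyos2018learninglinearfunctionssubset}. I would reprove it by Fourier sampling, aiming for a per-sample guessing step whose success probability is about $r/q$.

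\textbf{Step 1 (a single sample).}
\begin{itemize}
\item Prepare the uniform superposition $\frac{1}{\sqrt{q^n r}}\sum_{x\in\mathbb Z_q^n}\sum_{h\in R}\ket{x,h}$ and query $f_s$ into a label register.
\item Measure the label. Because $f$ is injective, $f(x-hs)$ pins down $z=x-hs$, so the state collapses to $\frac{1}{\sqrt r}\sum_{h\in R}\ket{z+hs,h}$ for a uniformly random $z$.
\item Apply the quantum Fourier transform over $\mathbb Z_q^n$ to the first register. This gives $\sum_{y}\omega_q^{y\cdot z}\ket{y}\otimes\frac{1}{\sqrt{q^n r}}\sum_{h\in R}\omega_q^{h(y\cdot s)}\ket{h}$.
\item Measure $y$. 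The outcome $y$ is uniform, and the residual state is $\ket{\chi_\lambda}=\frac{1}{\sqrt r}\sum_{h\in R}\omega_q^{h\lambda}\ket{h}$ with $\lambda=y\cdot s \bmod q$.
\end{itemize}
Each sample therefore yields a known uniformly random $y$ together with a state that encodes the unknown scalar $\lambda$ through the phases of a linear function restricted to the subset $R$.

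\textbf{Step 2 (guessing $\lambda$).} I would embed $\mathbb C^R$ into $\mathbb C^{\mathbb Z_q}$ and measure in the Fourier basis of $\mathbb Z_q$. The outcome equals $\lambda$ with probability $|\langle \widehat{\lambda}|\chi_\lambda\rangle|^2=\frac{r}{q}$. When all $h\in R$ are congruent modulo $\delta=\delta(q,R)$, the phases $\omega_q^{h\lambda}$ only depend on $\lambda$ up to a global phase along multiples of $q/\delta$. In that case the measurement can only be correct modulo the corresponding ambiguity, which is exactly why the output is $s \bmod q/\delta(q,R)$.

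\textbf{Step 3 (solving for $s$).} Collect $m=n+O(1)$ samples $(y_i,\lambda_i)$. With constant probability the random $y_i$ generate $\mathbb Z_q^n$, which I would check by a Smith normal form argument over $\mathbb Z_q$. With probability $(r/q)^m$ every guess $\lambda_i$ is correct. In that event the linear system $y_i\cdot s=\lambda_i \pmod q$ determines $s$ modulo $q/\delta$, and it can be solved classically via Smith/Hermite normal form.

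\textbf{Step 4 (verifying a candidate).} For a candidate $\tilde s$, pick random $x$ and pairs $h,h'\in R$ and test whether $f_s(x+h\tilde s,h)=f_s(x+h'\tilde s,h')$. By injectivity of $f$, this equality holds for all choices exactly when $(h-h')(\tilde s-s)=0$ for all $h,h'\in R$, that is, when $\tilde s\equiv s \pmod{q/\delta}$. A constant number of random tests suffices to reject a wrong candidate with high probability.

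\textbf{Running time.} Repeating Steps 1–4 $O((q/r)^{n+O(1)})$ times, with $\mathrm{poly}(n)$ work per round, gives the claimed bound.

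\textbf{Main obstacle.} I expect the hardest step to be making the per-sample success probability rigorous when $q$ is composite. Two things need care. First, a wrong $\lambda_i$ must never be consistent with the system in a way that the verification misses; the verification in Step 4 handles this. Second, one needs enough independent constraints modulo every prime power dividing $q/\delta$. To get these, I would first analyze the case of prime powers $q$ and then combine across primes using the Chinese remainder theorem.
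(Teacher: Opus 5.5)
\textbf{Verdict: Steps 1--3 are sound, but Step 4 (verification) fails as written.} The paper does not prove this statement; it quotes it from the HMS paper, so your argument has to stand on its own. The failure in Step 4 is concrete.

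First, the test does not depend on $x$ at all, since $f_s(x+h\tilde s,h)=f\bigl(x+h(\tilde s-s)\bigr)$, so choosing $x$ at random does nothing. Second, and more seriously, a constant number of random pairs $(h,h')$ need not detect a wrong candidate. Take $n=1$, $q$ even, $r\le q/2$, and $R=\{0,2,4,\ldots,2(r-2)\}\cup\{1\}$. Then $\delta(q,R)=1$, so the correct output is $s$ itself. Let $\tilde s=s+q/2$. A pair rejects $\tilde s$ only when $h-h'$ is odd, which happens for a fraction $2(r-1)/r^2<2/r$ of pairs.

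Moreover, Step 3 produces this $\tilde s$ almost as often as $s$. For odd $y$, the Fourier outcome $\lambda+q/2$ has probability $(r-2)^2/(qr)$, compared with $r/q$ for $\lambda$. For even $y$, the two candidates predict the same value. So for large $r$ your algorithm, as written, returns a wrong shift with constant probability.

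\textbf{Repair: make the test exact rather than random.}
\begin{itemize}
\item Fix $h_0\in R$. In one classical pass over $R$, done once, retain $h$ whenever it strictly lowers the running value of $\gcd(q,h-h_0,\ldots)$.
\item Each retention at least halves this divisor of $q$. You therefore obtain $h_1,\ldots,h_k\in R$ with $k\le\log_2 q$ and $\gcd(q,h_1-h_0,\ldots,h_k-h_0)=\delta(q,R)$.
\item Accept $\tilde s$ iff $f_s(h_0\tilde s,h_0)=f_s(h_j\tilde s,h_j)$ for all $j$.
\item By injectivity of $f$, this says $(h_j-h_0)(\tilde s-s)=0$ for all $j$. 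By B\'ezout this is equivalent to $\delta(\tilde s-s)=0$, that is, $\tilde s\equiv s\pmod{q/\delta}$.
\end{itemize}
The test thus costs $O(\log q)$ queries and never accepts a wrong class. In this paper's application $0,1\in R_t$, so the single comparison $f_s(0,0)=f_s(\tilde s,1)$ already suffices. With this change your argument gives the stated $(q/r)^{n+O(1)}$ bound, up to $\operatorname{poly}(\log q)$ factors.

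\textbf{On the obstacle you anticipated.} Composite $q$ is not where the difficulty lies. The Fourier measurement returns $\lambda$ with probability exactly $r/q$ for every $q$, and all $\mu\equiv\lambda\pmod{q/\delta}$ are equally likely. A prime-by-prime argument is needed only for the claim that $n+c$ uniform $y_i$ generate $\mathbb{Z}_q^n$. This holds iff their reductions modulo every prime $p\mid q$ span $\mathbb{F}_p^n$. That fails with probability at most $\sum_{p\mid q}p^{-c}/(p-1)$, which is a constant below $1$ for every fixed $c\ge 1$.
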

\subsubsection{Input model and scalar actions}\label{sec:input-hms}
As usual in the HSP, the group $A\rtimes \mathbb{Z}_{p^k}$ is given as input through its generators. We assume that $A$ is given in invariant-factor form,
$$
A\cong
\mathbb Z_{N_1}\oplus\cdots\oplus\mathbb Z_{N_\rho},
\qquad
1<N_1\mid N_2\mid\cdots\mid N_\rho.
$$
The number $\rho$ is the generator rank of $A$, that is, the
minimum number of elements needed to generate $A$. The exponent of
$A$ is $E:=\operatorname{Exp}(A)=N_{\rho}.$
 Every finite Abelian group admits such a
decomposition \cite{DummitFoote}.
We assume that the cyclic generators and the
corresponding coordinate maps are part of the input. Alternatively,
they may be computed using a finite Abelian group decomposition
algorithm such as that of~\cite{cheung2001decomposing}.

Every integer $u$ acts on the additive group $A$ by scalar
multiplication, i.e., my the map $a\longmapsto ua$.
Since $Ea=0$ for every $a\in A$, this action depends only on the
residue class of $u$ modulo $E$. Moreover, multiplication by $u$ is
an automorphism of $A$ whenever $u\in\mathbb Z_E^\times$.

\begin{definition}[Scalar action]
\label{def}
An action of $\mathbb Z_{p^k}$ on $A$ is called scalar if there is an element $\mu\in\mathbb Z_E^\times$
such that
$$
\mu^{p^k}=1\pmod E
$$
and $\varphi^b(a)=\mu^b a$ for every $a\in A$ and $b\in\mathbb Z_{p^k}$.
\end{definition}
In invariant-factor coordinates, this means
$$
\varphi^b(a_1,\ldots,a_{\rho})
=
\bigl(
\mu^b a_1\bmod N_1,\ldots,
\mu^b a_{\rho}\bmod N_{\rho}
\bigr).
$$
Thus the same scalar $\mu^b$ acts on every cyclic component of
$A$.
The significance of scalarity is that every subgroup of
$A$ is invariant under the action.
\subsubsection{Reduction of an arbitrary hidden subgroup}
\label{subsec:reduction-subgroup}
Following the reduction in \cref{sec:reduction-general}, we show that solving the HSP in $A\rtimes \mathbb{Z}_{p^k}$ can be simplified by first considering an Abelian ``slice'' of the whole group.
Define $H_1:=H\cap(A\times\{0\})=A_1\times\{0\}$ and note that the group $A\times\{0\}$ with hidden subgroup $H_1$ is an instance of the Abelian HSP. Thus we can recover the generators of $H_1$ and also $A_1$.
We next show that $H_1$ is normal in $G$. Let $a\in A_1$ and
$(x,b)\in G$, then we have that
$$
(x,b)(a,0)(x,b)^{-1}
=
(\mu^b a,0).
$$
Note that $A_1$ is closed under multiplication by the integer $\mu^b$, and hence $\mu^b a\in A_1$.
Therefore, $(x,b)(a,0)(x,b)^{-1}\in A_1\times\{0\}=H_1$ and it follows that $H_1\trianglelefteq G$.
Therefore, unlike the case of a general action as in \cref{sec:reduction-general}, it is not necessary
to descend through the sequence of subgroups $A\rtimes\langle p\rangle, A\rtimes\langle p^2\rangle,\cdots$, etc.
For a scalar action, $H_1$ is normal and one may pass directly to the quotient by $H_1$. Let $Q=A/A_1$ and note that $Q$ is an Abelian group. We can consider the invariant-factor decomposition of $Q$.
\begin{align}\label{eq:Q-decomp}
Q
\cong
\mathbb Z_{q_1}\oplus\cdots\oplus\mathbb Z_{q_m},
\qquad
1<q_1\mid q_2\mid\cdots\mid q_m.
\end{align}
With $m\leq\rho$. Let $E_Q:=\operatorname{Exp}(Q)=q_m$
when $Q$ is nontrivial. The invariant-factor decomposition of $Q$ can be obtained using the algorithm in \cite{cheung2001decomposing}. This algorithm also gives efficiently implementable maps $\pi_A:A\longrightarrow Q$ and $s_Q:Q\longrightarrow A$
such that $\pi_A(s_Q(x))=x$
for every $x\in Q$. Here $\pi_A$ is the quotient homomorphism and
$s_Q$ is a canonical choice of representative.
Since $Q$ is a quotient of $A$, $E_Q\mid E$.
Scalar multiplication by $\mu$ is also induced in $Q$. Indeed, let
$$
\overline\varphi^b(a+A_1)
:=
\mu^b a+A_1.
$$
This is well defined because $\mu^bA_1=A_1$.
We continue to write $\mu$ for its residue class modulo $E_Q$. The quotient group is therefore $\overline G:=Q\rtimes_{\mu}\mathbb Z_{p^k}$.

Define $\pi:G\longrightarrow\overline G$ by
$\pi(a,b)=(\pi_A(a),b)$.
Then $\pi$ is a surjective homomorphism with $\ker(\pi)=H_1$.
Consequently,  $G/H_1\cong Q\rtimes_{\mu}\mathbb Z_{p^k}$.

An HSP oracle $\overline f:\overline G\longrightarrow S$ can be defined naturally by $\overline f(\pi(g)):=f(g)$. 
The function $\overline f$ is well defined and hides $\overline H:=\pi(H)\cong H/H_1$.
Moreover, it can be implemented using
$$
\overline f(x,b)
=
f(s_Q(x),b).
$$
Just as in \cref{lem:quotient-cases}, we have that $\overline H\cap(Q\times\{0\})=
\{(0,0)\}$.
Moreover, by \cref{lem:quotient-cases} it follows that $\overline H$ is either trivial or has the form $\overline H=\langle(d,p^t)\rangle$
for some $d\in Q$ and $t\in\{0,\ldots,k-1\}$.
In the latter case, $(d,p^t)$ has order $p^{k-t}$.
Thus the original HSP has been reduced to \cref{prob:reduced}. We solve
this problem below using the algorithm for the HMS in \cite{ivanyos2018learninglinearfunctionssubset}.

If $Q$ is trivial, then $\overline G\cong\mathbb Z_{p^k}$ is Abelian. In that case, we solve the quotient HSP using the
Abelian HSP algorithm and lift its generators to $G$. We therefore
assume throughout the remainder of the section that $Q$ is
nontrivial.

Suppose first that $t$ is known and $\overline H
=
\langle(d,p^t)\rangle$. Let $\nu_t:=\mu^{p^t}\pmod{E_Q}$
and, for $b\geq0$,
$$
M_t^{(b)}
:=
\sum_{j=0}^{b-1}\nu_t^j
\pmod{E_Q}.
$$
A direct induction using the semidirect-product multiplication
gives $
(d,p^t)^b=\bigl(M_t^{(b)}d,bp^t\bigr)$.
Here $M_t^{(b)}d$ denotes scalar multiplication in $Q$.
If $\nu_t=1\pmod{E_Q}$ then $\langle p^t\rangle$ acts trivially on $Q$. Hence $Q\rtimes_{\mu}\langle p^t\rangle
\cong
Q\times\mathbb Z_{p^{k-t}}$ is Abelian and the fixed-$t$ problem can therefore be solved directly
using the Abelian HSP algorithm.

Suppose that $\nu_t\neq1\pmod{E_Q}$. Looking towards the solving of the HMS define
\begin{align}\label{eq:R_t_def}   
R_t
:=
\left\{
M_t^{(0)},M_t^{(1)},\ldots,M_t^{(p-1)}
\right\}
\subseteq\mathbb Z_{E_Q}.
\end{align}
\begin{lemma}
\label{lem:Rt-cardinality}
If $\nu_t\neq1$, then $|R_t|=p.$
\end{lemma}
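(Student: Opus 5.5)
The plan is to show that the $p$ elements $M_t^{(0)},\dots,M_t^{(p-1)}$ of $\mathbb Z_{E_Q}$ are pairwise distinct, arguing by contradiction. The only tools needed are the telescoping identity for geometric sums and the fact that $\nu_t$ has $p$-power order. Working modulo the composite number $E_Q$ is where care is needed, because $\nu_t-1$ need not be invertible modulo $E_Q$. I will therefore avoid dividing by $\nu_t-1$ and only multiply by it.

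\textbf{Step 1 (order of $\nu_t$).} By the definition of a scalar action, $\mu^{p^k}\equiv 1 \pmod E$. Since $E_Q\mid E$, the same congruence holds modulo $E_Q$. Hence
\[
\nu_t^{p^{k-t}}=\mu^{p^k}\equiv 1 \pmod{E_Q}.
\]
So the multiplicative order of $\nu_t$ in $\mathbb Z_{E_Q}^\times$ divides $p^{k-t}$ and is a power of $p$. Because $\nu_t\neq 1$, this order is $p^j$ with $j\geq 1$; in particular it is at least $p$.

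\textbf{Step 2 (reduction to a single vanishing sum).} Suppose $M_t^{(a)}=M_t^{(b)}$ for some $0\le a<b\le p-1$. Subtracting gives
\[
0=\sum_{j=a}^{b-1}\nu_t^j=\nu_t^{a}M_t^{(c)}, \qquad c:=b-a\in\{1,\dots,p-1\}.
\]
Since $\nu_t$ is a unit modulo $E_Q$ (it is a power of $\mu\in\mathbb Z_E^\times$), this yields $M_t^{(c)}\equiv 0 \pmod{E_Q}$.

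\textbf{Step 3 (contradiction).} The telescoping identity $(\nu_t-1)M_t^{(c)}=\nu_t^{c}-1$ holds over the integers, hence also modulo $E_Q$. Multiplying $M_t^{(c)}\equiv 0$ by $\nu_t-1$ therefore gives $\nu_t^{c}\equiv 1\pmod{E_Q}$, so the order $p^j$ of $\nu_t$ divides $c$. But $1\le c\le p-1$ forces $\gcd(c,p)=1$, so $p^j\mid c$ is only possible if $p^j=1$. This contradicts $j\ge 1$ from Step 1.

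Hence the $p$ values $M_t^{(b)}$ are distinct, and $|R_t|=p$. The only delicate point is Step 3, where one must not cancel $\nu_t-1$; deducing $\nu_t^c\equiv 1$ by multiplication sidesteps this.
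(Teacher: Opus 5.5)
Your proof is correct and follows essentially the same route as the paper. You reduce a collision to $M_t^{(c)}=0$ using that $\nu_t$ is a unit, multiply by $\nu_t-1$ to get $\nu_t^{c}=1$, and contradict the fact that the order of $\nu_t$ is a nontrivial power of $p$ while $0<c<p$; your Step 1 simply spells out the paper's remark that this order divides $p^{k-t}$, via $\mu^{p^k}\equiv 1 \pmod{E}$ and $E_Q\mid E$.
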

\begin{proof}
Suppose that $M_t^{(a)}=M_t^{(b)}$ for some $0\leq b<a<p$, and set $j=a-b$. Then
$$
0
=
M_t^{(a)}-M_t^{(b)}
=
\nu_t^bM_t^{(j)}.
$$
Since $\nu_t\in \mathbb{Z}_{E_Q}^\times$, this implies $M_t^{(j)}=0$.
Using the fact that $(\nu_t-1)M_t^{(j)}=\nu_t^j-1$,
we obtain $\nu_t^j=1$.
The order of $\nu_t$ is a nontrivial power of $p$, because it is
nontrivial and divides $p^{k-t}$. Therefore $p\mid j$. This is
impossible because $0<j<p$ and hence the elements of $R_t$ are pairwise distinct.
\end{proof}
The nontrivial element $\nu_t$ has $p$-power order in
$\mathbb Z_{E_Q}^{\times}$. Therefore $p\mid\varphi(E_Q)$ (where $\varphi$ here is the Euler totient function, do not confuse with the automorphism of the semidirect product), and in particular $2\leq p\leq E_Q-1$.
Thus $q=E_Q$ and $r=p$ satisfy the parameter requirements of the
Hidden Multiple Shift problem.

\subsubsection{Reduction to the Hidden Multiple Shift}

Next, we show that the HSP in $\overline G=Q\rtimes \mathbb{Z}_{p^k}$ can be mapped to an instance of the HMS problem. 

Let $f_0:Q\longrightarrow S$ such that $f_0(x):=\overline f(x,0)$.

\begin{lemma}
\label{lem:f0-injective}
The function $f_0$ is injective.
\end{lemma}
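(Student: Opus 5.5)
The plan is to use the hiding property of $\overline f$ together with the fact, established just above (via \cref{lem:quotient-cases}), that $\overline H\cap(Q\times\{0\})=\{(0,0)\}$. Take $x,y\in Q$ with $f_0(x)=f_0(y)$, which means $\overline f(x,0)=\overline f(y,0)$. Since $\overline f$ hides $\overline H$, it is constant on each left coset of $\overline H$ and takes distinct values on distinct left cosets. So the equality forces $(x,0)\overline H=(y,0)\overline H$, or equivalently $(x,0)^{-1}(y,0)\in\overline H$.

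Next I compute this element in $\overline G=Q\rtimes_\mu\mathbb Z_{p^k}$. By the inverse formula from \cref{sec:group-prelim}, $(x,0)^{-1}=(-x,0)$, because $\varphi_0$ is the identity. The multiplication rule then gives $(x,0)^{-1}(y,0)=(-x+\mu^0 y,0)=(y-x,0)$. This element lies in $\overline H\cap(Q\times\{0\})$, which is trivial. Hence $y-x=0$ in $Q$, so $x=y$ and $f_0$ is injective.

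The one point needing care is whether $\overline f$ really hides $\overline H$ on $\overline G$, in particular that it separates distinct cosets. I would check this by lifting through $\pi$. If $\overline f(\pi(g))=\overline f(\pi(g'))$, then $f(g)=f(g')$, so $gH=g'H$. Applying $\pi$ gives $\pi(g)\pi(H)=\pi(g')\pi(H)$, using that $H_1=\ker\pi\le H$. Conversely, equal cosets mod $\overline H$ lift to $g^{-1}g'\in HH_1=H$. The triviality of $\overline H\cap(Q\times\{0\})$ also covers the case where $\overline H$ is trivial, so no case split on the form of $\overline H$ is needed.
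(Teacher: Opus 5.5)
Your proof is correct and is the same argument as the paper's: $f_0(x)=f_0(y)$ forces $(x,0)^{-1}(y,0)=(y-x,0)\in\overline H$, and the triviality of $\overline H\cap(Q\times\{0\})$ then gives $x=y$. Your extra check that $\overline f$ actually hides $\overline H$, which the paper takes for granted, is also sound: $H_1\le H$ makes $\overline f$ well defined, and it gives $\pi^{-1}(\pi(H))=HH_1=H$.
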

\begin{proof}
If $f_0(x)=f_0(y)$, then
$$
(x,0)^{-1}(y,0)
=
(y-x,0)
\in\overline H.
$$
Since $\overline H \cap (Q\times \{0\})=\{(0,0)\}$, this implies
$x=y$.
\end{proof}
For $0\leq b<p$, we have
$$
(x,bp^t)
=
\bigl(x-M_t^{(b)}d,0\bigr)
\bigl(M_t^{(b)}d,bp^t\bigr).
$$
The second factor belongs to $\overline H$ and thus $\overline f(x,bp^t)=f_0\bigl(x-M_t^{(b)}d\bigr)$.
We can think of $\overline f$ as having a multiple-shift structure over $Q$, where the hidden shift is $d\in Q$. Note that the factors $M_t^{(b)}\in R_t$, we $R_t$ was defined in \cref{eq:R_t_def}.
To apply the HMS
theorem exactly as stated, we convert it into an HMS instance over $\mathbb Z_{E_Q}^m$.

Recall \cref{eq:Q-decomp}. 
For each $i$, define $\lambda_i:=\frac{E_Q}{q_i}$ and let $\iota:Q\longrightarrow\mathbb Z_{E_Q}^m$ such that
$$
\iota(x_1,\ldots,x_m)
=
(\lambda_1x_1,\ldots,\lambda_mx_m).
$$
We use $\iota$ (which can be implemented efficiently) to map the elements of $Q$ to those of $\mathbb{Z}_{E_Q}^{m}$ and we prove some useful properties which we later use to show that solving the HMS in $\mathbb{Z}_{E_Q}^{m}$ also solves the problem in $Q$. 
\begin{lemma}
\label{lem:quotient-embedding}
The map $\iota$ is an injective homomorphism and satisfies $\iota(hx)=h\iota(x)$ for every integer $h$ and every $x\in Q$.
\end{lemma}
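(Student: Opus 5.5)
The map $\iota$ takes $(x_1,\ldots,x_m)$ to $(\lambda_1x_1,\ldots,\lambda_mx_m)$ with $\lambda_i=E_Q/q_i$. The plan is to check the claims coordinate by coordinate, since $\iota$ is a direct sum of maps $\iota_i:\mathbb Z_{q_i}\to\mathbb Z_{E_Q}$, $x_i\mapsto\lambda_ix_i$. Everything then reduces to one fact about a single cyclic factor.

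\textbf{Well-definedness.} A representative $x_i$ of a class in $\mathbb Z_{q_i}$ is determined only up to multiples of $q_i$. Replacing $x_i$ by $x_i+q_iu$ changes $\lambda_ix_i$ by $\lambda_iq_iu=E_Qu\equiv0\pmod{E_Q}$. So $\iota_i$ is a well-defined map into $\mathbb Z_{E_Q}$. Note that $\lambda_i$ is an integer because $q_i\mid q_m=E_Q$.

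\textbf{Homomorphism and scalar compatibility.} Additivity is immediate: $\lambda_i(x_i+y_i)=\lambda_ix_i+\lambda_iy_i$ modulo $E_Q$. Scalar compatibility holds because integers commute: for any integer $h$,
$$\iota_i(hx_i)=\lambda_ihx_i=h\lambda_ix_i=h\,\iota_i(x_i).$$
Alternatively, compatibility follows directly from additivity, since multiplication by an integer $h$ is repeated addition (and negation for negative $h$) in both groups. Either way $\iota(hx)=h\iota(x)$ holds componentwise.

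\textbf{Injectivity.} This is the only step with content. Suppose $\lambda_ix_i\equiv0\pmod{E_Q}$, i.e. $E_Q\mid\lambda_ix_i$. Writing $E_Q=\lambda_iq_i$ and cancelling the positive integer $\lambda_i$ gives $q_i\mid x_i$, so $x_i=0$ in $\mathbb Z_{q_i}$. Hence each $\iota_i$ has trivial kernel, and so does the direct sum $\iota$.

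A useful byproduct is the image, which is needed later when the paper attaches coset labels. The image of $\iota$ is exactly $\lambda_1\mathbb Z_{E_Q}\oplus\cdots\oplus\lambda_m\mathbb Z_{E_Q}$. This subgroup is proper whenever some $q_i<E_Q$.
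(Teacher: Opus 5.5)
Your proposal is correct and follows essentially the same route as the paper: additivity is checked componentwise, and injectivity comes from cancelling $\lambda_i$ in $\lambda_i q_i = E_Q \mid \lambda_i x_i$ to get $q_i \mid x_i$. Your explicit well-definedness check (using $\lambda_i q_i = E_Q$) and your description of the image $\lambda_1\mathbb{Z}_{E_Q}\oplus\cdots\oplus\lambda_m\mathbb{Z}_{E_Q}$ are small additions that the paper leaves implicit.
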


\begin{proof}
The map is a homomorphism since $\iota(x_1+y_1,\cdots,x_m+y_m)=(\lambda_1 (x_1+y_1),\cdots,\lambda_m(x_m+y_m))=(\lambda_1 x_1,\cdots\lambda_m x_m)+(\lambda_1 y_1,\cdots,\lambda_m y_m)$.
If $\iota(x)=0$, then
$\lambda_ix_i=0\pmod{E_Q}$ for every $i$. Since $E_Q=\lambda_iq_i$, this implies
$x_i=0\pmod{q_i}$.
Therefore $x=0$, and $\iota$ is injective. The identity $\iota(hx)=h\iota(x)$ can easily be seen to hold.
\end{proof}
Note that $\iota$ is injective but not surjective necessarily. 
Every element $z=(z_1,\ldots,z_m)\in\mathbb Z_{E_Q}^m$
has a unique decomposition
$$
z=c(z)+\iota(\chi(z)),
$$
where $c(z)=(c_1(z_1),\cdots,c_m(z_m))$ with $0\leq c_i(z)<\lambda_i$ and $\chi(z)\in Q$.
Explicitly, the functions $c$ and $\chi$ are defined for $0\leq z_i<E_Q$ as
\begin{align}
c_i(z):=z_i\bmod\lambda_i    
\end{align}
and
\begin{align}
\chi_i(z)
:=
\frac{z_i-c_i(z)}{\lambda_i}
\pmod{q_i}.
\end{align}
Let $D:=\iota(d)\in\mathbb Z_{E_Q}^m$ correspond to the hidden shift in $Z_{E_Q}^m$ (recall that $d$ can be understood as a hidden shift for $f_0$).
Since \cref{thm:HMS} solves the HMS over $\mathbb{Z}_{E_Q}^m$ but $f_0$ is defined on $Q$. Moreover, note that $\iota(Q)\leq \mathbb{Z}_{E_Q}^{m}$ and in general both groups are not equal. We need to define a function $g$ such that there are no collisions (i.e. the function is injective). A naive attempt would be to use the map $z\mapsto f_0(\chi(z))$, but note that this would not be an injective function. Define
$$
g:\mathbb Z_{E_Q}^m
\longrightarrow
\left(
\prod_{i=1}^m
\{0,\ldots,\lambda_i-1\}
\right)\times S
$$
by
$$
g(z)
:=
\bigl(c(z),f_0(\chi(z))\bigr).
$$
One can think of $c(z)$ as a coset label and $\iota(\chi(z))$ as labeling an element inside the coset. Therefore, $g$ has multiple copies of $f_0$ in each coset.
\begin{lemma}
\label{lem:g-injective}
The function $g$ is injective.
\end{lemma}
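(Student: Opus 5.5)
The plan is to show directly that $g(z)=g(z')$ forces $z=z'$, using the uniqueness of the decomposition $z=c(z)+\iota(\chi(z))$ together with the injectivity of $f_0$ from \cref{lem:f0-injective}. Suppose $g(z)=g(z')$ for some $z,z'\in\mathbb Z_{E_Q}^m$. Comparing the first components gives $c(z)=c(z')$. Comparing the second components gives $f_0(\chi(z))=f_0(\chi(z'))$, and \cref{lem:f0-injective} then yields $\chi(z)=\chi(z')$ in $Q$. Substituting both equalities into the decomposition gives
$$
z=c(z)+\iota(\chi(z))=c(z')+\iota(\chi(z'))=z',
$$
which is the claim.

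The one step that needs checking is that the decomposition really reconstructs $z$, i.e.\ that $c$ and $\chi$ are well defined and $z=c(z)+\iota(\chi(z))$ holds coordinatewise. I would verify this coordinate by coordinate. Fix $i$ and take the representative $0\le z_i<E_Q$. Set $c_i=z_i\bmod\lambda_i$ and $u_i=(z_i-c_i)/\lambda_i$, which is an integer with $0\le u_i<E_Q/\lambda_i=q_i$. Then $\chi_i(z)=u_i\bmod q_i$, and $\lambda_i\chi_i(z)\equiv\lambda_iu_i\pmod{E_Q}$ because $\lambda_iq_i=E_Q$, so changing $u_i$ by a multiple of $q_i$ changes $\lambda_iu_i$ by a multiple of $E_Q$. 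Hence $c_i+\lambda_i\chi_i(z)\equiv c_i+\lambda_iu_i=z_i\pmod{E_Q}$, which is the required identity in each coordinate.

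Uniqueness of the decomposition is not actually needed for this argument: the reconstruction identity alone suffices, since $z$ is determined by the pair $(c(z),\chi(z))$ and injectivity of $f_0$ recovers $\chi(z)$. I expect no real obstacle beyond keeping the residues modulo $E_Q$ and modulo $q_i$ consistent, which the identity $\lambda_iq_i=E_Q$ handles.
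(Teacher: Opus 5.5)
Your proof is correct and is the same argument as the paper's: compare the two components of $g$, use the injectivity of $f_0$ from \cref{lem:f0-injective} to get $\chi(z)=\chi(z')$, and then reconstruct $z$ from the pair $(c(z),\chi(z))$. Your coordinatewise check of $z=c(z)+\iota(\chi(z))$ is accurate, and so is your remark that this reconstruction identity, rather than the uniqueness of the decomposition cited in the paper, is what the final step actually uses.
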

\begin{proof}
Suppose that $g(z)=g(z')$. Then
$c(z)=c(z')$ and $f_0(\chi(z))=f_0(\chi(z'))$.
Since $f_0$ is injective, $\chi(z)=\chi(z')$.
The uniqueness of the decomposition $z=c(z)+\iota(\chi(z))$ then gives $z=z'$.
\end{proof}
Let $
\gamma_t:\mathbb Z_p\longrightarrow R_t$ with $\gamma_t(b):=M_t^{(b)}$.
By Lemma~\ref{lem:Rt-cardinality}, $\gamma_t$ is a bijection.
For $z\in\mathbb Z_{E_Q}^m$ and $h\in R_t$, let $b:=\gamma_t^{-1}(h)$
and define
$$
\widetilde F_t(z,h)
:=
\bigl(
c(z),
\overline f(\chi(z),bp^t)
\bigr).
$$
Note that $\widetilde F$ itself can be constructed efficiently since all the functions involved can. The following proposition clarifies how $\widetilde F$ is related to the HMS problem.
\begin{proposition}
\label{prop:HMS-reduction}
For every $z\in\mathbb Z_{E_Q}^m$ and $h\in R_t$, $\widetilde F_t(z,h)=g(z-hD)$.
Thus $\widetilde F_t$ is an instance of $\operatorname{HMS}(E_Q,m,p)$ with hidden shift $D$.
\end{proposition}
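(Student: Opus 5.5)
The identity can be checked directly by unfolding both sides; it uses only \cref{lem:quotient-embedding}, the decomposition $z=c(z)+\iota(\chi(z))$, and the multiple-shift identity $\overline f(x,bp^t)=f_0\bigl(x-M_t^{(b)}d\bigr)$ established just before the statement. Fix $z\in\mathbb Z_{E_Q}^m$ and $h\in R_t$, and put $b:=\gamma_t^{-1}(h)$, so that $h=M_t^{(b)}$ with $0\leq b<p$. By definition,
\[
\widetilde F_t(z,h)=\bigl(c(z),\overline f(\chi(z),bp^t)\bigr)=\bigl(c(z),f_0(\chi(z)-hd)\bigr),
\]
where the second equality is the multiple-shift identity with $x=\chi(z)$. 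Here $hd$ is the scalar multiple in $Q$; the scalar is well defined because $h$ is a residue mod $E_Q$ and $E_Q d=0$.

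Next I would decompose $z-hD$. Since $D=\iota(d)$, \cref{lem:quotient-embedding} gives $hD=\iota(hd)$, hence
\[
z-hD=c(z)+\iota(\chi(z))-\iota(hd)=c(z)+\iota\bigl(\chi(z)-hd\bigr).
\]
This is a decomposition of the form $c'+\iota(y)$ with $0\leq c'_i<\lambda_i$ and $y\in Q$. By uniqueness of the decomposition,
\[
c(z-hD)=c(z), \qquad \chi(z-hD)=\chi(z)-hd.
\]
Therefore
\[
g(z-hD)=\bigl(c(z),f_0(\chi(z)-hd)\bigr)=\widetilde F_t(z,h),
\]
which is the claimed identity.

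The one point that needs care is the uniqueness of $z=c(z)+\iota(\chi(z))$, which the paper asserts but does not prove. In coordinate $i$, the image $\iota(Q)$ consists exactly of the multiples of $\lambda_i$ in $\mathbb Z_{E_Q}$. Since $\lambda_i\mid E_Q$, the residue $z_i \bmod \lambda_i$ is well defined on $\mathbb Z_{E_Q}$, and it determines $c_i$. The value $\lambda_i\chi_i$ is then fixed modulo $E_Q$, so $\chi_i$ is fixed modulo $q_i=E_Q/\lambda_i$. Hence both components are uniquely determined.

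Finally, to conclude that $\widetilde F_t$ is an instance of $\operatorname{HMS}(E_Q,m,p)$ with hidden shift $D$, it remains to check the hypotheses. First, $g$ is injective by \cref{lem:g-injective}. Second, $R_t\subseteq\mathbb Z_{E_Q}$ has $|R_t|=p$ by \cref{lem:Rt-cardinality}, and $2\leq p\leq E_Q-1$ as noted earlier. Third, $\widetilde F_t$ can be evaluated efficiently: $\gamma_t^{-1}$ can be computed by tabulating the $p$ values $M_t^{(b)}$, and the remaining maps $c$, $\chi$ and $\overline f$ are efficiently computable.
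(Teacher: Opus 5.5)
Your argument is correct and is essentially the paper's proof. Both write $z-hD=c(z)+\iota(\chi(z)-hd)$ via $\iota(hd)=h\iota(d)$, read off $c(z-hD)=c(z)$ and $\chi(z-hD)=\chi(z)-hd$, and match this with $\overline f(\chi(z),bp^t)=f_0(\chi(z)-M_t^{(b)}d)$; your explicit uniqueness check for $z=c(z)+\iota(\chi(z))$ fills in a step the paper only asserts. One side remark is wrong, though it is not needed for the statement: inverting $\gamma_t$ by tabulating the $p$ values $M_t^{(b)}$ costs $\Theta(p)$ time, and $p$ can be exponentially large in $\log|G|$ in the regime $\operatorname{Exp}(A)/p=\operatorname{poly}(\log|G|)$ where this reduction is used, so the paper instead recovers $b$ from $1+(\nu_t-1)h=\nu_t^b \pmod{E_Q}$ by a discrete logarithm in $\langle\nu_t\rangle$.
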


\begin{proof}
Since $D=\iota(d)$ and $\iota$ commutes with scalar multiplication,
$$
z-hD
=
c(z)+\iota(\chi(z)-hd).
$$
Subtracting $hD$ does not change the residue of the $i$-th
coordinate modulo $\lambda_i$, therefore we have that $c(z-hD)=c(z)$
and also $\chi(z-hD)=\chi(z)-hd$.
It follows that
$$
g(z-hD)
=
\bigl(
c(z),
f_0(\chi(z)-hd)
\bigr).
$$

If $h=M_t^{(b)}$, then
$$
f_0(\chi(z)-hd)
=
\overline f(\chi(z),bp^t).
$$
Therefore
$$
g(z-hD)
=
\widetilde F_t(z,h).
$$
\end{proof}
In the present HMS instance there is no ambiguity in the hidden
shift. Indeed, $M_t^{(0)}=0$ and $M_t^{(1)}=1$,
so $0,1\in R_t$. Therefore, $\delta(R_t,E_Q)=1$,
because every divisor of $E_Q$ that divides all differences of
elements of $R_t$ must divide $1$.

The output of the HMS algorithm is 
\[
D\pmod{\frac{E_Q}{\delta(R_t,E_Q)}}
=
D\pmod{E_Q}.
\]
Since $D\in\mathbb Z_{E_Q}^m$, this determines $D$ exactly.
Applying Theorem~\ref{thm:HMS} for oracle $\widetilde F$ gives a fixed-$t$ algorithm with
running time
$$
\operatorname{poly}(m,\log E_Q)
\left(\frac{E_Q}{p}\right)^{m+O(1)}.
$$
Once $D$ is recovered, the element $d\in Q$ is recovered from
$D=\iota(d)$ by
$d_i=\frac{D_i}{\lambda_i}\pmod{q_i}$.

Note that it is not necessary to list the $p$ elements of $R_t$. Given
$b\in\mathbb Z_p$, the value $M_t^{(b)}$ can be computed in time
polynomial in $\log E_Q+\log p$ using repeated squaring and
$$
M_t^{(a+b)}
=
M_t^{(a)}
+
\nu_t^aM_t^{(b)}.
$$

Conversely, if $h=M_t^{(b)}$, then
$$
1+(\nu_t-1)h
=
\nu_t^b
\pmod{E_Q}.
$$
The exponent $b$ can therefore be recovered using the quantum
discrete logarithm algorithm in the cyclic subgroup
$\langle\nu_t\rangle
\leq
\mathbb Z_{E_Q}^{\times}$
Because $0\leq b<p$ and the order of $\nu_t$ is divisible by $p$,
this value of $b$ is unique.
Thus $\gamma_t$ and $\gamma_t^{-1}$ can be implemented coherently.
This permits the preparation of a uniform superposition over $R_t$
and the coherent implementation of $\widetilde F_t$.

We have shown how to solve the problem when $t$ is fixed, for the case that $t$ is unknown, we can simply follow the method in \cref{sec:d-t}.

\section{Polynomial-time algorithm for quasi-Hamiltonian groups}\label{sec:poly-quasiHam}
In this section we describe an efficient quantum algorithm for groups which have all their subgroups as permutable. As mentioned in the introduction, this is a generalization of Hamiltonian and Abelian groups. By \cref{prop:Iwasawa-sylow-decomp}, any finite quasi-Hamiltonian group can be decomposed into modular $p$-groups. There are already efficient algorithms for the HSP when $P$ is Abelian and when $P$ is Hamiltonian as in case (1) of \cref{prop:p-group-modular}. Therefore, we give an algorithm for the non-Hamiltonian case (2). Before describing the algorithm, we give a description of the input model.

\paragraph{Input model.}
Throughout this section, we assume that finite quasi-Hamiltonian group $G$
is given by a description constructed from its Sylow
decomposition. More precisely, the input is assumed to be already factorized into the Sylow factors, i.e., the group $G$ is decomposed as
$$
G=\prod_{p\mid |G|}G_p,
$$
where each $G_p$ is the Sylow $p$-subgroup of $G$. Since $G$ is quasi-Hamiltonian, each $G_p$ corresponds to one of the cases in \cref{prop:p-group-modular}. Each is described by an encoding of the generators according to the structure of the factor.
In particular, for every
non-Hamiltonian factor $P=G_p$, the input additionally specifies
the data
\[
P=A\langle b\rangle,\qquad A\trianglelefteq P,\qquad
P/A=\langle bA\rangle,
\]
together with the parameters describing the action of $b$ on
$A$ and the generators of $A$.
Moreover, we assume that it is known how $b$ acts on the group $A$. This means that in case (2) of \cref{prop:p-group-modular} we know the value of $p$ and $s$.
This is a promise in the structure of the input. It does not restrict the class of quasi-Hamiltonian groups under consideration,
but it supplies additional computational information beyond an
arbitrary black-box description of $G$. By
\cref{prop:Iwasawa-sylow-decomp}, every finite quasi-Hamiltonian group admits a
direct product decomposition into modular Sylow $p$-subgroups,
and \cref{prop:p-group-modular} gives the corresponding structure of each
factor. Since the ambient group is promised to be
quasi-Hamiltonian, we choose to represent it in a form which makes
this promised structure explicit. 

The Sylow decomposition also reduces the HSP on $G$ to the HSP
on its factors. Let $F:G\to S$ hide a subgroup $H\leq G$, and
define $H_p:=H\cap G_p$.
The restriction of $F$ to $G_p$ hides $H_p$, since for
$x,y\in G_p$,
\[
F(x)=F(y)
\quad\Longleftrightarrow\quad
x^{-1}y\in H
\quad\Longleftrightarrow\quad
x^{-1}y\in H\cap G_p.
\]
Moreover, every subgroup of the direct product of the normal
Sylow subgroups decomposes as
\[
H=\prod_{p\mid |G|}H_p.
\]
 Consequently, it is sufficient to solve the HSP separately
in each \(G_p\) and then take the union of the resulting generating
sets as detailed in \cref{sec:Sylow-HSP}.

\subsection{Solving the HSP for groups in case (ii) of \cref{prop:p-group-modular}}
The following corollary specifies the properties fulfilled by non-Hamiltonian groups in case (2), these properties will be used in the algorithm.
\begin{corollary}[Corollary 2.4.15 in \cite{schmidt1994subgroup}]\label{cor:schmidt-nonham}
    Let $G$ be a non-Hamiltonian finite $p$-group with modular subgroup lattice. Then the exponent of $G$ is $p^n$ and there is an Abelian normal subgroup $A$ of exponent $p^k$, $b\in G$ and $s\geq 1$---or $s\geq 2$ when $p=2$---such that $G/A$ is a cyclic group of order $p^m$ (with $k,m\in \mathbb{N}$) and such that $G=A\langle b \rangle$, $s<k\leq s+m$, $b^{-1}ab=a^{1+p^s}$ for all $a\in A$. Moreover, the order of $b$ is $p^n$ and $s+m=n$.
\end{corollary}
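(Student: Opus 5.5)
Since the statement is Corollary 2.4.15 of \cite{schmidt1994subgroup}, the plan is to derive it from case~2 of \cref{prop:p-group-modular} by normalizing the pair $(A,b)$ and tracking orders with elementary $p$-adic valuation estimates. Throughout, $G$ is taken to be non-Abelian. For Abelian $G$ the inequality $s<k$ cannot hold with a nontrivial action, so this case is treated separately or excluded, as is implicit in the statement.

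\textbf{Setup.} Since $G$ is not Hamiltonian, \cref{prop:p-group-modular} gives an Abelian normal subgroup $A$ and an element $b$ with $G=A\langle b\rangle$, $G/A$ cyclic, and $b^{-1}ab=a^{r}$ where $r=1+p^s$ (and $s\geq2$ if $p=2$). Let $p^k=\operatorname{Exp}(A)$ and $|G/A|=p^m$.

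\textbf{Step 1: $s<k$.} Because $G$ is non-Abelian, conjugation by $b$ is not the identity on $A$. Hence $r\not\equiv1\pmod{p^k}$, that is, $p^k\nmid p^s$, which gives $s<k$.

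\textbf{Step 2: $k\leq s+m$.} The element $b^{p^m}$ lies in the Abelian group $A$, so it acts trivially on $A$. This forces $r^{p^m}\equiv1\pmod{p^k}$. By the lifting-the-exponent lemma, $v_p\bigl((1+p^s)^{p^m}-1\bigr)=s+m$. This is exactly where the hypothesis $s\geq2$ for $p=2$ is needed. Therefore $k\leq s+m$.

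\textbf{Step 3: order of $b$ and the exponent.} The element $c:=b^{p^m}\in A$ commutes with $b$, so $c^{r}=c$, which gives $c^{p^s}=1$ and hence $o(b)\leq p^{m+s}$.

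For a general element $ab^j$, I would compute $(ab^j)^{p^n}$ using the formula
$$(ab)^{\ell}=a^{\,r^{\ell-1}+\cdots+r+1}\,b^{\ell}\quad(\text{up to the convention of which side } b \text{ acts}).$$
One has $v_p\bigl((r^{\ell}-1)/(r-1)\bigr)=v_p(\ell)$. Combined with $k\leq s+m$, this shows that $G$ has exponent at most $p^{s+m}$.

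\textbf{Step 4: normalize the decomposition (main obstacle).} The remaining claim is that one can choose $(A,b)$ with $o(b)=p^{s+m}$ exactly, so that $\operatorname{Exp}(G)=o(b)=p^n$ with $n=s+m$. I would proceed as follows.
\begin{itemize}
\item Among all decompositions provided by \cref{prop:p-group-modular} with the same $s$, choose one that maximizes $o(b)$, or equivalently minimizes $|A|$.
\item Replacing $b$ by $ba$ with $a\in A$ changes $b^{p^m}$ by the factor $a^{(r^{p^m}-1)/(r-1)}$, which is $a^{p^m\cdot u}$ for a unit $u$. This allows $c$ to be adjusted inside $A^{p^m}$.
\item If $o(c)<p^s$ for every such choice, then I would instead shrink $A$ to a $b$-invariant subgroup $A'$ containing $\langle c\rangle$-complements. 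Every subgroup of $A$ is $b$-invariant, since $b$ acts as a power automorphism, so such an $A'$ exists. This makes $G/A'$ larger and contradicts minimality, unless $o(c)=p^s$.
\end{itemize}
Getting this extremal argument exactly right, and in particular checking that $s$ is unchanged and that the inequalities $s<k\leq s+m$ survive the modification, is the step I expect to be delicate. Once it is done, the identity $s+m=n$ and the equality $\operatorname{Exp}(G)=p^n$ follow from Step~3.
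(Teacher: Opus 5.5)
The paper gives no proof of this statement; it is quoted from Schmidt's book, so your argument has to stand on its own. Steps 1--3 are correct. So is your remark that $G$ must be non-Abelian: for Abelian $G$ the relation $b^{-1}ab=a^{1+p^s}$ forces $k\le s$. In fact the computation in Step 3 gives $\operatorname{Exp}(G)=\max(o(b),p^k)\le p^{s+m}$.

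\textbf{The gap is Step 4: your extremal argument points in the wrong direction.} Since $n$ is fixed by $G$ and $n\le s+m$ holds for every decomposition, what you must achieve is $m=n-s$. In general $m$ has to be \emph{decreased}, i.e.\ $A$ has to be \emph{enlarged}. Minimizing $|A|$ maximizes $m$ (and is not equivalent to maximizing $o(b)$), and shrinking $A$ likewise only increases $s+m$. Shrinking $A$ also typically destroys $G=A\langle b\rangle$ and the cyclicity of $G/A$.

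A concrete instance: for $p$ odd, let $G=\langle a\rangle\rtimes\langle b\rangle$ with $o(a)=p^4$, $o(b)=p^3$ and $b^{-1}ab=a^{1+p^2}$.
\begin{itemize}
\item With $A=\langle a\rangle$ one has $s=2$, $k=4$, $m=3$, but $\operatorname{Exp}(G)=p^4<p^{s+m}$.
\item Hence no $b'$ with $G=A\langle b'\rangle$ has order $p^{s+m}$. Your adjustment of $c$ inside $A^{p^m}=\langle a^{p^3}\rangle$ reaches order at most $p<p^s$.
\item A decomposition that does satisfy the corollary is $A'=\langle a,b^{p^2}\rangle$, of order $p^5>|A|$, with $b'=ba$ of order $p^4$ and $m'=2$. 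Here $A'$ is Abelian because $b^{p^2}$ is central.
\end{itemize}

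\textbf{The repair uses only tools you already have.}
\begin{enumerate}
\item \emph{Make $o(b)=p^n$.} If $o(b)<p^n$, then $n=k$ and $b^{p^{k-1}}=1$. Replace $b$ by $ba$ with $o(a)=p^k$. This keeps the coset $bA$ and the action on $A$. Moreover $(ba)^{p^{k-1}}=b^{p^{k-1}}a^{(r^{p^{k-1}}-1)/(r-1)}\neq 1$, because that exponent has $p$-adic valuation $k-1$. So $o(ba)=p^k=p^n$.
\item \emph{Enlarge $A$.} Set $A'=A\langle b^{p^{n-s}}\rangle$, noting $1\le n-s\le m$ since $s<k\le n\le s+m$. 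Then:
\begin{itemize}
\item $A'$ is Abelian, because $b^{p^{n-s}}$ acts on $A$ by $r^{p^{n-s}}\equiv 1\pmod{p^n}$ and $n\ge k$.
\item $A'$ is normal, because it contains $A$ and $G/A$ is Abelian.
\item $|G/A'|=p^{n-s}$.
\item Conjugation by $b$ is still $x\mapsto x^{1+p^s}$ on $A'$. Both maps are endomorphisms of $A'$, and they agree on $A$ and on $b^{p^{n-s}}$, since $(b^{p^{n-s}})^{p^s}=b^{p^n}=1$.
\item $\operatorname{Exp}(A')=p^k$, since $o(b^{p^{n-s}})=p^s<p^k$.
\end{itemize}
\end{enumerate}
Now $o(b)=p^n=p^{s+m'}$ with $m'=n-s$. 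The inequalities $s<k\le s+m'$ follow from your Steps 1 and 2 applied to the new decomposition, so your concern about them surviving the modification disappears.
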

The algorithm to solve the HSP in a non-Hamiltonian group $P$ which falls in case (2) of \cref{prop:p-group-modular}, consists in mapping $P$ to an Abelian $p$-group $B$ which is defined below. As will be defined shortly, the mapping is a so-called crossed isomorphism \cite{schmidt1994subgroup}, which captures the idea that $P$ and $B$ are isomorphic up to a ``twist'' in the group multiplication. This crossed isomorphism also defines a projectivity which preserves the subgroup lattice. Below, we explain how the crossed isomorphism is constructed and then use it to solve the HSP. 

Let $P=A\langle b\rangle$ be a group such as in \cref{cor:schmidt-nonham} and set $q=p^m$ and $r=1+p^s$. Consider the HSP oracle $F:P\to S$ such that $F(u)=F(v)$ iff $uH=vH$ for some $H\leq P$. Before constructing the Abelian group $B$ and the crossed isomorphism between $B$ and $P$, we define some functions that will be needed in the construction. Define the function $\mu:\mathbb{N}\to \mathbb{Z}$ such that $\mu(j)=1+r+r^2+\cdots+r^{j-1}$. The function $\mu$ is used to define an automorphism on the group $A$. For this purpose, the following number theoretic properties of $\mu$ are used.
\begin{lemma}[Equation (13) and (14) in Section 2.5 of \cite{schmidt1994subgroup}]\label{lem:schmidt-mu}
    For all $i,n \in \mathbb{N}$, $p^i\mid n$ if and only if $p^i\mid \mu(n)$. Moreover, for all $x,y,m\in\mathbb{N}$, $x=y\pmod{p^m}$ if and only if $\mu(x)=\mu(y)\pmod{p^m}$.
\end{lemma}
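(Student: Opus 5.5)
The plan is to prove the stronger statement that $\mu$ preserves $p$-adic valuation: $v_p(\mu(n))=v_p(n)$ for every $n\geq 1$, with $\mu(0)=0$ handled trivially. Both claims then follow quickly. Throughout, $r=1+p^s$ with $s\geq1$, and $s\geq 2$ when $p=2$, as in \cref{cor:schmidt-nonham}.

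\emph{Step 1 (units).} Suppose $p\nmid u$. Since $r\equiv 1 \pmod p$, we get $\mu(u)=\sum_{j=0}^{u-1}r^j\equiv u\pmod p$. Hence $p\nmid \mu(u)$.

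\emph{Step 2 (multiplicativity).} Directly from the definition,
$$\mu(ab)=\mu(a)\,\bigl(1+r^a+r^{2a}+\cdots+r^{a(b-1)}\bigr).$$
Apply this with $b=p$ and write $T_a:=\sum_{j=0}^{p-1}r^{aj}$. I will show that $v_p(T_a)=1$ for every $a\geq 1$; this is the step that needs care, and it is where the hypothesis $s\geq2$ for $p=2$ enters.

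For $p$ odd: $r^{aj}\equiv 1+ajp^s \pmod{p^{2s}}$, and $2s\geq 2$. Therefore
$$T_a\equiv p+ap^s\,\tfrac{p(p-1)}{2}\pmod{p^{2}}.$$
The second term is divisible by $p^{s+1}$, hence by $p^2$, so $T_a\equiv p\pmod{p^2}$ and $v_p(T_a)=1$.

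For $p=2$: $T_a=1+r^a$. Since $r\equiv1\pmod 4$, we have $r^a\equiv 1\pmod 4$, so $T_a\equiv 2\pmod 4$ and again $v_2(T_a)=1$. Without $s\geq2$ this fails: $r=3$, $a=1$ gives $T_1=4$.

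Writing $n=p^e u$ with $p\nmid u$, induction on $e$ using Step 2 gives $v_p(\mu(n))=v_p(\mu(u))+e$. By Step 1 this equals $e$, so $v_p(\mu(n))=v_p(n)$. In particular, $p^i\mid n$ if and only if $p^i\mid\mu(n)$, which is the first claim.

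\emph{Step 3 (congruences).} Take $x>y$; the case $x=y$ is trivial. Then
$$\mu(x)-\mu(y)=r^y+\cdots+r^{x-1}=r^y\mu(x-y).$$
Since $r^y$ is prime to $p$, Step 2 gives $v_p(\mu(x)-\mu(y))=v_p(x-y)$. Hence $p^m\mid x-y$ if and only if $p^m\mid \mu(x)-\mu(y)$, which is the second claim. The only real obstacle is the valuation computation for $T_a$ in Step 2, specifically the $p=2$ case. Everything else is bookkeeping.
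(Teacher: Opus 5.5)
Your proof is correct, and it actually establishes the sharper statement $v_p(\mu(n))=v_p(n)$.

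The paper itself does not prove this lemma; it imports Equations (13) and (14) from Schmidt. The nearest argument in the paper is inside the proof of \cref{lem:efficient-mu-inverse}. There it writes
$\mu(n)=\bigl((1+p^s)^n-1\bigr)/p^s=n+\sum_{a\geq 2}\binom{n}{a}p^{s(a-1)}$
and shows that when $p^h\mid n$ every correction term is divisible by $p^{h+1}$. This uses $v_p\binom{n}{a}\geq v_p(n)-v_p(a)$ and $s(a-1)-v_p(a)\geq 1$. Applied with $h=v_p(n)$, it gives $\mu(n)\equiv n \pmod{p^{v_p(n)+1}}$, and hence valuation preservation, in one stroke.

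You instead factor $\mu(pa)=\mu(a)T_a$, check $v_p(T_a)=1$, and induct on $v_p(n)$, in the style of lifting the exponent.

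What each route buys:
\begin{itemize}
\item Your route avoids valuations of binomial coefficients entirely.
\item It also makes transparent exactly where $s\geq 2$ is needed when $p=2$. Your example $\mu(2)=4$ for $r=3$ shows the lemma genuinely fails otherwise.
\item The binomial route delivers the congruence $\mu(n)\equiv n$ modulo $p^{v_p(n)+1}$ directly. That is the form the digit-by-digit inversion in \cref{lem:efficient-mu-inverse} needs.
\item Your finer computation, $T_a\equiv p \pmod{p^2}$ (respectively $T_a\equiv 2\pmod 4$), yields the same congruence with a little extra bookkeeping.
\end{itemize}

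Your Step 3 rests on the same cocycle identity $\mu(x+y)=\mu(x)+r^x\mu(y)$ that the paper uses in that proof.
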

Since $\mu$ is a permutation of the elements in $\mathbb{Z}_q$, the inverse $\nu=\mu^{-1}:\mathbb{Z}_q\to \mathbb{Z}_q$ exists. In our algorithm, we will require that $\nu$ can be evaluated efficiently. The next lemma shows that this is the case.
\begin{lemma}[Efficient inversion of $\mu$]
\label{lem:efficient-mu-inverse}
Let $q=p^m$,$r=1+p^s$ and $\mu(j)=\sum_{u=0}^{j-1}r^u$
where $s\geq 1$, and $s\geq 2$ when $p=2$.
Given
$i\in\{0,\ldots,q-1\}$, the unique
$j\in\{0,\ldots,q-1\}$ satisfying
$$
\mu(j)\equiv i\pmod q
$$
can be computed in time polynomial in $\log q$.
\end{lemma}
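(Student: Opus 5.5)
The plan is to use the closed form $(r-1)\mu(j)=r^j-1$ over the integers, where $r-1=p^s$. Then $\mu(j)\equiv i\pmod{q}$ becomes a statement about $r^j$ modulo $p^{m+s}$. Concretely, $\mu(j)\equiv i \pmod{p^m}$ holds if and only if $p^s\mu(j)\equiv p^s i\pmod{p^{m+s}}$, that is, if and only if
\[
r^j\equiv 1+p^s i \pmod{p^{m+s}}.
\]
So inverting $\mu$ reduces to a discrete logarithm of $1+p^s i$ to base $r=1+p^s$ in the group $U:=\{x\in\mathbb{Z}_{p^{m+s}}^\times : x\equiv 1 \pmod{p^s}\}$.

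First I would record the standard facts about this group. With $s\geq1$, or $s\geq2$ when $p=2$, the group $U$ is cyclic of order $p^m$, and it is generated by $1+p^s$. This follows from the usual lifting-the-exponent computation $(1+p^s)^{p^e}\equiv 1+p^{s+e}\pmod{p^{s+e+1}}$, which is exactly where the hypothesis $s\ge2$ for $p=2$ is needed. Because $1+p^s i\in U$, a solution $j$ modulo $p^m$ exists and is unique. This is consistent with \cref{lem:schmidt-mu}, which already tells us $\mu$ is a bijection on $\mathbb{Z}_q$.

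Second, I would compute the discrete log classically, digit by digit, by Pohlig--Hellman for a cyclic $p$-group. Suppose $j\equiv j_0+j_1p+\cdots+j_{e-1}p^{e-1}$ is already known modulo $p^e$. Set $y=(1+p^si)\,r^{-(j \bmod p^e)}\in U$; then $y$ lies in the subgroup $1+p^{s+e}\mathbb{Z}$. By the lifting estimate, $r^{p^e}\equiv 1+p^{s+e}\pmod{p^{s+e+1}}$, so the next digit is $j_e\equiv (y-1)/p^{s+e}\pmod p$. This works because the map $1+p^{s+e}x\mapsto x \bmod p$ is a homomorphism into $\mathbb{Z}_p$ on that subgroup, again using $s+e\geq 2$ when $p=2$. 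Each step costs $O(1)$ modular exponentiations and multiplications modulo $p^{m+s}$, and there are $m$ steps. The total running time is therefore polynomial in $m\log p=\log q$ and in $s\log p$. We may take $s<k\le s+m$, so $s\le m$, and the cost is polynomial in $\log q$.

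An alternative that avoids the digit recursion is a $p$-adic logarithm computation. Another is to call Shor's discrete logarithm in $\mathbb{Z}^\times_{p^{m+s}}$, which is also polynomial, and would be acceptable since the algorithm is quantum anyway. I would give the classical digit recursion as the primary argument and mention the others as fallbacks.

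The main obstacle is verifying the congruence $(1+p^s)^{p^e}\equiv 1+p^{s+e}\pmod{p^{s+e+1}}$ uniformly, including the $p=2$ case. This congruence also justifies the digit extraction. It is a routine binomial-expansion induction: the cross term $\binom{p}{2}p^{2(s+e)}$ has valuation at least $s+e+2$ for odd $p$. For $p=2$ that term has valuation $2(s+e)-1$, and this is at least $s+e+1$ precisely because $s+e\ge2$. The final answer $j$ is reduced modulo $p^m$ to lie in $\{0,\dots,q-1\}$.
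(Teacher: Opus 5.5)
\textbf{Assessment.} Your core argument is correct, but the final complexity step has a real flaw. The reduction via $(r-1)\mu(j)=r^j-1$ turns $\mu(j)\equiv i \pmod{p^m}$ into $r^j\equiv 1+p^si \pmod{p^{m+s}}$. Your group $U$ is indeed cyclic of order $p^m$ and generated by $r$, and the Pohlig--Hellman digit extraction works. (A harmless slip: for $p=2$ the cross term is $2^{2(s+e)}$, of valuation $2(s+e)$, and it must be at least $s+e+2$. The condition $s+e\geq 2$ comes out the same.)

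\textbf{The gap.} From $s<k\le s+m$ you conclude $s\le m$. These inequalities only say $0<k-s\le m$, and they are satisfied by $m=1$, $k=s+1$ for every admissible $s$. This case really occurs, e.g.\ for $\mathbb{Z}_{p^{s+1}}\rtimes_{1+p^s}\mathbb{Z}_p$ written as $A\langle b\rangle$ with $A$ the cyclic normal factor and $b$ of order $p^{s+1}$. The lemma as stated also involves no $k$ at all. Since you compute modulo $p^{m+s}$, what you have shown is a bound polynomial in $(m+s)\log p$, not in $\log q$. In the application this is still polynomial in the input size, because $p^{m+s}=p^n$ is the order of $b$, but it is not the claimed bound.

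\textbf{The fix.} If $s\geq m$, then $r\equiv 1\pmod{p^m}$, so $\mu(j)\equiv j\pmod{p^m}$ and the answer is simply $j=i$. If $s<m$, then $p^{m+s}<q^2$, and all your arithmetic is on $O(\log q)$-bit integers.

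\textbf{Comparison with the paper.} With that fix, your route is a multiplicative reformulation of the paper's. The paper lifts additively on $\mu$ itself. It proves $\mu(x+tp^h)\equiv\mu(x)+tp^h\pmod{p^{h+1}}$ via a $p$-adic valuation bound on the binomial expansion of $((1+p^s)^n-1)/p^s$. It then fixes one base-$p$ digit of $j$ per step, computing only $\mu(j_h)\bmod p^{h+1}$ by powering a $2\times 2$ matrix. It therefore never needs precision beyond $q$, so the $\log q$ bound is automatic; uniqueness is taken from \cref{lem:schmidt-mu}.

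Multiplying the paper's key congruence by $p^s$ gives $r^{x+tp^h}\equiv r^x+tp^{s+h}\pmod{p^{s+h+1}}$. This is your lifting estimate $r^{p^h}\equiv 1+p^{s+h}$ in disguise, so the two digit recursions coincide.

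Your version buys a cleaner conceptual picture: inverting $\mu$ is a discrete logarithm in a cyclic $p$-group. Existence and uniqueness then follow from cyclicity rather than from \cref{lem:schmidt-mu}, and an off-the-shelf Pohlig--Hellman or Shor routine applies. The cost is the extra factor $p^s$ of working precision, which is exactly what created the $s$-dependence above.
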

\begin{proof}
We first show that, for every $h\geq 1$, every integer $x$, and
every $t\in\{0,\ldots,p-1\}$,
\begin{align}\label{eq:mu-sum}
    \mu(x+tp^h)
\equiv
\mu(x)+tp^h
\pmod{p^{h+1}}.
\end{align}
Using $r=1+p^s$, we have
\[
\mu(n)
=
\frac{(1+p^s)^n-1}{p^s}
=
n+\sum_{a=2}^{n}\binom{n}{a}p^{s(a-1)}.
\]
Where in the first equality we have just summed up the terms of $\mu$ and in the second equality we have used the binomial theorem.
For a nonzero integer $M$, let $v_p(M)$ be the largest nonnegative integer $e$ such that $p^e \mid M$. Intuitively, $v_p(M)$ counts how many factors of $p$ occur in $M$.
If $n=tp^h$, then, for $a\geq 2$,
\begin{align}\label{eq:vp_juggle}
v_p\left(\binom{n}{a}p^{s(a-1)}\right)&=v_p\left(\binom{n}{a}\right) + v_p\left( p^{s(a-1)}\right),\nonumber \\ \nonumber
&= v_p\left(\binom{n}{a}\right) +s(a-1),\\ \nonumber
&\geq h-v_p(a)+s(a-1),\\
&\geq h+1.
\end{align}
The third line is obtained by the fact that $v_p\left(\binom{n}{a}\right)\geq h-v_p(a)$ which follows from two facts. First,  $v_p(n)\geq h$. Second, consider the binomial identity
\begin{align}
    a\binom{n}{a}=n\binom{n-1}{a-1}.
\end{align}
Which implies that
\begin{align}
    v_p\left(\binom{n}{a}\right)=v_p(n)-v_p(a)+v_p\left(\binom{n-1}{a-1} \right),
\end{align}
since the third term is $\geq 0$ and $v_p(n)\geq h$ we get the inequality of the third line in \cref{eq:vp_juggle}. The fourth line follows from the fact that $s(a-1)-v_p(a)\geq 1$. 

Consequently, with \cref{eq:vp_juggle}, we have shown that $p^{h+1}\mid \binom{n}{a}p^{s(a-1)}$ and then $\mu(n)-n$ is divisible by $p^{h+1}$. Thus,
\[
\mu(tp^h)\equiv tp^h\pmod{p^{h+1}}.
\]
Moreover, $\mu(x+y)=\mu(x)+r^x\mu(y)$ from the definition of $\mu$. Since $r^x\equiv1\pmod p$, taking $y=tp^h$ proves \cref{eq:mu-sum}.

We now show that $j$ can be recovered one base-$p$ digit at a time. Set
\[
j_1:=i\bmod p.
\]
Since $\mu(x)\equiv x\pmod p$, we have
\[
\mu(j_1)\equiv i\pmod p.
\]

Suppose inductively that $j_h$ has been computed and satisfies
\[
\mu(j_h)\equiv i\pmod{p^h}.
\]
Define
\[
\kappa_h
:=
\frac{i-\mu(j_h)}{p^h}
\pmod p
\]
and set
\[
j_{h+1}:=j_h+\kappa_hp^h.
\]
The quotient defining $\kappa_h$ is well defined modulo $p$ by the
induction hypothesis. Using \cref{eq:mu-sum},
\[
\mu(j_{h+1})
\equiv
\mu(j_h)+\kappa_hp^h
\equiv
i
\pmod{p^{h+1}}.
\]
After $m-1$ lifting steps, we obtain $j_m$ satisfying
\[
\mu(j_m)\equiv i\pmod{p^m}.
\]
By Lemma~\ref{lem:schmidt-mu}, $\mu$ is a permutation modulo
$p^m$, so this $j_m$ is the unique value $\nu(i)$.

At the $h$-th step, only the residue
$\mu(j_h)\bmod p^{h+1}$ is required. It can be computed by repeated
squaring using
\[
\begin{pmatrix}
r & 1\\
0 & 1
\end{pmatrix}^{j_h}
=
\begin{pmatrix}
r^{j_h} & \mu(j_h)\\
0 & 1
\end{pmatrix}
\pmod{p^{h+1}}.
\]
There are $m=O(\log q)$ lifting steps, each involving arithmetic on
$O(\log q)$-bit integers. Hence the computation takes time
polynomial in $\log q$.

\end{proof}
The following property is useful in defining an automorphism in $A$.
\begin{lemma}
$\tau:=\mu(q)/q$ is not divisible by $p$.
\end{lemma}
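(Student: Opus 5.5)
The plan is to show that $p^m$ divides $\mu(q)$ exactly once in the relevant sense, that is, $v_p(\mu(p^m))=m$. Then $\tau=\mu(q)/q$ is an integer coprime to $p$. I would reuse the expansion from the proof of \cref{lem:efficient-mu-inverse}:
\[
\mu(n)=\frac{(1+p^s)^n-1}{p^s}=n+\sum_{a=2}^{n}\binom{n}{a}p^{s(a-1)}.
\]
I would take $n=q=p^m$ and compare the valuation of the leading term $n$ with the valuations of the remaining terms.

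\textbf{Step 1: bounding the correction terms.} The leading term satisfies $v_p(n)=m$. For $a\geq 2$, the argument in \cref{eq:vp_juggle} (with $h=m$, $t=1$) gives
\[
v_p\!\left(\binom{n}{a}p^{s(a-1)}\right)\geq m-v_p(a)+s(a-1)\geq m+1.
\]
The last inequality is the claim $s(a-1)-v_p(a)\geq 1$, already used in that proof. I would check it carefully here, since it is where the hypothesis $s\geq 2$ for $p=2$ enters. For $p$ odd and $s\geq 1$ we have $v_p(a)\leq \log_p a\leq a-2$ whenever $a\geq 2$ (with $v_p(2)=0$). For $p=2$ and $s\geq 2$ we have $2(a-1)-v_2(a)\geq 2(a-1)-\log_2 a\geq 1$ for all $a\geq 2$.

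\textbf{Step 2: conclusion.} It follows that $\mu(q)\equiv q\pmod{p^{m+1}}$. Hence $\mu(q)=q(1+p\,c)$ for some integer $c$, so $\tau=1+pc\equiv 1\pmod p$, which is not divisible by $p$.

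An alternative route that avoids the expansion uses the first part of \cref{lem:schmidt-mu}. Applied with $i=m$ and with $i=m+1$ to $n=p^m$, it gives $p^m\mid\mu(q)$ but $p^{m+1}\nmid\mu(q)$, since $p^{m+1}\nmid q$. This is immediate and gives the claim directly. I would present this as the main proof and keep the valuation computation as the self-contained justification.

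The only real obstacle is the elementary inequality $s(a-1)>v_p(a)$ in the case $p=2$, $a=2$. There it reads $s>1$, which is exactly why the statement requires $s\geq 2$ for $p=2$. For example, with $r=3$ and $q=2$ we get $\mu(2)=4$ and $\tau=2$, so the claim fails when $s=1$.
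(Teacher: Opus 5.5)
Your main argument, which applies the first part of \cref{lem:schmidt-mu} to $n=q=p^m$ with $i=m$ and $i=m+1$, is exactly the paper's proof and is correct. Your backup valuation computation is also sound: it is the case $h=m$, $t=1$ of the congruence $\mu(tp^h)\equiv tp^h \pmod{p^{h+1}}$ established in the proof of \cref{lem:efficient-mu-inverse}, and it usefully shows that the hypothesis $s\geq 2$ for $p=2$ is genuinely needed, as your example $r=3$, $q=2$, $\tau=2$ demonstrates.
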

\begin{proof}
    First, we note that $\tau$ is an integer. Since $p^m\mid q$ then $p^m\mid\mu(q)$ by \cref{lem:schmidt-mu}. Note also that $p^{m+1}$ does not divide $q$, therefore $p^{m+1}$ does not divide $\mu(q)$.
\end{proof}
Therefore, the function on $A$ that exponentiates by $\tau$ is an automorphism of $A$.
On the other hand, since $bA\in P/A$ has order $q$, then $b^q\in A$. Given $b$, choose the unique $a_0\in A$ such that $a_0^\tau = b^q$.

Now, we define the Abelian $p$-group $B$. Consider an extra element $c$. More specifically, $c$ is a formal symbol with which we define $B$ in the following way
\begin{align}\label{eq:B-group}
    B=\langle A,c \vert ac=ca \text{ for all } a\in A, c^q=a_0   \rangle
\end{align}
In other words, we consider the group $A$ and add the symbol $c$ such that it commutes with all $a$ and $c^q=a_0$. Note that this group is well defined, any element of this group can be written as $ac^i$ with $a\in A$ and $0\leq i < q$ and the product of these elements is determined by the defining
relations.

Next, define the map $\sigma:B\to P$ by $\sigma(ac^{\mu(j)})=b^ja$ for $a\in A$. We remark that this map is not a group homomorphism, but a crossed isomorphism. 
\begin{definition}
    Let $G$ and $G'$ be groups,  a map $\sigma:G\to G'$ is a crossed isomorphism if $\sigma$ is a bijective map such that $\sigma(x)\sigma(y)=\sigma(f(y)(x)y)$ where $f:G\to\mathrm{End}(G)$.
\end{definition}
 Intuitively, a crossed isomorphism can be thought of as a ``twisted isomorphism'', where the map $f$ defines the twist. When $f$ maps every element to the identity, then $\sigma$ is an isomorphism. The map $\sigma:B\to P$ defined in the previous paragraph corresponds to a crossed isomorphism for a function $f$ which we define later. For the purpose of solving the HSP, we use the property of $\sigma$ being a crossed isomorphism to prove that for all $K\leq B$ and $x\in B$, $\sigma(Kx)=\sigma(K)\sigma(x)$. 
 
Consider $\alpha:B\to B$ such that $\alpha(x)=x^r$. Note that $B$ is an Abelian $p$-group and thus exponentiation by $r$ is an automorphism. Consider an element $x=ac^i\in B$ for some $0\leq i < q$, note that there exists $0\leq j<q$ such that $\mu(j)= i \pmod{q}$. Then, define $f:B\to \mathrm{Aut}(B)$ as $f(x)=\alpha^{j}$. The map $f$ is well defined because if $j,j'$ satisfy $\mu(j)=\mu(j') \pmod{q}$, then $j=j' \pmod{q}$ and thus $\alpha^q$ is the identity. This shows that the definition of $f$ is independent of the choice $j$ or $j'$.

\begin{lemma}
    The map $\sigma:B\to P$ defined by $\sigma(ac^{\mu(j)})=b^{j}a$ is a crossed isomorphism.
\end{lemma}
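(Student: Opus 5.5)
The plan is to verify directly the three things the definition asks for: $\sigma$ is well defined, $\sigma$ is bijective, and $\sigma(x)\sigma(y)=\sigma(f(y)(x)\,y)$ with $f(y)=\alpha^{j}$ when $y\in Ac^{\mu(j)}$. The computational input is the conjugation rule
\[
b^{-j}ab^{j}=a^{r^j}\qquad(a\in A),
\]
which follows by induction from $b^{-1}ab=a^{r}$ in \cref{cor:schmidt-nonham}. Together with it we use the additive identity for $\mu$,
\[
\mu(i+j)=\mu(j)+r^{j}\mu(i),
\]
which is immediate from the definition and already appeared in the proof of \cref{lem:efficient-mu-inverse}.

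\emph{Well-definedness and bijectivity.} Every element of $B$ is uniquely $ac^{i}$ with $a\in A$ and $0\le i<q$. By \cref{lem:schmidt-mu} there is a unique $j\in\{0,\dots,q-1\}$ with $\mu(j)\equiv i\pmod q$. For this fixed $j$, the map $a\mapsto ac^{\mu(j)}$ is a bijection of $A$ onto the coset $Ac^{i}$, and $a\mapsto b^{j}a$ is a bijection of $A$ onto $b^{j}A$. Since $P=\bigsqcup_{0\le j<q}b^{j}A$, because $P/A=\langle bA\rangle$ has order $q$, the map $\sigma$ restricted to $0\le j<q$ is a well-defined bijection $B\to P$.

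For the multiplicativity computation I also need the formula $\sigma(ac^{\mu(j)})=b^ja$ to hold for every $j\in\mathbb N$, and checking this is the main obstacle. It suffices to treat the step $j\mapsto j+q$. On one side,
\[
c^{\mu(j+q)}=c^{\mu(j)}c^{r^{j}q\tau}=c^{\mu(j)}a_0^{\tau r^{j}}=c^{\mu(j)}(b^{q})^{r^{j}},
\]
using $c^{q}=a_0$ and $a_0^{\tau}=b^{q}$. On the other side, $b^{j+q}a=b^{j}b^{q}a$. So the two formulas agree exactly when $(b^{q})^{r^{j}}=b^{q}$. This holds because $b^{q}\in A$ commutes with $b$, so $b^{q}=b^{-1}b^{q}b=(b^{q})^{r}$.

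\emph{The crossed identity.} Take $x=ac^{\mu(i)}$ and $y=a'c^{\mu(j)}$ with $a,a'\in A$. Then
\[
\sigma(x)\sigma(y)=b^{i}ab^{j}a'=b^{i+j}\,a^{r^{j}}a'.
\]
On the other side, since $B$ is abelian,
\[
f(y)(x)\,y=a^{r^{j}}c^{r^{j}\mu(i)}\,a'c^{\mu(j)}=a^{r^{j}}a'\,c^{\mu(i+j)}.
\]
By the extended definition of $\sigma$, its image is $b^{i+j}a^{r^{j}}a'$, which matches $\sigma(x)\sigma(y)$.

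It remains to confirm that $f$ is well defined, that is, $\alpha^{q}=\mathrm{id}$. This needs $\operatorname{Exp}(B)\mid r^{q}-1=p^{s}\mu(q)$. By \cref{lem:schmidt-mu} and the fact that $p\nmid\tau$, we have $v_p(p^{s}\mu(q))=s+m=n$. On the other hand, $A$ has exponent $p^{k}\le p^{n}$. Since $b$ has order $p^n$, the element $b^{q}$ has order $p^{s}$, hence so does $a_0$, and $c$ has order $q\,p^{s}=p^{n}$. Therefore $\operatorname{Exp}(B)\mid p^{n}$. Finally, $\alpha$ is an automorphism because $p\nmid r$, so $f$ indeed takes values in $\mathrm{Aut}(B)\subseteq\mathrm{End}(B)$.
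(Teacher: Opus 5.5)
Your proof is correct, but it takes a different route from the paper's.

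\textbf{The paper's route.} The paper introduces the twisted group $B^{*}=(B,\ast)$ with $x\ast y=f(y)(x)\,y$. It asserts that the map $\theta\colon B^{*}\to P$ fixing $A$ and sending $c\mapsto b$ is an isomorphism. It then checks by induction that $c^{\ast j}=c^{\mu(j)}$ and $c^{\ast j}\ast a=ac^{\mu(j)}$, so that $\sigma=\theta\circ\iota$. The crossed identity then follows formally from $\theta$ being multiplicative on $B^{*}$.

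\textbf{Your route.} You verify everything elementwise:
\begin{itemize}
\item bijectivity, coset by coset, using that $\mu$ permutes $\mathbb{Z}_q$;
\item consistency of the defining formula for $j\ge q$, via $\mu(j+q)=\mu(j)+r^{j}q\tau$, $a_0^{\tau}=b^{q}$ and $(b^{q})^{r}=b^{q}$;
\item the identity itself, via $b^{-j}ab^{j}=a^{r^{j}}$ and $\mu(i+j)=\mu(j)+r^{j}\mu(i)$.
\end{itemize}

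\textbf{Comparison.} The paper's route is more conceptual: it explains where $\sigma$ comes from and makes the crossed property automatic. However, it takes for granted that $(B,\ast)$ is a group and that $\theta$ is a well-defined isomorphism, and that is where the actual content lies. For instance, $\theta$ must send $c^{\ast q}=c^{q\tau}=a_0^{\tau}$ to $b^{q}$, which is exactly your $j\mapsto j+q$ consistency check. Your computation is self-contained and shows precisely where $a_0^{\tau}=b^{q}$ and the invariance of $b^{q}$ under the action are used. It also covers the well-definedness of $\sigma$ and $f$, which the paper treats separately: the $\alpha^q=\mathrm{id}$ remark before the lemma, and the exponent computation in the subsequent Baer-condition lemma.

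\textbf{A small aside.} Your bound $\operatorname{Exp}(B)\mid p^{n}$ does not need the normalization $o(b)=p^{n}$ from the corollary. Since $b^{q}\in A$ centralizes $A$, conjugation by $b^{q}$ is trivial, so $\operatorname{Exp}(A)\mid r^{q}-1=p^{s+m}\tau$, hence $\operatorname{Exp}(A)\mid p^{s+m}$. Likewise $(b^{q})^{r}=b^{q}$ gives $o(b^{q})\mid p^{s}$, hence $o(c)\mid p^{s+m}$. So $\alpha^{q}=\mathrm{id}$ holds for any admissible $b$, not only one of maximal order.
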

\begin{proof}
    Define the group $B^*=(B,\ast)$ which consists of the elements in $B$ and the product $\ast$ defined by $x\ast y=f(y)(x)\cdot y$, where $\cdot$ denotes the original product in $B$. Let $\iota: B \to B^*$ be the identity function on the underlying sets of $B$ and $B^*$ and $\theta:B^*\to P$ be the isomorphism sending $A$ identically to $A$ and $c$ to $b$. 
    
    We claim that $\sigma=\theta\circ\iota$, which we prove next. Note that in the group $(B,\ast)$, $c^{\ast j}=c^{\mu(j)}$.  For $j=1$ we have that $c=c^{\mu(1)}$ since $\mu(1)=1$. Assume $c^{\ast j}=c^{\mu(j)}$, then
    \begin{align}
     c^{\ast(j+1)}&=c^{\ast j}\ast c,\\ 
     &=f(c)(c^{\mu(j)})\cdot c,\\
     &= \alpha(c^{\mu(j)})\cdot c,\\
     &=(c^{\mu(j)})^{r}\cdot c,\\
     &=c^{r\mu(j)+1}.
    \end{align}
    Where we used the fact that $f(c)=\alpha$. But note that $\mu(j+1)=1+r+\cdots+r^{j}=1+r(1+r+\cdots + r^{j-1})=1+r\mu(j)$. Thus, $c^{\ast(j+1)}=c^{\mu(j+1)}$. Take $a\in A$, then $f(a)$ corresponds to the identity map. Therefore,
    \begin{align}
        c^{\ast j}\ast a = f(a)(c^{\mu(j)})a = c^{\mu(j)}a=ac^{\mu(j)}.
    \end{align}
    Since $\theta$ is a group isomorphism, $\theta(c^{\ast j}\ast a)=\theta(c)^j\theta(a)=b^j a$. This shows that $\sigma(ac^{\mu(j)})=b^j a$ as claimed.

    We now prove that $\sigma$ is a crossed isomorphism. Consider $\sigma(f(y)(x)\cdot y)=\theta(\iota(f(y)(x)\cdot y))$. We have that $\iota(f(y)(x)\cdot y)=\iota(x)\ast\iota(y)$, and thus 
    \begin{align}
     \sigma(f(y)(x)\cdot y)&=\theta(\iota(x)\ast\iota(y)),\\
     &=\theta(\iota(x))\theta(\iota(y)),\\
     &=\sigma(x)\sigma(y).
    \end{align}
    where the second line follows from the fact that $\theta$ is an isomorphism. 
\end{proof}
To prove that $\sigma(Kx)=\sigma(K)\sigma(x)$, we use the following theorem.
\begin{theorem}[Baer's theorem \cite{Baer1944CrossedIsomorphisms}. Theorem 2.5.8 in \cite{schmidt1994subgroup}]\label{thm:schmidt:baers-258}
    Let $G$ and $G'$ be finite groups, $f:G\to \mathrm{End}(G)$ and suppose that $\sigma:G\to G'$ is a crossed isomorphism with respect to $f$. Then the following are equivalent:
    \begin{enumerate}
        \item The map $\sigma$ induces a projectivity from $G$ to $G'$ and satisfies $\sigma(Kx)=\sigma(K)\sigma(x)$ for all $K\leq G$ and $x\in G$.
        \item For all $x,y\in G$, the following conditions are satisfied.
        \begin{enumerate}
            \item $f(x)$ is a power automorphism, i.e., $f(x)\in \{g\in \mathrm{Aut}(G): g(K)=K \text{ for all } K\leq G\}$.
            \item If $o(x)=2^m$ with $m>1$, then $f(x)(x)=x^{1+4i}$ with $i\in \mathbb{N}$.
            \item If $xy=yx$ and $o(x)$ and $o(y)$ are finite and relatively prime, then $f(y)(x)=x$.
        \end{enumerate}
    \end{enumerate}
\end{theorem}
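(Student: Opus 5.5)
Since the paper only needs the implication (2)$\Rightarrow$(1) (for $B$ an Abelian $p$-group with $f(y)$ a power of $x\mapsto x^r$), I would prove that direction in full and treat (1)$\Rightarrow$(2) more briefly. The basic tool is the identity
$$
\sigma(K)\sigma(y)=\{\sigma(f(y)(k)\,y):k\in K\}=\sigma\bigl(f(y)(K)\,y\bigr),
$$
which follows directly from the definition of a crossed isomorphism. I will also use that each $f(y)$ is bijective: from $\sigma(x)\sigma(y)=\sigma(f(y)(x)y)$ and bijectivity of $\sigma$, the map $x\mapsto f(y)(x)$ is injective, hence bijective since $G$ is finite.

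\emph{(2)$\Rightarrow$(1).} First, right cosets. If $f(y)$ is a power automorphism, then $f(y)(K)=K$, and the identity above gives $\sigma(K)\sigma(y)=\sigma(Ky)$ immediately.

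Second, $\sigma(K)$ is a subgroup for every $K\leq G$. For $k,k'\in K$ we have $\sigma(k)\sigma(k')=\sigma(f(k')(k)k')$, and $f(k')(k)\in K$ because $f(k')(K)=K$. So $\sigma(K)$ is a nonempty finite set closed under multiplication, hence a subgroup. Thus $K\mapsto\sigma(K)$ is an injective, order-preserving map $L(G)\to L(G')$.

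Third, to get a projectivity I must show every $U\leq G'$ arises this way, i.e.\ that $\sigma^{-1}(U)$ is a subgroup. Let $u,v\in\sigma^{-1}(U)$. Closure of $U$ gives $f(v)(u)\,v\in\sigma^{-1}(U)$. Since $f(v)$ is a power automorphism, every element of $\langle u\rangle$ has the form $f(v)(u')$ for some $u'\in\langle u\rangle$. Hence $uv\in\sigma^{-1}(U)$ holds as soon as $\langle u\rangle\subseteq\sigma^{-1}(U)$.

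This reduces everything to the cyclic statement
$$
\sigma(\langle x\rangle)=\langle\sigma(x)\rangle \quad\text{for every } x\in G.
$$
Granting it, $\sigma(u)\in U$ gives $\sigma(\langle u\rangle)\subseteq U$, so $\sigma^{-1}(U)$ is closed under products and is a subgroup. Surjectivity of the lattice map then follows, and $\sigma$ is a projectivity.

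\emph{The cyclic statement is the main obstacle, and it is where (b) and (c) enter.} Write $f(x)(x)=x^{e}$. The powers $\sigma(x)^n=\sigma(x^{\ast n})$ are computed in the twisted product $a\ast b=f(b)(a)b$. Restricted to $\langle x\rangle$, this product is governed by the exponents $e_j$ with $f(x^j)(x)=x^{e_j}$. I would decompose $\langle x\rangle$ into its Sylow parts $\langle x_p\rangle$. Condition (c) says that $f$ of one coprime part acts trivially on the others, so $\sigma(\langle x\rangle)$ splits as a product over primes and the problem reduces to a single $p$-part. For a cyclic $p$-group, $\sigma(x)^{n}=\sigma(x^{N(n)})$ with $N(n)$ a geometric-type sum $1+e+\dots$, analogous to $\mu$ in the paper. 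I would show $n\mapsto N(n)$ is a permutation modulo $o(x)$ by the same digit-by-digit lifting as in \cref{lem:efficient-mu-inverse}. This needs $e\equiv1\pmod p$ (automatic for odd $p$ when $e$ comes from a power automorphism of $p$-power order; the general case needs the usual reduction) and $e\equiv1\pmod 4$ when $p=2$, which is exactly condition (b). In the paper's application $e=r^j\equiv1\pmod{p^s}$ with $s\geq2$ for $p=2$, so this case is directly covered by the computation in \cref{lem:efficient-mu-inverse}.

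\emph{(1)$\Rightarrow$(2).} Condition (a): if $\sigma(Ky)=\sigma(K)\sigma(y)=\sigma(f(y)(K)y)$, then bijectivity of $\sigma$ gives $f(y)(K)=K$ for all $K$. For (b) and (c), I would argue contrapositively. If $e\equiv3\pmod4$ on a cyclic $2$-group of order at least $4$, then $\sigma(\langle x\rangle)$ fails to be a subgroup, or the generated subgroup has a lattice that is not a chain, which contradicts $\sigma$ being a projectivity. Similarly, a nontrivial twist between commuting coprime elements would make $\sigma(\langle x,y\rangle)$ non-cyclic or of the wrong order.
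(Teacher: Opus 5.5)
The paper does not prove this theorem. It cites Baer and Schmidt and uses only (2)$\Rightarrow$(1), for the Abelian $p$-group $B$. Three parts of your argument are correct: the coset identity, the closure argument giving $\sigma(K)\leq G'$, and the reduction of surjectivity on subgroups to $\sigma(\langle x\rangle)=\langle\sigma(x)\rangle$. In the paper's setting, (c) is vacuous and $e=r^j\equiv1\pmod{p^s}$ is explicit, so your $N(n)$ computation finishes the argument there.

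For the theorem as stated, however, the cyclic step has a genuine hole. For $x$ of order $p^a$ with $p$ odd you need $e\equiv1\pmod p$, and you obtain it by assuming $f(x)$ has $p$-power order. Hypothesis (a) gives nothing of the kind: $y\mapsto y^e$ with $e$ a primitive root mod $p$ is a power automorphism of every Abelian $p$-group. The congruence has to come from the crossed-isomorphism structure, and your second step already provides what is needed:
\begin{itemize}
\item $\sigma(\langle x\rangle)$ is a subgroup of order $p^a$, so $o(\sigma(x))$ is a power of $p$.
\item If $e\not\equiv1\pmod p$, then $e-1$ is a unit modulo $p^a$, so $N(n)\equiv0\pmod{p^a}$ if and only if $e^n\equiv1\pmod{p^a}$.
\item Hence $o(\sigma(x))=\operatorname{ord}_{p^a}(e)$. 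This is divisible by $\operatorname{ord}_p(e)>1$, a divisor of $p-1$, which contradicts the first point.
\end{itemize}
The Sylow splitting also needs to be an actual argument. Write $x=x_py$ with $y$ the $p'$-part. Then (c) gives $\sigma(x_p)\sigma(y)=\sigma(x)=\sigma(y)\sigma(x_p)$. So $\sigma(x)$ is a product of commuting elements of coprime orders $o(x_p)$ and $o(y)$ (by induction), and therefore $o(\sigma(x))=o(x)$.

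For (1)$\Rightarrow$(2), your sketches of (b) and (c) do not identify the contradiction. The fact you need is this: under (1), $\langle\sigma(x)\rangle=\sigma(K)$ for some $K\leq G$. Since $x\in K$, this gives $\sigma(\langle x\rangle)=\langle\sigma(x)\rangle$, hence $o(\sigma(x))=o(x)$ for every $x$.
\begin{itemize}
\item For (b): suppose $o(x)=2^a$ with $a\geq2$ and $e\equiv3\pmod4$. Then $\sigma(x)^2=\sigma(x^{1+e})$ has order $2^{a-1}$, being the square of an element of order $2^a$. It also has order $o(x^{1+e})\leq2^{a-2}$, because $4\mid 1+e$.
\item For (c): $\sigma(x)$ and $\sigma(y)$ lie in the cyclic group $\sigma(\langle xy\rangle)$, so they commute. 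Writing $f(y)(x)=x^k$ and $f(x)(y)=y^l$ (using (a)), commutation reads $x^ky=y^lx$. Hence $x^{k-1}=y^{l-1}\in\langle x\rangle\cap\langle y\rangle=1$, i.e.\ $f(y)(x)=x$.
\end{itemize}
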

Therefore, we need to check that condition $(2)$ above is true for $f$ to prove that $\sigma$ satisfies $\sigma(Kx)=\sigma(K)\sigma(x)$.
\begin{lemma}
    The map $f:B\to\mathrm{Aut}(B)$ fulfills condition (2) of \cref{thm:schmidt:baers-258}.
\end{lemma}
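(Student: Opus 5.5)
We verify the three conditions (a), (b), (c) of part (2) of \cref{thm:schmidt:baers-258} directly from the definition $f(x)=\alpha^{j}$, where $\alpha(y)=y^r$ and $j$ is such that $\mu(j)\equiv i\pmod q$ for $x=ac^i$. Every condition reduces to an arithmetic fact about the exponent $r^j$, so the plan is to first record that $f(x)$ is the map $y\mapsto y^{r^j}$ on $B$. This uses only that $B$ is Abelian, so that $(y^r)^r=y^{r^2}$ and so on.

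For (a), $B$ is an Abelian $p$-group, so its exponent is a power of $p$. Since $r^j\equiv 1\pmod p$, the integer $r^j$ is coprime to $|B|$, and $y\mapsto y^{r^j}$ is an automorphism. For any $K\le B$ we have $K^{r^j}\subseteq K$, and injectivity on the finite set $K$ gives $K^{r^j}=K$. Hence $f(x)$ is a power automorphism.

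For (b), suppose $o(x)=2^{m'}$ with $m'>1$, so $p=2$. By \cref{cor:schmidt-nonham} we then have $s\ge 2$, so $r=1+2^s\equiv 1\pmod 4$ and therefore $r^j\equiv 1\pmod 4$. Writing $r^j=1+4i$ with $i\in\mathbb N$ (note $r^j\ge 1$), we get $f(x)(x)=x^{r^j}=x^{1+4i}$, as required.

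For (c), if $o(x)$ and $o(y)$ are coprime, then since both are powers of $p$ one of them equals $1$. If $x=e$, then $f(y)(x)=e=x$ trivially. If $y=e$, we write $y=ac^0$; then $i=0$, and $\mu(0)=0$ gives $j=0$, so $f(y)=\alpha^0$ is the identity and $f(y)(x)=x$. The only place requiring care is (b), where one must invoke the hypothesis $s\ge2$ for $p=2$; it is not a real obstacle. One should also note that Baer's theorem is stated with $f:G\to\mathrm{End}(G)$, and our $f$ takes values in $\mathrm{Aut}(B)\subseteq\mathrm{End}(B)$, so the theorem applies.
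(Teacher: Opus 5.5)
Your proposal is correct and follows essentially the same route as the paper. You verify (a), (b), (c) directly, with (b) resting on $s\ge 2$ for $p=2$ and (c) on the fact that coprime orders in a $p$-group force one element to be trivial. The differences are minor: in (a) you use injectivity of $y\mapsto y^{r^j}$ on the finite subgroup $K$ instead of the paper's explicit inverse exponent (which requires computing $\operatorname{Exp}(B)=p^n$), and in (c) you explicitly check $f(e)=\alpha^0=\mathrm{id}$, which the paper leaves implicit.
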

\begin{proof}
    Recall that for $x=ac^{\mu(j)}\in B$, we have that $f(x)=\alpha^j$ with $\alpha(y)=y^r$ for $y\in B$. As remarked before, $\alpha$ is an automorphism of $B$ and now we show that $\alpha(K)=K$ for $K\leq B$. Clearly $\alpha(K)\subseteq K$. The inverse of $\alpha$ is given by $\alpha^{-1}(x)=x^{r^{-1}}$ where $rr^{-1}=1 \pmod{p^n}$. Note that $\mathrm{Exp}(B)=p^n$ for the following reason.
    The element \(b^q\) has order \(p^s\).  Since
    $$
    a_0^\tau=b^q
    $$
    and \(p\nmid\tau\), the element \(a_0\) also has order \(p^s\).
    The relation \(c^q=a_0\), together with $q=p^m$,
    therefore implies that $c$ has order
    $p^{m+s}=p^n$.
    Since \(\operatorname{Exp}(A)\leq p^n\), it follows that $\operatorname{Exp}(B)=p^n$.
    

    This implies that $\alpha^{-1}(K)\subseteq K$ and thus $K \subseteq \alpha(K)$ proving that $f(x)$ is a power automorphism.

    For property (b), note that it is only relevant to check when $p=2$ because otherwise $p$ is odd, and therefore there are no elements in $b$ with a power of $2$ order. Given $p=2$, by \cref{prop:p-group-modular}, it is true that $s\geq 2$ and therefore $2^s = 0 \pmod{4}$ and then $1+2^s=1 \pmod{4}$. Recall that we defined $r=1+2^s$ and also note $r^j=1 \pmod{4}$ for any $j\geq 0$. Thus, there is an integer $i$ such that $r^j=1+4i$. Finally, $f(x)(x)=x^{r^j}=x^{1+4i}$.

    For property (c), we have that if two elements in a $p$-group have relatively prime orders, then one of them is trivial and the condition is fulfilled.
\end{proof}
It follows that $\sigma(Kx)=\sigma(K)\sigma(x)$ for all $K\leq B$ and $x\in B$. Recall that the HSP oracle $F:P\to S$ in terms of left cosets, so we will define $\lambda:B\to P$, $\lambda(x)=\sigma(x^{-1})^{-1}$ (we could as well consider a right coset HSP oracle). Note that $\lambda(K)=\sigma(K)$, and hence the two maps induce the same
projectivity on subgroups and moreover $\lambda(xK)=\lambda(x)\lambda(K)$. Since $F$ hides the subgroup $H$, define $\widetilde{H}=\lambda^{-1}(H)$ and also the oracle $\widetilde{F}(x)=F(\lambda(x))$. The oracle $\widetilde{F}$ hides the group $\widetilde{H}\leq B$. 

Finally, we show that it is enough to run the Abelian HSP algorithm on group $B$ with oracle $\widetilde{F}$ to recover the generators of $H$. We begin by describing how to implement the oracle $\widetilde{F}$ given access to $F$.

\subsection{Construction of group $B$ and efficient implementation of algorithm for HSP}\label{sec:construct-B}

We are given as input the group $P=A\langle b \rangle$ with $A\unlhd P$, $A$ Abelian and $P/A=\langle bA \rangle$, $\abs{P:A}=q=p^m$. The action of $b$ on $A$ is given by $b^{-1}ab=a^r$ with $r=1+p^s$.

Concretely, we are given the description of the generators of $A$ and $b$ according to some encoding. The encoding of $A$ can be obtained by considering the decomposition $A\cong \mathbb{Z}_{p^{e_1}}\times \cdots \mathbb{Z}_{p^{e_d}}$, an element of $A$ is encoded as a vector $(\alpha_1,\cdots,\alpha_d)$ with $\alpha_\ell\in \mathbb{Z}_{p^{e_\ell}}$. Recall that $b^{q}\in A$ and therefore an element $b^j a$ of $P$ can be encoded as $\ket{j,a}$ with $0\leq j < q$ and $a\in A$. Note that arithmetic in this group can be easily handled using the identity $b^{-1}ab=a^r$.

Recall that $a_0$ is defined by the identity $a_0^{\tau}=b^{q}$ with $\tau=\mu(q)/q$. Since exponentiation by $\tau$ defines an automorphism, it has an inverse. We can compute such inverse and obtain $a_0$. We then proceed to encode the group $B$. Every element of this group can be written as $ac^i$ for $0\leq i<q$. Thus, we encode $B$ with the basis states $\ket{a,i}$. Arithmetic can be done efficiently in $B$. Consider $a_1 c^{i_1}$ and $a_2 c^{i_2}$, then the multiplication is given by $a_1 a_2 c^{i_1 + i_2}$. Write $i_1 + i_2= \ell q + k$ with $0\leq k<q$. Then $c^{i_1+i_2}=c^{ql}c^k=a_0^\ell c^k$. Therefore $\ket{(a_1,i_1)(a_2,i_2)}=\ket{a_1 a_2 a_0^{\ell},k}$, showing that the arithmetic can be done efficiently. Note that inverses can also be obtained efficiently.
\begin{align}\label{eq:B-inverse}
    (a,i)^{-1}=
    \begin{cases}
    (a^{-1},0) & \text{if } i=0, \\
    (a^{-1}a_0^{-1},q-i)  & \text{if } 0< i <q. 
\end{cases}
\end{align}
To solve the HSP in $B$, we need to implement the quantum Fourier transform. The usual implementation requires $B$ to be decomposed into a direct product of cycle groups. Such decomposition can be found using the Smith normal decomposition as in \cite{cheung2001decomposing} to decompose any Abelian group. More concretely, we are describing the elements of $B$ through the basis given by $\ket{a,i}$ with $a\in A$ and $i\in \mathbb{Z}_q$ and we can write $\ket{a,i}=\ket{\alpha_1,\cdots, \alpha_d, i}$ with $\alpha_\ell \in \mathbb{Z}_{p^{e_\ell}}$, we call this basis the ``canonical basis''. To implement the quantum Fourier transform, we need the states to be in the form $\ket{y}:=\ket{y_1,\cdots, y_t}$ with $y_\ell \in \mathbb{Z}_{N_\ell}$ and where $B\cong \mathbb{Z}_{N_1}\times \cdots \times \mathbb{Z}_{N_t}$. We refer to this basis as the ``Smith normal form (SNF) basis''.


Given $B$ in terms of generators $g_1,\cdots g_d, c$ according to \cref{eq:B-group}, to obtain the cyclic decomposition of $B$, we encode the defining
relations of $B$ in an integer relation matrix and compute its Smith
normal form, following the algorithm of Theorem~7 in~\cite{cheung2001decomposing}.

We can efficiently obtain $a_0$ in terms of the generators of $A$, i.e., we get $a_0= g_1^{u_1}\cdots g_d^{u_d}$ with $0\leq u_i < p^{e_i} $. Then, to define $B$, we have the relations $g_i^{p^{e_i}}=1$ and $c^{q}=g_1^{u_1}\cdots g_d^{u_d}$. For simplicity, we define $g_{d+1}:=c$. Let $\eta=d+1$, the vector $x=(\alpha_1,\cdots, \alpha_d, i)^T\in \mathbb{Z}^\eta$ represents an element of $B$. We can then define the following matrix in terms of the relations described above
\begin{align}
    M=\begin{bmatrix}
    p^{e_1} & 0 & \cdots & 0 & -u_1 \\
    0 & p^{e_2} & \cdots & 0 & -u_2 \\
    \vdots & \vdots & \ddots & \vdots & \vdots \\
    0 & 0 & \cdots & p^{e_d} & -u_d \\
    0 & 0 & \cdots & 0 & q
\end{bmatrix}.
\end{align}
The next theorem from \cite{cheung2001decomposing} shows that we can obtain the generators $\tilde{g}_1,\cdots, \tilde{g}_t$ from the matrix $M$.
\begin{theorem}[Theorem 7 in \cite{cheung2001decomposing}]
    Given a generating set $\{g_1,\cdots, g_d\}$ of a finite Abelian group $G$ and a matrix $M$ such that $g_1^{x_1}\cdots g_d^{x_k}=e$ if and only if $x=(x_1,\cdots,x_k)^T \in \mathrm{intcol}(M)$ where $\mathrm(M)$ denotes the set of vectors obtainable by taking integer linear combinations of columns of $M$, we can find in polynomial time (in the size of $M$) $\tilde{g}_1,\cdots, \tilde{g}_t$ with $t\leq k$ such that $G=\langle \tilde{g}_1\rangle \oplus \cdots \oplus \langle \tilde{g}_t \rangle $. 
\end{theorem}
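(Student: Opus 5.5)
The plan is to reduce the statement to the Smith normal form of $M$ and read the cyclic decomposition off the diagonal.

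Let $\phi:\mathbb Z^k\to G$ be $\phi(x)=g_1^{x_1}\cdots g_k^{x_k}$. Since $G$ is Abelian and generated by the $g_i$, $\phi$ is a surjective homomorphism. The hypothesis says exactly that $\ker\phi=L:=\mathrm{intcol}(M)$, so $G\cong\mathbb Z^k/L$. Because $G$ is finite, $L$ has full rank $k$, and after deleting redundant columns we may take $M$ to be $k\times k$ and nonsingular. (The matrix in the application is already square and upper triangular with nonzero diagonal.)

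\textbf{Step 1: compute the Smith normal form.} Compute unimodular integer matrices $U,V$ with
$$
UMV=D=\mathrm{diag}(d_1,\ldots,d_k),\qquad d_1\mid d_2\mid\cdots\mid d_k,\quad d_i\geq 1.
$$
I would invoke a known polynomial-time algorithm, such as Kannan--Bachem or Storjohann's. These algorithms guarantee that the bit sizes of the entries of $U$, $V$, and also of $U^{-1}$, are polynomial in the size of $M$.

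\textbf{Step 2: change coordinates.} Since $V$ is unimodular, $\mathrm{intcol}(MV)=\mathrm{intcol}(M)=L$. The automorphism $x\mapsto Ux$ of $\mathbb Z^k$ therefore maps $L$ onto $\mathrm{intcol}(UMV)=\bigoplus_j d_j\mathbb Z$. This gives
$$
G\cong\mathbb Z^k/L\cong\mathbb Z^k/U L=\bigoplus_{j}\mathbb Z_{d_j}.
$$

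\textbf{Step 3: pull back the generators.} Define $h_j:=\phi(U^{-1}e_j)=\prod_i g_i^{(U^{-1})_{ij}}$. Any $x\in\mathbb Z^k$ can be written as $x=\sum_j (Ux)_j\,U^{-1}e_j$. Hence $\prod_j h_j^{y_j}=e$ holds iff $U^{-1}y\in L$, which holds iff $y\in\bigoplus_j d_j\mathbb Z$. It follows that $h_j$ has order $d_j$ and that $G=\langle h_1\rangle\oplus\cdots\oplus\langle h_k\rangle$. Discarding the $h_j$ with $d_j=1$ gives the elements $\tilde g_1,\ldots,\tilde g_t$ with $t\leq k$.

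\textbf{Step 4: cost.} Each $h_j$ is computed by repeated squaring. Exponents may first be reduced modulo the known orders of the $g_i$ (here the diagonal of $M$ bounds them), or handled directly, since they have polynomially many bits. This costs polynomially many group operations.

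The only nontrivial point, and the step I expect to be the main obstacle, is Step 1. Naive elimination can cause exponential coefficient growth, so I would not prove this from scratch. Instead I would cite the polynomial-size guarantee on the unimodular transforms from Kannan--Bachem. Alternatively, one can work modulo $\det M$, which is a multiple of every $d_j$, and avoid the growth entirely. The rest of the argument is the lattice bookkeeping above.
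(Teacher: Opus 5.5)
Your proof is correct and is the same Smith-normal-form argument the paper sketches: your normalization $UMV=D$ with $h_j=\phi(U^{-1}e_j)$ is the paper's $U^{-1}MV=D$ with $\tilde g_i=g_1^{U_{1,i}}\cdots g_{k}^{U_{k,i}}$ after renaming $U\leftrightarrow U^{-1}$, and, like the paper, you cite the polynomial-time computability of the unimodular transforms rather than proving it. One small correction: you cannot in general make $M$ square by deleting columns (for $M=(4\;\;6)$ one has $\mathrm{intcol}(M)=2\mathbb{Z}$, while either single column spans only $4\mathbb{Z}$ or $6\mathbb{Z}$), but the step is unnecessary, since Steps 2--3 go through verbatim if you use the Smith normal form of the rectangular $M$ directly or first compute a Hermite basis of $L$.
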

The proof of this theorem proceeds by using the SNF to derive the generators $\tilde{g}_i$. Given the matrix $M$, one can find in polynomial time unimodular matrices (i.e., nonsingular integer matrices of determinant $\pm$ 1) $U$ and $V$, such that $U^{-1}MV=D=\mathrm{diag}(N_1,N_2,\cdots,N_t)$. 
Note that $M$ is such that $g_1^{x_1}\cdots g_d^{x_d} c^{x_{d+1}}=e$ if and only if $x=(x_1,\cdots,x_{d+1})\in \mathrm{intcol}(M)$. Moreover, $\mathrm{intcol}(M)=\mathrm{intcol}(MV)$. 
Define $\tilde{g_i}:=g_1^{U_{1,i}}\cdots g_{d+1}^{U_{k,i}}$, and then $\tilde{g}_1^{x_1}\cdots \tilde{g}_{d+1}^{x_d}=e$ if and only if $Ux\in \mathrm{intcol}(M)$. The last statement is true if and only if $x\in\mathrm{intcol}(U^{-1}MV)$. Since $U^{-1}MV=D$ we have that the $\tilde{g}_i$ are the generators of the subgroups in the decomposition of $B$ into cyclic groups. Note that $U$ also gives the transformation between the canonical basis and the SNF basis, i.e., if we let $y$ be the coordinates in the SNF basis and $x$ in the canonical basis, then $y_i=(U^{-1}x)_i \pmod{N_i}$.
This means that if we have the quantum state $\ket{x}=\ket{\alpha_1,\cdots,\alpha_d,c}$, we can transform it into the SNF basis by acting with a unitary $W$ such that
\begin{align}
 W\ket{x}=\ket{(U^{-1}x)_1 \pmod{N_1},\cdots (U^{-1}x)_{t}\pmod{N_t}}.   
\end{align}
 The unitary $W$ can be implemented efficiently since $U^{-1}$ can be implemented with a quantum circuit implementing the arithmetic. 
\subsection{Implementing $\widetilde{F}$}
The problem of implementing $\widetilde{F}$ reduces to that of implementing $\lambda:B\to P$. Recall $\lambda(x)=\sigma(x^{-1})^{-1}$. As already mentioned in \cref{sec:construct-B}, $x^{-1}$ can be computed efficiently. The crossed isomorphism $\sigma$ is computed according to $\sigma(ac^{\mu(j)})=b^j a$ for $a\in A$ and $0\leq j <q$.
Given $x^{-1}=ac^i$ with $a\in A$ and $0\leq i<q$, we first compute $j=\nu(i)$ using Lemma~\ref{lem:efficient-mu-inverse}. Now, we need to compute $\mu(j)$ to give as input to $\sigma$ the element $a c^{\mu(j)}$.

Let $E_A:=\operatorname{Exp}(A)$.
Since $A$ is given in invariant-factor coordinates, $E_A$ is known
from the input. We compute instead
\[
\widehat{\mu}(j):=\mu(j)\bmod qE_A,
\qquad
0\leq\widehat{\mu}(j)<qE_A.
\]
This is because the exponent of $c$ is only relevant modulo $qE_A$, since every $q$ powers of $c$ produce one power of $a_0$ and these powers repeat modulo $E_A$.
This residue can be obtained by repeated squaring using
\[
\begin{pmatrix}
r & 1\\
0 & 1
\end{pmatrix}^{j}
=
\begin{pmatrix}
r^j & \mu(j)\\
0 & 1
\end{pmatrix}
\pmod{qE_A}.
\]

Since $\mu(j)\equiv i\pmod q$, there is a unique
$\ell\in\{0,\ldots,E_A-1\}$ such that
$$
\widehat{\mu}(j)=i+\ell q.
$$
Equivalently,
\[
\ell
\equiv
\frac{\mu(j)-i}{q}
\pmod{E_A}.
\]

There is therefore an integer $z$ such that $\mu(j)=i+q(\ell+zE_A)$.
Using $c^q=a_0$ and $a_0^{E_A}=e$, we obtain $
c^{\mu(j)}
=
c^i(c^q)^{\ell+zE_A}
=
c^i a_0^{\ell}$.

Consequently,
\[
x^{-1}
=
ac^i
=
aa_0^{-\ell}c^{\mu(j)}.
\]
Applying the crossed isomorphism gives
\[
\sigma(x^{-1})
=
\sigma\left(aa_0^{-\ell}c^{\mu(j)}\right)
=
b^j aa_0^{-\ell}.
\]
Finally, $\lambda(x)=\sigma(x^{-1})^{-1}$.
All the operations above consist of modular addition,
multiplication, division by the known power $q$, and repeated
squaring on integers with $O(\log q+\log E_A)$
bits. Hence they can be implemented by a reversible quantum circuit
of size polynomial in $\log q+\log E_A$. Ancillary registers holding
$j$, $\widehat{\mu}(j)$, and $\ell$ can be uncomputed after
$\lambda(x)$ has been evaluated.
Assuming oracle access to $F$, we can get an oracle for $\widetilde{F}$ by using the circuit described above for $\lambda$ and then applying $F$.

\subsection{Recovering the hidden subgroup of $P$}
Let $H\leq P$ be hidden subgroup of $F$ and define $\widetilde{H}:=\lambda^{-1}(H)\leq B$. Note that $\lambda^{-1}(H)$ is a subgroup because $\lambda$, via $\sigma$, induces a projectivity between subgroup lattices. We claim that $\widetilde{F}$ hides $\widetilde{H}$. 
\begin{lemma}
    For $x,y\in B$, $\widetilde{F}(x)=\widetilde{F}(y)$ if and only if $x\widetilde{H}=y\widetilde{H}$.
\end{lemma}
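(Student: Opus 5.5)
The plan is to chain the definition of $\widetilde F$ with the coset-preservation property of $\lambda$. By construction, $\widetilde F(x)=\widetilde F(y)$ holds if and only if $F(\lambda(x))=F(\lambda(y))$. Since $F$ hides $H$, this holds if and only if $\lambda(x)H=\lambda(y)H$. It therefore suffices to show that, for $x,y\in B$,
$$
\lambda(x)H=\lambda(y)H \iff x\widetilde H=y\widetilde H .
$$

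The key input is the left-coset identity $\lambda(xK)=\lambda(x)\lambda(K)$ for every $K\leq B$. I would first verify it from Baer's theorem (\cref{thm:schmidt:baers-258}), whose hypotheses were checked in the preceding lemma; this gives $\sigma(Kx)=\sigma(K)\sigma(x)$. For $z=xk$ with $k\in K$ we have $\lambda(xk)=\sigma(k^{-1}x^{-1})^{-1}$. Since $k^{-1}x^{-1}\in Kx^{-1}$, its image lies in $\sigma(K)\sigma(x^{-1})$, so $\lambda(xk)\in\sigma(x^{-1})^{-1}\sigma(K)^{-1}=\lambda(x)\sigma(K)$. Here I use that $\sigma(K)$ is a subgroup, because $\sigma$ induces a projectivity. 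Also $\lambda(K)=\sigma(K)$, since $K$ and $\sigma(K)$ are closed under inverses. Finally, both sides have cardinality $|K|$ because $\lambda$ is a bijection, so the inclusion is an equality.

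With $K=\widetilde H$ we get $\lambda(\widetilde H)=H$, since $\widetilde H=\lambda^{-1}(H)$ and $\lambda$ is bijective. Applying the coset identity then gives $\lambda(x\widetilde H)=\lambda(x)H$ and $\lambda(y\widetilde H)=\lambda(y)H$. Because $\lambda$ is a bijection of $B$ onto $P$, it maps distinct subsets to distinct subsets. Hence $x\widetilde H=y\widetilde H$ if and only if $\lambda(x\widetilde H)=\lambda(y\widetilde H)$, which is equivalent to $\lambda(x)H=\lambda(y)H$. This proves the lemma.

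The main obstacle is the coset identity for $\lambda$. Baer's theorem is stated for right cosets of $\sigma$, and the conversion to left cosets through the inversion trick needs care: one must confirm that $\sigma(K)$ is genuinely a subgroup (so that $\sigma(K)^{-1}=\sigma(K)$) and that $\lambda$ agrees with $\sigma$ on subgroups. Everything else is bookkeeping using the bijectivity of $\lambda$.
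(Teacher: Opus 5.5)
Your proof is correct and follows the same chain as the paper's: $\widetilde F(x)=\widetilde F(y)\iff \lambda(x)H=\lambda(y)H$, then $\lambda(x\widetilde H)=\lambda(x)\lambda(\widetilde H)=\lambda(x)H$, then injectivity of $\lambda$. The one addition is that you derive the left-coset identity $\lambda(xK)=\lambda(x)\lambda(K)$ and the equality $\lambda(K)=\sigma(K)$ from Baer's right-coset statement (via the inversion trick and a cardinality count), whereas the paper only asserts these just before the lemma.
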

\begin{proof}
    Let $x,y\in B$, $\widetilde{F}(x)=\widetilde{F}(y)$. This is true if and only if $F(\lambda(x))=F(\lambda(y))$. Since $F$ hides $H$, then this holds iff 
    \begin{align}
        \lambda(x)H=\lambda(y)H.
    \end{align}
    Since $H=\lambda(\widetilde{H})$, and $\lambda$ preserves left cosets,
    \begin{align}
    \lambda(x\widetilde{H})=\lambda(x)\lambda(\widetilde{H})=\lambda(x)H.    
    \end{align}
    Thus, $\lambda(x\widetilde{H})=\lambda(y\widetilde{H})$. The map $\lambda$ is injective, therefore $x\widetilde{H}=y\widetilde{H}$.
\end{proof}
 By running standard algorithm for the Abelian HSP in $B$, we can recover the generators of $\widetilde{H}$. The algorithm returns the generators $k_1,\cdots,k_{\ell}$ for $\widetilde{H}$. For each generator compute $h_i=\lambda(k_i)$. The $\{h_i\}$ generate $H\leq P$. The reason is that $\sigma$, and therefore $\lambda$, induces a projectivity on the subgroup lattice by \cref{thm:schmidt:baers-258}. Thus, 
 \begin{align}
  H=\lambda(\widetilde{H})=\lambda\left( \bigvee_i \langle k_i \rangle \right) = \bigvee_i \lambda(\langle k_i \rangle ).
 \end{align}
Moreover, $\lambda(\langle k_i \rangle)=\langle \lambda(k_i) \rangle$. Since $\lambda(\langle k_i \rangle)$ is a subgroup containing $\lambda(k_i)$, it follows that $\langle \lambda(k_i) \rangle \leq \lambda(\langle k_i \rangle)$. On the other hand, $\lambda^{-1}(\langle \lambda(k_i) \rangle)$ contains $k_i$. Therefore,
$$\langle k_i \rangle \leq \lambda^{-1}(\langle \lambda(k_i) \rangle).$$
Applying $\lambda$ on both sides yields $\lambda(\langle k_i \rangle)\leq \langle \lambda(k_i) \rangle$ and thus $H=\langle \lambda(k_1), \cdots , \lambda(k_{\ell})\rangle$.
\section*{Acknowledgements}
We thank G\'abor Ivanyos for his insightful comments on an early version of this manuscript. We also thank François Le Gall for pointing out relevant papers to this work. The author acknowledges support from the U.S. Department of Defense through a QuICS Hartree Fellowship.
\printbibliography
\newpage
\appendix
\section{Comparison with previous results}\label[appendix]{sec:comparison}

Before comparing the different results, it is useful to distinguish
two possible meanings of one result extending another.  First, one
may compare the classes of groups for which the algorithms
are applicable. Second, one may compare the
input models: an algorithm that works for an arbitrary black-box
presentation is stronger in this respect than one which assumes that a
particular decomposition of the group is supplied.

\subsection{Comparison with earlier semidirect-product algorithms}
\label{app:comparison-semidirect}

\paragraph{The result of Inui and Le Gall.}
Inui and Le Gall~\cite{Inui2007semimdirect} have given  groups a polynomial-time algorithm for the groups of the form
$$
    P_{p,r}
    :=
    \left\langle x,y
    \,\middle|\,
    x^{p^r}=y^p=1,\quad
    yx=x^{1+p^{r-1}}y
    \right\rangle .
$$
with $p$ prime, $r\geq 2$ and when $p=2$ then $r\neq 2$.
This group corresponds to $\mathbb{Z}_{p^r}\rtimes\mathbb{Z}_{p}$ such that $yxy^{-1}=x^{1+p^{r-1}}$.

Note that this family of groups is in fact a special case of case (2) in \cref{prop:p-group-modular}. Therefore, our result applies to a broader family of groups. On the other hand, the method by Inui and Le Gall does not require a unique encoding. In case (2) of \cref{prop:p-group-modular}, we need a unique encoding of $A$ to use the algorithm of \cite{cheung2001decomposing} and decompose $A$.

Inui and Le Gall also consider $\left(\mathbb Z_{p^r}\right)^m\rtimes \mathbb Z_p$,
where the generator of \(\mathbb Z_p\) acts on every coordinate by
multiplication by \(1+p^{r-1}\).  In this case they construct a
bijection
\[
    \pi:
    \left(\mathbb Z_{p^r}\right)^m\rtimes\mathbb Z_p
    \longrightarrow
    \left(\mathbb Z_{p^r}\right)^m\times\mathbb Z_p
\]
which is not a group isomorphism, but nevertheless maps subgroups to
subgroups and cosets to cosets.  This reduces the problem to an Abelian
HSP. The  result assumes unique encoding and that
generators for the normal factor and the cyclic complement are supplied
separately.

These groups are special cases of the quasi-Hamiltonian groups
considered in \cref{thm:main-quasihamiltonian}.  Indeed, let
$A=\langle x\rangle\cong\mathbb Z_{p^r}$
and put $b=y^{-1}$.  Then
$$
    b^{-1}xb=yxy^{-1}=x^{1+p^{r-1}}.
$$
Thus $P_{p,r}=A\langle b\rangle$, the quotient $P_{p,r}/A$
is cyclic, and $b$ acts on $A$ by the power automorphism $a\longmapsto a^{1+p^s}$ with $s=r-1$.
This is falls under case (2) of \cref{prop:p-group-modular}.
The same observation applies to
\((\mathbb Z_{p^r})^m\rtimes\mathbb Z_p\), because the same power
automorphism acts on every coordinate.

Thus, the groups
treated by Inui and Le Gall form a proper subclass of the finite
quasi-Hamiltonian groups.  The containment is strict.  For example,
\[
    \mathbb Z_{81}\rtimes_{4}\mathbb Z_{27},
\]
where the generator of \(\mathbb Z_{27}\) acts by multiplication by
\(4=1+3\) modulo \(81\), is quasi-Hamiltonian, but it is not in the class considered by Inui and Le Gall.

\subsection{Comparison with poly-near-Hamiltonian groups}
\label{app:comparison-near-hamiltonian}

The terminology ``poly-near-Hamiltonian'' used by
Gavinsky~\cite{Gavinsky2004nearHam} refers to a different group-theoretic
property from the property of being quasi-Hamiltonian.  For a subgroup \(K\leq G\), let
\[
    N_G(K)
    :=
    \{g\in G:gKg^{-1}=K\}
\]
be its normalizer, and define
\[
    M_G
    :=
    \bigcap_{K\leq G}N_G(K).
\]
Thus \(M_G\) consists of the elements which normalize every subgroup
of \(G\).  If \(G\) is Dedekind, then every subgroup is normal and
hence \(M_G=G\).

Writing $n:=\log |G|$, Gavinsky calls a family poly-near-Hamiltonian when
$$
    [G:M_G]=\operatorname{poly}(n).
$$
This extends the earlier result by
Grigni, Schulman, Vazirani and Vazirani~\cite{Grigni2001HSP}, whose
corresponding condition was
\[
    [G:M_G]
    \in
    \exp\!\left(O\!\left(\sqrt{\log n}\right)\right).
\]




\paragraph{Incomparability with quasi-Hamiltonian groups.}
The poly-near-Hamiltonian and quasi-Hamiltonian conditions are
incomparable.
Since poly-near-Hamiltonianity is an asymptotic
condition, the comparison below is stated for growing families of
groups.

First, fix an odd prime \(p\), and for \(k\geq 2\) define
\[
    G_k
    :=
    \left\langle a,b
    \,\middle|\,
    a^{p^k}=1,\quad
    b^{p^{k-1}}=1,\quad
    b^{-1}ab=a^{1+p}
    \right\rangle .
\]
Equivalently,
\[
    G_k\cong
    \mathbb Z_{p^k}
       \rtimes_{1+p}
    \mathbb Z_{p^{k-1}}.
\]
The action is a power automorphism of the form appearing in \cref{prop:p-group-modular}.  Hence \(G_k\) is a modular
\(p\)-group and is therefore quasi-Hamiltonian.

Let $C:=\langle b\rangle$. We will show that $M_G$ has exponentially large index by proving that $C$ has a small normalizer.  Since powers of $b$ normalize
$C_k$, it is enough to determine which powers of $a$ normalize
$C_k$.  Using the defining relation of $G_k$, we have that
\[
    a^{-u}ba^u
    =
    a^{u((1+p)^{-1}-1)}b,
\]
where the inverse is taken modulo \(p^k\).  Note that the integer
\((1+p)^{-1}-1\) is divisible by $p$, but not by \(p^2\).
It is divisible by $p$ because $(1+p)(1+p)^{-1}=1 \pmod{p^k}$ and therefore
\begin{align}\label{eq:modpk}
    (1+p)(1+p)^{-1}=1+m p^k.
\end{align}
From this equality we get $(1+p)^{-1}-1=mp^k -p(1+p)^{-1}$, which implies that $(1+p)^{-1}-1$ is divisible by $p$. To see that it is not divisible by $p^2$, assume that $(1+p)^{-1}=1\pmod{p^2}$. Then $(1+p)(1+p)^{-1}=1+p \pmod{p^2}$. But by the inverse property $(1+p)(1+p)^{-1}=1\pmod{p^2}$. This implies $(1+p)=1\pmod{p^2}$ and therefore $p=0 \pmod{p^2}$ which is impossible.

Consequently,
\[
    a^u\in N_{G_k}(C_k)\quad\Longleftrightarrow\quad p^k\mid u((1+p)^{-1}-1)
    \quad\Longleftrightarrow\quad
    p^{k-1}\mid u.
\]
It follows that $N_{G_k}(C_k)=\left\langle a^{p^{k-1}},b\right\rangle$ and therefore
\[
    [G_k:N_{G_k}(C_k)]
    =
    p^{k-1}.
\]
Since
\[
    M_{G_k}\leq N_{G_k}(C_k),
\]
we obtain
\[
    [G_k:M_{G_k}]
    \geq p^{k-1}.
\]
For fixed \(p\),
\[
    \log |G_k|
    =
    (2k-1)\log p
    =
    \Theta(k),
\]
whereas \(p^{k-1}\) is exponential in \(k\).  Thus
\(\{G_k\}\) is a quasi-Hamiltonian family which is not
poly-near-Hamiltonian.

For the converse direction, let
\[
    D_4
    =
    \left\langle r,s
    \,\middle|\,
    r^4=s^2=1,\quad srs=r^{-1}
    \right\rangle
\]
be the dihedral group of order \(8\), according to our convention,
and define
\[
    F_m:=D_4\times\mathbb Z_2^m.
\]
The center of \(F_m\) has index \(4\):
\[
    Z(F_m)
    =
    \langle r^2\rangle\times\mathbb Z_2^m,
    \qquad
    [F_m:Z(F_m)]=4.
\]
Since every central element normalizes every subgroup,
\[
    Z(F_m)\leq M_{F_m},
\]
and hence
\[
    [F_m:M_{F_m}]\leq 4.
\]
Thus \(\{F_m\}\) is a poly-near-Hamiltonian family.

It is not quasi-Hamiltonian.  Inside the \(D_4\) factor, consider
\[
    H:=\langle s\rangle,
    \qquad
    K:=\langle rs\rangle.
\]
Then
\[
    HK=\{1,s,rs,r^3\},
    \qquad
    KH=\{1,s,rs,r\},
\]
so \(HK\neq KH\).  These subgroups remain nonpermutable after being
embedded in \(F_m\).

\subsection{Comparison with bounded nilpotency class}\label{app:nilpotent-comparison}
In \cite{Ivanyos2012Nil2} it is shown that there is an efficient quantum algorithm for groups of nilpotency class 2. These groups have derived subgroup contained in the center, i.e., if $G$ is a group of nilpotency class 2 then $[G,G]\leq Z(G)$. In a later work \cite{Imran2024}, it has been shown that the HSP can be solved efficiently for groups with nilpotency class and prime factors of the group order bounded by constants. We show below that the groups studied in this paper have unbounded nilpotency class.

More generally, the nilpotency class is defined by the lower central series. Let $\gamma_i(G)$ be defined recursively by $\gamma_1(G)=G$ and $\gamma_{i+1}(G)=[\gamma_i(G),G]$. This gives a descending chain of normal subgroups,
\begin{align}
    G=\gamma_1(G)\unrhd \gamma_2(G)\unrhd \cdots
\end{align}
A group $G$ is nilpotent if its lower central series eventually becomes trivial, i.e., there is a $c\geq 0$ such that $\gamma_{c+1}(G)=\{e\}$. In this case, we say that the group has class $c$. 

We first compare the quasi-Hamiltonian groups with the groups of
bounded nilpotency class considered in previous work.  Fix a prime
\(p\) and integers
\[
1\leq s<k,
\]
where \(s\geq2\) when \(p=2\), and define
\[
P_{k,s}^{(p)}
:=
\left\langle
a,b
\;\middle|\;
a^{p^k}=1,\quad
b^{p^{k-s}}=1,\quad
b^{-1}ab=a^{1+p^s}
\right\rangle.
\]
Equivalently,
\[
P_{k,s}^{(p)}
\cong
\mathbb Z_{p^k}
\rtimes_{\,1+p^s}
\mathbb Z_{p^{k-s}}.
\]
The automorphism
\[
a\longmapsto a^{1+p^s}
\]
has order \(p^{k-s}\) on \(\langle a\rangle\), so the presentation
above defines the indicated semidirect product.  Moreover, the action
is a power automorphism of the form appearing in
\cref{prop:p-group-modular}.  Hence \(P_{k,s}^{(p)}\) is a modular
\(p\)-group and therefore is quasi-Hamiltonian.

The lower central series of \(P_{k,s}^{(p)}\) can be computed
directly.  Since
\[
[a^{p^j},b]
=
a^{-p^j}b^{-1}a^{p^j}b
=
a^{p^{j+s}},
\]
we obtain
\[
\gamma_2\bigl(P_{k,s}^{(p)}\bigr)
=
\langle a^{p^s}\rangle
\]
and, inductively,
\[
\gamma_{r+1}\bigl(P_{k,s}^{(p)}\bigr)
=
\langle a^{p^{rs}}\rangle.
\]
It follows that the group has nilpotency class $\left\lceil\frac{k}{s}\right\rceil$.
In particular, fixing an odd prime \(p\) and taking \(s=1\), the
nilpotency class becomes arbitrarily large as \(k\) increases.  Thus,
finite quasi-Hamiltonian groups do not have bounded nilpotency class.
For example,
\[
\mathbb Z_{81}\rtimes_{4}\mathbb Z_{27}
\]
is the case \(p=3\), \(k=4\), and \(s=1\), and has nilpotency class
\(4\).
\paragraph{Incomparability with groups of class at most two.}
The preceding family shows that quasi-Hamiltonian groups are not
contained in the class of groups of nilpotency class at most \(2\).
The converse containment also fails.  With our convention that
\(D_N\) has order \(2N\), the group
\[
D_4
=
\left\langle
r,s
\;\middle|\;
r^4=s^2=1,\quad srs=r^{-1}
\right\rangle
\]
has nilpotency class \(2\), but is not quasi-Hamiltonian.  Indeed, let
\[
H:=\langle s\rangle,
\qquad
K:=\langle rs\rangle.
\]
Then
\[
HK=\{1,s,rs,r^3\},
\qquad
KH=\{1,s,rs,r\},
\]
and therefore \(HK\neq KH\).  Hence \(H\) and \(K\) do not permute.
This proves that the classes of quasi-Hamiltonian groups and groups of
nilpotency class at most \(2\) are incomparable.

\section{The one-copy PGM and an efficient fixed-$t$ measurement}\label[appendix]{sec:app-PGM}
This appendix has two purposes.  We first analyze the exact one-copy
PGM for the ensemble
$$
\{\sigma_{d,t}:d\in\mathbb Z_N\}
$$
and derive a lower bound on its success probability.  We then give a
simpler block measurement that can be implemented efficiently and
achieves the same lower bound. 

Fix $t\in\{0,\ldots,k-1\}$ and write
\[
    m_t:=p^{k-t},\qquad L_t:=\frac{m_t}{p}=p^{k-t-1},\qquad
    \nu_t:=\mu^{p^t},
\]
and
\[
    M_t^{(b)}:=\sum_{i=0}^{b-1}\nu_t^i\pmod N,
    \qquad 0\leq b<m_t.
\]
Since $\mu^{p^k}=1$, we have $\nu_t^{m_t}=1$. Thus the order of
$\nu_t$ is a power of $p$ dividing $m_t$.

If $\nu_t=1$, then the subgroup
\[
    G_t:=\mathbb Z_N\rtimes \langle p^t\rangle
\]
is the Abelian group $\mathbb Z_N\times\mathbb Z_{m_t}$. Since
$H_{d,t}\leq G_t$, the fixed-$t$ instance can then be solved directly by
the Abelian HSP algorithm. Hence, throughout the remainder of this
appendix, assume that $\nu_t\neq 1$.

For fixed $t$, recall from Eq.~(22) that the relevant state is
\[
    \sigma_{d,t}
    =\frac1N\sum_{x\in\mathbb Z_N}|x\rangle\langle x|
      \otimes |\varphi_{x,d,t}\rangle\langle\varphi_{x,d,t}|,
\]
where
\[
    |\varphi_{x,d,t}\rangle
    =\frac1{\sqrt{m_t}}
      \sum_{b=0}^{m_t-1}\omega_N^{xM_t^{(b)}d}|b\rangle,
    \qquad \omega_N=e^{2\pi i/N}.
\]
As in \cite{Bacon2005optimalmeasurement}, it is convenient to define the PGM
for the ensemble $\{\sigma_{d,t}:d\in\mathbb Z_N\}$. The actual value of $d$ is one for which
$H_{d,t}$ is a subgroup of order $m_t$.

For $x,w\in\mathbb Z_N$, define
\[
    S_{w}^{x,t}:=
      \bigl\{b\in\{0,\ldots,m_t-1\}:
      xM_t^{(b)}=w\pmod N\bigr\},
    \qquad
    \eta_{w}^{x,t}:=|S_{w}^{x,t}|.
\]
Whenever $\eta_{w}^{x,t}>0$, let
\[
    |S_{w}^{x,t}\rangle
    :=\frac1{\sqrt{\eta_{w}^{x,t}}}
      \sum_{b\in S_{w}^{x,t}}|b\rangle.
\]
Then
\[
\sigma_{d,t}
=\frac1{Nm_t}\sum_{x\in\mathbb Z_N}
  \sum_{w,v\in\mathbb Z_N}
  \omega_N^{d(w-v)}
  \sqrt{\eta_{w}^{x,t}\eta_{v}^{x,t}}
  |x,S_{w}^{x,t}\rangle\langle x,S_{v}^{x,t}|.
\]

\begin{theorem}\label{thm:pgm-success-fixed-t}
For fixed $t$, the one-copy PGM for the ensemble
$\{\sigma_{d,t}:d\in\mathbb Z_N\}$ identifies $d$ with probability
\[
    P_{\mathrm{succ}}^{(t,1)}
    =\frac1{N^2m_t}
      \sum_{x\in\mathbb Z_N}
      \left(\sum_{w\in\mathbb Z_N}
      \sqrt{\eta_{w}^{x,t}}\right)^2.
\]
The probability is independent of $d$.
\end{theorem}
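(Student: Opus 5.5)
The plan is to compute the PGM explicitly by exploiting that the ensemble is covariant under the abelian group $\mathbb Z_N$ acting by phases. Define the unitary
\[
U_d:=\sum_{x}\sum_{w}\omega_N^{dw}\,|x\rangle\langle x|\otimes\Pi_{w}^{x,t},
\]
where $\Pi_w^{x,t}$ projects onto $\mathrm{span}\{|b\rangle:b\in S_w^{x,t}\}$. Then $\sigma_{d,t}=U_d\sigma_{0,t}U_d^\dagger$, and $U_dU_{d'}=U_{d+d'}$.

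\textbf{Step 1: reduce to a one-dimensional problem.} Everything is block diagonal in $x$, and within each block only the span of the vectors $|x,S_w^{x,t}\rangle$ (over $w$ with $\eta_w^{x,t}>0$) is relevant. Writing $|\psi_{d}^{x}\rangle:=\sum_w \omega_N^{dw}\sqrt{\eta_w^{x,t}}\,|x,S_w^{x,t}\rangle$, the displayed expansion reads
\[
\sigma_{d,t}=\frac{1}{Nm_t}\sum_x|\psi_d^x\rangle\langle\psi_d^x|.
\]
Thus each $\sigma_{d,t}$ is a mixture over $x$ of pure states, where the $x$ label is orthogonal across blocks.

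\textbf{Step 2: compute $\rho:=\sum_d\frac1N\sigma_{d,t}$.} Averaging over $d$ kills the off-diagonal terms $w\neq v$, because $\sum_d\omega_N^{d(w-v)}=N\delta_{wv}$. Hence
\[
\rho=\frac1{Nm_t}\sum_{x,w}\eta_w^{x,t}\,|x,S_w^{x,t}\rangle\langle x,S_w^{x,t}|,
\]
which is diagonal in the orthonormal basis $\{|x,S_w^{x,t}\rangle\}$. Its inverse square root on the support is therefore immediate.

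\textbf{Step 3: form the PGM elements and compute the success probability.} The PGM elements are $E_d=\frac1N\rho^{-1/2}\sigma_{d,t}\rho^{-1/2}$. These give
\[
E_d=\frac1N\sum_x|e_d^x\rangle\langle e_d^x|,\qquad |e_d^x\rangle=\sum_w\omega_N^{dw}|x,S_w^{x,t}\rangle .
\]
The success probability is $\frac1N\sum_d\mathrm{tr}(E_d\sigma_{d,t})$. By covariance, $\mathrm{tr}(E_d\sigma_{d,t})$ is the same for every $d$: both operators are conjugated by the same $U_d$. This gives the claimed independence of $d$. Computing the trace, the overlap is
\[
\langle e_d^x|\psi_d^x\rangle=\sum_w\sqrt{\eta_w^{x,t}},
\]
since the phases cancel. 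Only the diagonal block in $x$ contributes. This yields
\[
\mathrm{tr}(E_d\sigma_{d,t})=\frac{1}{N}\cdot\frac{1}{Nm_t}\sum_x\Bigl(\sum_w\sqrt{\eta_w^{x,t}}\Bigr)^2 .
\]
Averaging over $d$ with the uniform prior leaves this unchanged, which is the formula in the statement.

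\textbf{Main obstacle.} The delicate point is normalization, and what ``identifies $d$'' means when distinct $d$ give the same state. This happens when $x M_t^{(b)}$ does not separate $d$, i.e.\ when not all $w$ occur. In that case the $E_d$ still sum to the projector onto the support of $\rho$, and the complement of that support is assigned arbitrarily. I will verify that $\sum_d E_d=\Pi_{\mathrm{supp}\rho}$ using the same character orthogonality. I will also check that the prior is uniform over all $d\in\mathbb Z_N$, following \cite{Bacon2005optimalmeasurement}, so that the $1/N^2$ factor is correct.
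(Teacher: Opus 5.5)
Your proposal is correct and follows the paper's proof essentially step for step. Summing over $d$ kills the $w\neq v$ terms, so $\Sigma_t$ (your $N\rho$) is diagonal in the orthonormal vectors $|x,S_w^{x,t}\rangle$; the PGM elements come out as $\frac1N\sum_x|e_c^x\rangle\langle e_c^x|$; and the phases cancel in each block overlap, giving $\frac{1}{N^2m_t}\sum_x\bigl(\sum_w\sqrt{\eta_w^{x,t}}\bigr)^2$. Your covariance unitary $U_d$ is an extra justification of $d$-independence, which the paper reads off directly from that final expression, and normalizing with the uniform $1/N$ prior produces the same measurement operators as the paper's unnormalized $\Sigma_t$.
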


\begin{proof}
Let $\Sigma_t:=\sum_{d\in\mathbb Z_N}\sigma_{d,t}$. Summing over $d$
eliminates all terms with $w\neq v$, and therefore
\[
    \Sigma_t
    =\frac1{m_t}\sum_{x\in\mathbb Z_N}
      \sum_{w:\,\eta_{w}^{x,t}>0}
      \eta_{w}^{x,t}
      |x,S_{w}^{x,t}\rangle\langle x,S_{w}^{x,t}|.
\]
Taking the inverse square root on the support of $\Sigma_t$, the PGM
operator labelled by $c\in\mathbb Z_N$ is
\[
    E_c^{(t)}
    =\frac1N\sum_{x\in\mathbb Z_N}
      \sum_{w,v:\,\eta_{w}^{x,t}\eta_{v}^{x,t}>0}
      \omega_N^{c(w-v)}
      |x,S_{w}^{x,t}\rangle\langle x,S_{v}^{x,t}|.
\]
A direct block-by-block calculation gives
\[
    \operatorname{tr}\!\left(E_d^{(t)}\sigma_{d,t}\right)
    =\frac1{N^2m_t}
      \sum_{x\in\mathbb Z_N}
      \left(\sum_{w\in\mathbb Z_N}
      \sqrt{\eta_{w}^{x,t}}\right)^2.
\]
The expression does not depend on $d$.
\end{proof}

\begin{proposition}\label{prop:pgm-prime-block-bound}
Assume $\nu_t\neq 1$. Then the one-copy PGM satisfies
\[
    P_{\mathrm{succ}}^{(t,1)}
    \geq \frac{\varphi(N)p}{N^2}.
\]
No assumption that $\nu_t$ has order exactly $m_t$ is required.
\end{proposition}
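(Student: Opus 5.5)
Starting from \cref{thm:pgm-success-fixed-t}, we need to show
\[
\sum_{x\in\mathbb Z_N}\Bigl(\sum_{w}\sqrt{\eta_w^{x,t}}\Bigr)^2\;\geq\;\varphi(N)\,p\,m_t .
\]
Every term in the sum over $x$ is nonnegative. So it suffices to restrict to the units $x\in\mathbb Z_N^\times$ and prove that, for each such $x$,
\[
\Bigl(\sum_w\sqrt{\eta_w^{x,t}}\Bigr)^2\geq p\,m_t .
\]
For a unit $x$, the map $b\mapsto xM_t^{(b)}$ has the same fibre sizes as $b\mapsto M_t^{(b)}$, so $\eta_w^{x,t}=\eta_{x^{-1}w}^{1,t}$. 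The inner sum is therefore independent of $x$, and we may take $x=1$. It remains to understand the fibres of $b\mapsto M_t^{(b)}$ on $\{0,\ldots,m_t-1\}$.

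\textbf{Step 1: periodicity.} Let $p^e$ (with $e\geq1$) be the order of $\nu_t$; this is a nontrivial power of $p$ dividing $m_t$. Using $M_t^{(a+b)}=M_t^{(a)}+\nu_t^aM_t^{(b)}$ we get
\[
M_t^{(b+p^e)}=M_t^{(b)}+\nu_t^{b}M_t^{(p^e)} .
\]
So $M_t^{(\cdot)}$ is periodic with period $p^e$ up to the additive constant $M_t^{(p^e)}$. I expect that this constant need not vanish modulo $N$, and that is the main obstacle. Because of it, I will not try to compute the fibres exactly. Instead I will only bound how large each fibre can be.

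\textbf{Step 2: fibres contain no two indices congruent mod $p$.} The key claim is that if $M_t^{(a)}=M_t^{(b)}$ with $a\neq b$, then $a\not\equiv b\pmod p$ cannot fail in a way that overloads a fibre; concretely, among the $p$ indices $b\equiv b_0\pmod p$ we want distinct values. The argument of \cref{lem:Rt-cardinality} gives the following: $M_t^{(a)}=M_t^{(b)}$ with $b<a$ forces $M_t^{(a-b)}=0$, and then $\nu_t^{a-b}=1$, so $p^e\mid a-b$. Hence each fibre $S_w^{1,t}$ has all its elements in a single residue class modulo $p^e$, and in particular modulo $p$. This gives $\eta_w\leq m_t/p^e\leq m_t/p=L_t$ for all $w$.

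\textbf{Step 3: concavity.} Since $\sum_w\eta_w=m_t$ and $0\leq\eta_w\leq L_t$, we have $\sqrt{\eta_w}\geq\eta_w/\sqrt{L_t}$. Therefore
\[
\sum_w\sqrt{\eta_w}\geq \frac{m_t}{\sqrt{L_t}},\qquad\text{so}\qquad \Bigl(\sum_w\sqrt{\eta_w}\Bigr)^2\geq\frac{m_t^2}{L_t}=p\,m_t .
\]
Multiplying by the $\varphi(N)$ unit values of $x$ and dividing by $N^2m_t$ yields
\[
P_{\mathrm{succ}}^{(t,1)}\geq\frac{\varphi(N)p}{N^2}.
\]
This argument never uses that the order of $\nu_t$ equals $m_t$; it only uses $\nu_t\neq1$.

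The delicate point is Step 2. There I must check carefully that the deduction $M_t^{(j)}=0\Rightarrow\nu_t^{j}=1$, which relies on the identity $(\nu_t-1)M_t^{(j)}=\nu_t^j-1$, goes in the direction needed. This is straightforward because $\nu_t$ is a unit, so multiplying by $\nu_t^b$ and then by $(\nu_t-1)$ is legitimate. I also need the reduction $x=1$ for units, which relies on $\eta$ counting solutions of a linear condition in $w$.
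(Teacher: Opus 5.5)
Your argument is correct and is essentially the paper's. The same identity $M_t^{(a)}-M_t^{(b)}=\nu_t^{b}M_t^{(a-b)}$ forces $\nu_t^{a-b}=1$, which bounds every fibre by $L_t$. The bound $\sqrt{\eta}\geq\eta/\sqrt{L_t}$, applied over the $\varphi(N)$ unit values of $x$, then finishes the proof. The paper counts at most one preimage per block of $p$ consecutive exponents, while you count one residue class modulo $p^e$; this is equivalent and even gives the sharper bound $m_t/p^e$.

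Fix the wording of Step 2. Its heading and the ``concretely'' sentence assert the opposite of what you actually prove: a fibre lies inside a single residue class, so it may contain many indices congruent mod $p$. You can also drop Step 1, which is never used.
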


\begin{proof}
Partition $\{0,\ldots,m_t-1\}$ into the $L_t=m_t/p$ consecutive
blocks
\[
    B_j:=\{pj+r:0\leq r<p\},
    \qquad 0\leq j<L_t.
\]
We first show that $b\mapsto M_t^{(b)}$ is injective on every block
$B_j$. Suppose that $0\leq r'<r<p$ and
\[
    M_t^{(pj+r)}=M_t^{(pj+r')}.
\]
Using
\[
    M_t^{(a+b)}=M_t^{(a)}+\nu_t^aM_t^{(b)},
\]
we obtain
\[
    0=M_t^{(pj+r)}-M_t^{(pj+r')}
      =\nu_t^{pj+r'}M_t^{(r-r')}.
\]
Because $\nu_t$ is a unit modulo $N$, this implies
$M_t^{(r-r')}=0$. Multiplying by $\nu_t-1$ and using
\[
    (\nu_t-1)M_t^{(s)}=\nu_t^s-1
\]
gives $\nu_t^{r-r'}=1$. The order of $\nu_t$ is a nontrivial power of
$p$, so $p$ divides $r-r'$. This is impossible because
$0<r-r'<p$. Hence the restriction of $b\mapsto M_t^{(b)}$ to each
$B_j$ is injective.

Now fix $x\in\mathbb Z_N^\times$. Multiplication by $x$ is a
permutation of $\mathbb Z_N$, so $b\mapsto xM_t^{(b)}$ is also
injective on every block. Consequently, a fixed value $w$ can have at
most one preimage in each of the $L_t$ blocks, and therefore
\[
    \eta_{w}^{x,t}\leq L_t
    \qquad\text{for all }w\in\mathbb Z_N.
\]
Also,
\[
    \sum_{w\in\mathbb Z_N}\eta_{w}^{x,t}=m_t.
\]
Since $0\leq\eta_{w}^{x,t}\leq L_t$, we have
\[
    \sqrt{\eta_{w}^{x,t}}
    \geq \frac{\eta_{w}^{x,t}}{\sqrt{L_t}}.
\]
It follows that
\[
    \left(\sum_w\sqrt{\eta_{w}^{x,t}}\right)^2
    \geq
    \left(\frac1{\sqrt{L_t}}\sum_w\eta_{w}^{x,t}\right)^2
    =\frac{m_t^2}{L_t}
    =m_tp.
\]
Keeping only the $\varphi(N)$ invertible values of $x$ in
Theorem~\ref{thm:pgm-success-fixed-t} gives
\[
    P_{\mathrm{succ}}^{(t,1)}
    \geq \frac1{N^2m_t}\,\varphi(N)m_tp
    =\frac{\varphi(N)p}{N^2}.
\]
\end{proof}
Since
$$
P_{\mathrm{succ}}^{(t,1)}
\geq
\frac{\phi(N)p}{N^2},
$$
the inverse success probability satisfies
$$
\frac{1}{P_{\mathrm{succ}}^{(t,1)}}
\leq
\frac{N}{p}\frac{N}{\phi(N)}.
$$
Using
$$
\frac{N}{\phi(N)}=O(\log\log N),
$$
the success probability is inverse-polynomially bounded whenever
$$
\frac{N}{p}\leq\operatorname{poly}(\log N).
$$

\subsection{An efficient block measurement}
 The PGM analysis above motivates a simpler measurement that can be
implemented efficiently.  Following the construction of
\cite[Section~5.2]{Bacon2005optimalmeasurement}, we now describe this
block measurement.  It is not necessary to implement the full
one-copy PGM: the block measurement below achieves the same lower
bound
\[
\frac{\phi(N)p}{N^2}
\]
on the success probability.  
  One first reduces the state
to a block containing only $p$ consecutive exponents.

Write $b=pj+r$, where $0\leq j<L_t$ and $0\leq r<p$, and apply the
reversible division map
\[
    |b\rangle\longmapsto |j,r\rangle.
\]
Measure $j$. Conditional on the outcome $j$, and ignoring a global
phase, the state $|\varphi_{x,d,t}\rangle$ becomes
\[
    |\chi_{x,j,d,t}\rangle
    =\frac1{\sqrt p}\sum_{r=0}^{p-1}
      \omega_N^{xM_t^{(r)}d_j}|r\rangle,
    \qquad
    d_j:=\nu_t^{pj}d.
\]
Indeed,
\[
    M_t^{(pj+r)}
    =M_t^{(pj)}+\nu_t^{pj}M_t^{(r)}.
\]
Since $\nu_t$ is a unit, recovering $d_j$ is equivalent to recovering
$d$, because
\[
    d=\nu_t^{-pj}d_j\pmod N.
\]

Measure the Fourier label $x$ and restart unless
$x\in\mathbb Z_N^\times$. This event has probability
$\varphi(N)/N$. Append an ancilla and compute
\[
    |r,0\rangle\longmapsto
    |r,z_r\rangle,
    \qquad z_r:=xM_t^{(r)}\pmod N.
\]
The values $z_r$, $0\leq r<p$, are distinct by the block-injectivity
argument above. The value of $r$ can be coherently recovered from
$z_r$: first compute
\[
    u_r:=1+(\nu_t-1)x^{-1}z_r=\nu_t^r\pmod N,
\]
and then compute the discrete logarithm of $u_r$ to the base $\nu_t$.
Because $0\leq r<p$ and the order of $\nu_t$ is a nontrivial power of
$p$, this logarithm determines $r$ uniquely. Using Shor's discrete
logarithm algorithm coherently, uncompute the $r$ register and obtain
\[
    \frac1{\sqrt p}\sum_{r=0}^{p-1}
      \omega_N^{z_rd_j}|z_r\rangle.
\]
The quantities $M_t^{(r)}$ can be computed in time polynomial in
$\log N+\log p$ by repeated squaring, exactly as in the
calculation of \cite{Bacon2005optimalmeasurement}.

Finally, apply the inverse Fourier transform over $\mathbb Z_N$. The
amplitude of the outcome $d_j$ is
\[
    \frac1{\sqrt{Np}}\sum_{r=0}^{p-1}1
    =\sqrt{\frac pN},
\]
so $d_j$ is observed with probability $p/N$. Taking account of the
probability $\varphi(N)/N$ that $x$ is invertible, one run succeeds
with probability at least
\[
    \frac{\varphi(N)p}{N^2}.
\]
After obtaining $d_j$, compute $d=\nu_t^{-pj}d_j$ and verify the
candidate by checking
\[
    f(0,0)=f(d,p^t).
\]

Therefore, if $N/p=\operatorname{poly}(\log N)$, repetition
\[
    O\!\left(\frac{N^2}{\varphi(N)p}
       \log\frac1\epsilon\right)
\]
times gives error at most $\epsilon$. Since
$N/\varphi(N)=O(\log\log N)$, the total running time is polynomial in
$\log N+k\log p$ and $\log(1/\epsilon)$.

\section{Reducing HSP from $G$ to HSP on Sylow factors}\label[appendix]{sec:Sylow-HSP}
The given Sylow decomposition allows us to reduce the HSP in $G$ to separate HSP instances in its Sylow factors. Let
\[
G=\prod_{p\mid |G|}G_p
\]
and let \(F:G\to S\) hide a subgroup \(H\leq G\). For each prime
$p\mid |G|$, define $H_p:=H\cap G_p$.
First, the restriction of \(F\) to \(G_p\) hides \(H_p\). Indeed,
for \(x,y\in G_p\),
\[
F(x)=F(y)
\quad\Longleftrightarrow\quad
x^{-1}y\in H.
\]
Since \(x^{-1}y\in G_p\), this is equivalent to $x^{-1}y\in H\cap G_p=H_p$.

We also have
\[
H=\prod_{p\mid |G|}H_p.
\]
To see this, let \(h\in H\), and write its unique Sylow
decomposition as
\[
h=\prod_{p\mid |G|}h_p,
\qquad h_p\in G_p.
\]
We claim that every component \(h_p\) belongs to \(H\).

Fix \(p\), and put
\[
M_p:=\frac{|G|}{|G_p|}.
\]
Since \(\gcd(M_p,|G_p|)=1\), there exists an integer \(a_p\) such
that
\[
a_pM_p\equiv1\pmod{|G_p|}.
\]
Set \(e_p=a_pM_p\). For every prime \(q\neq p\), the integer
\(|G_q|\) divides \(M_p\), and hence
\[
h_q^{e_p}=1.
\]
On the other hand,
\[
e_p\equiv1\pmod{|G_p|},
\]
so $h_p^{e_p}=h_p$ and it follows that $h^{e_p}=h_p$.
Since \(h\in H\) and \(H\) is closed under taking powers, we have
\(h_p\in H\). Therefore
\[
h_p\in H\cap G_p=H_p.
\]

Thus every \(h\in H\) belongs to \(\prod_pH_p\), proving
\[
H\subseteq\prod_pH_p.
\]
The reverse inclusion follows because every \(H_p\) is contained
in \(H\). Hence
\[
H=\prod_{p\mid |G|}H_p.
\]

Consequently, we solve the HSP separately in each Sylow factor
\(G_p\). If \(X_p\) is a generating set for \(H_p\), then
\[
\bigcup_{p\mid |G|}X_p
\]
is a generating set for \(H\).
\end{document}